\newcommand{\bra}[1]{{\langle{#1}\vert}}
\newcommand{\ket}[1]{{\vert{#1}\rangle}}
\newcommand{\qw}[1][-1]{\ar @{-} [0,#1]}
\newcommand{\qwx}[1][-1]{\ar @{-} [#1,0]}
\newcommand{\gate}[1]{*+<.6em>{#1} \POS ="i","i"+UR;"i"+UL **\dir{-};"i"+DL **\dir{-};"i"+DR **\dir{-};"i"+UR **\dir{-},"i" \qw}
\newcommand{\meter}{*=<1.8em,1.4em>{\xy ="j","j"-<.778em,.322em>;{"j"+<.778em,-.322em> \ellipse ur,_{}},"j"-<0em,.4em>;p+<.5em,.9em> **\dir{-},"j"+<2.2em,2.2em>*{},"j"-<2.2em,2.2em>*{} \endxy} \POS ="i","i"+UR;"i"+UL **\dir{-};"i"+DL **\dir{-};"i"+DR **\dir{-};"i"+UR **\dir{-},"i" \qw}
\newcommand{\control}{*!<0em,.025em>-=-<.2em>{\bullet}}
\newcommand{\ctrl}[1]{\control \qwx[#1] \qw}
\newcommand{\multigate}[2]{*+<1em,.9em>{\hphantom{#2}} \POS [0,0]="i",[0,0].[#1,0]="e",!C *{#2},"e"+UR;"e"+UL **\dir{-};"e"+DL **\dir{-};"e"+DR **\dir{-};"e"+UR **\dir{-},"i" \qw}
\newcommand{\ghost}[1]{*+<1em,.9em>{\hphantom{#1}} \qw}
\newcommand{\lstick}[1]{*!R!<.5em,0em>=<0em>{#1}}
\newcommand{\Qcircuit}{\xymatrix @*=<0em>}
\newtheorem{theorem}{Theorem}
\newtheorem{lemma}{Lemma}
\newtheorem{corollary}{Corollary}
\theoremstyle{definition}
\newcommand{\qedsymb}{\hfill{\rule{2mm}{2mm}}}
\renewenvironment{proof}[1][]{\begin{trivlist} % Note changed to renewenvironment by Ben because I loaded amsthm package.
\item[\hspace{\labelsep}{\bf\noindent Proof#1:\/}] }{\qedsymb\end{trivlist}}
\newcommand{\beq}{\begin{equation}}
\newcommand{\eeq}{\end{equation}}
\newcommand{\bea}{\begin{eqnarray}}
\newcommand{\eea}{\end{eqnarray}}
\newcommand{\ba}{\begin{array}}
\newcommand{\ea}{\end{array}}
\newcommand{\bc}{\begin{cases}}
\newcommand{\ec}{\end{cases}}
\newcommand{\beba}{\begin{equation}\begin{array}{lll}}
\newcommand{\eeea}{\end{array}\end{equation}}
\newcommand{\bpm}{\begin{pmatrix}}
\newcommand{\epm}{\end{pmatrix}}
\newcommand{\bit}{\begin{itemize}}
\newcommand{\eit}{\end{itemize}}
\newcommand{\ben}{\begin{enumerate}}
\newcommand{\een}{\end{enumerate}}
\newcommand{\bracket}[2]{\langle #1|#2\rangle}
\newcommand{\ketbra}[2]{|#1\rangle \langle #2 |}
\newcommand{\ip}[2]{\langle #1, #2\rangle}
\newcommand{\tr}{\mathrm{tr}}
\newcommand{\supp}{\mathrm{supp}}
\newcommand{\HD}{\mathcal{H}_D}
\newcommand{\Hd}{\mathcal{H}_d}
\newcommand{\Hdn}{\mathcal{H}_d^{\otimes n}}
\newcommand{\Htn}{\mathcal{H}_2^{\otimes n}}
\newcommand{\BHD}{\mathcal{B}(\HD)}
\newcommand{\BHd}{\mathcal{B}(\Hd)}
\newcommand{\BHdn}{\mathcal{B}(\Hdn)}
\newcommand{\BHtn}{\mathcal{B}(\Htn)}
\newcommand{\CD}{\mathbb{C}^D}
\newcommand{\Pn}{\mathrm{P}_n}
\newcommand{\I}{\mathcal{I}}
\newcommand{\N}{\mathbb{N}}
\newcommand{\C}{\mathbb{C}}
\renewcommand{\S}{\mathcal{S}}
\newcommand{\T}{\mathcal{T}}
\newcommand{\dint}{\displaystyle\int}
\renewcommand{\Re}{\mathrm{Re}}
\newcommand{\E}{\mathrm{E}}
\newcommand{\ME}{\ket{\Phi^+_D}}
\newcommand{\poly}{\mathrm{poly}}
\newcommand{\Z}{\mathbb{Z}}
\newcommand{\Zd}{\mathbb{Z}_d}
\newcommand{\Zdn}{\mathbb{Z}_d^n}
\newcommand{\Ztn}{\mathbb{Z}_2^n}
\newcommand{\vx}{\mathbf{x}}
\newcommand{\vy}{\mathbf{y}}
\newcommand{\vz}{\mathbf{z}}
\newcommand{\va}{\mathbf{a}}
\newcommand{\vb}{\mathbf{b}}
\newcommand{\vc}{\mathbf{c}}
\newcommand{\vd}{\mathbf{d}}
\newcommand{\vzero}{\mathbf{0}}
\newcommand{\vp}{\mathbf{p}}
\newcommand{\vq}{\mathbf{q}}
\newcommand{\vL}{\mathbf{L}}
\newcommand{\vla}{\mathbf{\lambda}}
\newcommand{\sch}{\mathrm{Sch}}
\newcommand{\hook}{\mathrm{hook}}
\newcommand{\V}{\mathcal{V}}
\newcommand{\W}{\mathcal{W}}
\begin{document}

\title{Property testing of quantum measurements}

\author{Guoming Wang}
\email{gmwang@eecs.berkeley.edu}
\affiliation{Computer Science Division, University of California Berkeley, Berkeley, California 94720, USA}

\date{\today}

\begin{abstract}
In this paper, we study the following question: given a black box performing some unknown quantum measurement on a multi-qudit system, how do we test whether this measurement has certain property or is far away from having this property. We call this task \textit{property testing} of quantum measurement. We first introduce a metric for quantum measurements, and show that it possesses many nice features. Then we show that, with respect to this metric, the following classes of measurements can be efficiently tested: 1. the stabilizer measurements, which play a crucial role for quantum error correction; 2. the $k$-local measurements, i.e. measurements whose outcomes depend on a subsystem of at most $k$ qudits; 3. the permutation-invariant measurements, which include those measurements used in quantum data compression, state estimation and entanglement concentration. In fact, all of them can be tested with query complexity independent of the system's dimension. Furthermore, we also present an algorithm that can test any finite set of measurements. Finally, we consider the following natural question: given two black-box measurement devices, how do we estimate their distance? We give an efficient algorithm for this task, and its query complexity is also independent of the system's dimension. As a consequence, we can easily test whether two unknown measurements are identical or very different.	
\end{abstract}

\pacs{03.65.Wj, 03.67.Ac}

\maketitle

\section{Introduction}
%importance of entangled measurements
Quantum measurements are ubiquitous and play a pivotal role in the quantum information science. Many tasks require not only the most general quantum measurements, but also the entangled ones on many particles. For example, several quantum data compression \cite{Sch95,JS94,JHH+98,HM02a,BHL06} and communication protocols \cite{BRS03} work by performing some entangled measurement related to the representation theory of symmetric group on a multi-particle state. For another instance, certain non-abelian hidden subgroup problems are reduced to the problem of efficiently performing certain joint measurement on multiple coset states 
\cite{EHK04,Kup05,BCD05,MR05,BCD06,HKK08}. In order to guarantee the success of these tasks in practice, it is crucial to make sure that the required measurements are implemented with sufficiently high fidelity. 

%hard to estimate measurements
Now imagine that someone builds a quantum measurement device and claims that it performs some measurement that we need. How do we check if this is true? Of course, we can use the quantum tomography \cite{NC00,CN97,PCZ97,LS99,Fiu01,LFC+08,FLC+09} to completely characterize the measurement implemented by the device. However, this method is very  inefficient, especially when the dimension of the system is very large. Specifically, a general quantum measurement with $k$ possible outcomes on a $D$-dimensional system is described by a collection $\{M_1,M_2,\dots,M_k\}$ of measurement operators satisfying the \textit{completeness equation}
\bea
\sum\limits_{i=1}^k M_i^{\dagger}M_i=I.
\eea 
Here $M_i$ is a bounded linear operator on the $D$-dimensional Hilbert space and it corresponds to the $i$-th outcome. In order to reconstruct the $M_i$'s, we need to determine $\Theta (k D^2)$ parameters. If the system consists of $n$ $d$-dimensional particles (or qudits), then $D=d^n$ and hence we must access the device $\Omega (k d^{2n})$ times. So this approach is practical only for systems of moderate sizes. 

%instead, we consider property testing
Given the great difficulty of fully characterizing a black-box measurement, we need to come up with a better method. Suppose our desired measurement has certain property, but the unknown measurement is shown to be very different from any measurement possessing this property, then we know that it is surely not what we want. Therefore, we can consider devising a test that separates the measurements possessing certain property from those ``far away" from them. Of course, in order to claim that two measurements are far away, we need to introduce a metric (or distance function) than quantifies their difference in the first place. We call this task \textit{property testing} of quantum measurements.

Generally speaking, property testing \cite{BLR93,RS96,GGR98} is the task of deciding whether an object has certain property or is far away from any object possessing this property, given the promise that it is one of the two cases. For example, given a boolean function $f:\{0,1\}^n \to \{0,1\}$ as an oracle (which receives the input $x \in \{0,1\}^n$ and returns the value of $f(x)$), we may want to determine whether this function is linear or far away from any linear functions, by querying the oracle as few times as possible. The property testing of classical objects, such as boolean functions and graphs, has been extensively studied in  computer science, and it plays an important role in probabilistically checkable proofs (PCP)  \cite{ALM+98}. Remarkably, many properties of boolean functions and graphs are found to be testable with very few queries. In fact, sometimes the query complexity is even independent of the object's size. The property testing of quantum objects, including quantum states \cite{HM10} and operations  \cite{Low09,MO09,HM10,FL11,SCP11,Wan11}, has been addressed only recently. It was found that many interesting classes of quantum states and operations, such as the product pure states \cite{HM10} and Clifford operations \cite{Low09,Wan11}, can also be tested with very few copies or queries.

%our results
In this paper, we initiate the study of property testing of general multi-qudit measurements. First, we introduce a distance function for quantum measurements and show that it has many nice features. In particular, if two measurements are close with respect to this metric, then they behave similarly on most input states. Then, we present efficient algorithms for testing three interesting classes of measurements: 1. the stabilizer measurements, which play a crucial role for quantum error correction \cite{Sho95,BDS+96,Ste96}; 2. the $k$-local measurements, i.e. measurements whose outcomes depend only on a subsystem of at most $k$ qudits; 3. the permutation-invariant measurements, which includes those measurements used in quantum data compression \cite{Sch95,JS94,JHH+98,HM02a,BHL06}, state estimation \cite{VLP+99,KW01,BBG+06} and entanglement concentration \cite{BBP+96,HM02b}. In fact, all of them can be tested with query complexity independent of the system's dimension. Furthermore, we also present an algorithm that can test any finite set of measurements. Finally, we consider the following natural question: given two black-box measurement devices, how do we estimate their distance? We give an efficient algorithm for this task, and its query complexity is also independent of the system's dimension. As a corollary, we can easily test whether two unknown measurements are identical or very different.

%organization
The remainder of this paper is organized as follows. In Sec.II, we introduce a metric for quantum measurements, formally describe our problem and also present several useful tools for our work. Then, in Sec. III, IV and V, we study the testing of stabilizer measurements, $k$-local measurements and permutation-invariant measurements. Next, in Sec. VI we give an algorithm that tests any finite set of measurements. After that, in Sec, VII we present an algorithm for estimating the distance between any two measurements. Finally, Sec. VIII concludes this paper.

\section{Preliminaries}
\label{sec:pre}

\subsection{A Metric for Quantum Measurements}

Consider a $D$-dimensional quantum system. Let $\HD$ be its Hilbert space, and let $\BHD$ be the set of bounded linear operators on $\HD$. Then any measurement with $k$ possible outcomes on this system can be described by $M=\{M_1,M_2,\dots,M_k\}$ where $M_i \in \BHD$ and $\sum_{i=1}^k M_i^{\dagger}M_i=I$. If we perform $M$ on the state $\rho$, then the probability of obtaining outcome $i$ is 
\bea
p(i)=\tr (M_i \rho M_i^{\dagger}),
\label{eq:dist}
\eea 
and accordingly the state of the system after the measurement becomes 
\bea
\rho_i=\dfrac{M_i \rho M_i^{\dagger}}{\tr (M_i \rho M_i^{\dagger})}.
\label{eq:post}
\eea 
It is worth noting that each $M_i$ is determined only up to a phase. Namely,  if we replace $M_i$ by $\beta_iM_i$ for some $\beta_i \in \C$, $|\beta_i|=1$, we are still describing the same measurement. One can easily see this from Eqs.(\ref{eq:dist}) and (\ref{eq:post}). In order to conveniently deal with this equivalence relation, we make the following definitions. For any $A \in \BHD$, let $[A]=\{\beta A: \beta \in \C, |\beta|=1\}$. Furthermore, for any $M=\{M_i\}_{1 \le i \le k}$, let $[M]=\{\{\beta_iM_i\}_{1 \le i \le k}: \beta_i \in \C, |\beta_i|=1\}$. Then $M$ and $N$ correspond to the same measurement if and only if $[M]=[N]$. 

For some applications, the post-measurement state is of little interest. In such cases, the measurement can be more conveniently described by a \textit{positive-operator valued measure}(POVM) $\{E_1,E_2,\dots,E_k\}$ where $E_i \coloneqq M_i^{\dagger}M_i$. But in this paper we care a lot about the post-measurement state. We would say that two measurements are identical if both the distributions of outcomes and the post-measurement states are exactly the same for the two measurements performed on arbitrary state. 

We want to be able to compare any two measurements that may have different number of outcomes. In order to  achieve this, we append infinitely many zero operators to $M=\{M_1,M_2,\dots,M_k\}$ and rewrite it as $\{M_i\}_{i \in \N}$ where $M_i=0$ for any $i>k$. In addition, we use $|M|=k$ to denote the number of nonzero $M_i$'s.

Now let $\Omega$ be the set of all measurements on the $D$-dimensional system. We want to define a metric (or distance function) $\Delta: \Omega \times \Omega \to \mathbb{R}$ that should satisfy the following natural conditions: 
for any $M=\{M_i\}_{i \in \N},N=\{N_i\}_{i \in \N},L=\{L_i\}_{i \in \N} \in \Omega$, 
\ben
\item $\Delta(M,N) \ge 0$;
\item $\Delta(M,N)=0$ if and only if $[M]=[N]$; 
\item $\Delta(M,N)=\Delta(N,M)$;
\item $\Delta(M,L) \le \Delta(M,N)+\Delta(N,L)$.
\een
Furthermore, note that a measurement $M=\{M_i\}_{i \in \N}$ on the system $A$ can be equivalently viewed as a joint measurement $M \otimes I \coloneqq \{M_i \otimes I\}_{i \in \N}$ on the system $A \otimes B$, where $B$ is any ancilla system, and similarly for $N=\{N_i\}_{i \in \N}$. Since the distance between $M$ and $N$ should be independent of whether an ancilla system is appended, or what kind of ancilla system is appended, we need
\ben[resume]
\item 
$\Delta(M,N)=\Delta(M \otimes I,N \otimes I)$,
where $I$ is the identity operation on any finite-dimensional Hilbert space.
\een
Finally, since  we want to use $\Delta$ for property testing, it needs to be normalized, i.e.
\ben[resume]
\item $\Delta(M,N) \le 1$.
\een 

Our idea is that $\Delta(M,N)$ should reflect the total differences between each pair of $M_i$ and $N_i$. First, we quantify the distance between any $A,B \in \BHD$ as:
\beba
\Delta(A,B)
&\coloneqq & \inf\limits_{A'\in [A], B' \in [B]} \dfrac{1}{\sqrt{2D}}\|A'-B'\|_F\\
&=&\inf\limits_{\theta \in [0,2\pi)} \dfrac{1}{\sqrt{2D}}\|A-e^{i\theta}B\|_F,
\eeea
where 
\bea
\|A\|_F \coloneqq \sqrt{\tr (A^{\dagger}A)}=\sqrt{\sum\limits_{i,j=1}^D |a_{i,j}|^2}
\eea 
is the Frobenius norm (or Hilbert-Schmidt norm) for $A=[a_{i,j}]$ (fixing an orthonormal basis for $\HD$, any operator $A \in \BHD$ is represented by a $D \times D$ matrix). This metric has been used in Ref. for the property testing of  unitary operations. It possesses the following nice features:
\ben[label=(\roman{*})]
\item $\Delta(A,B) \ge 0$;
\item $\Delta(A,B)=0$ if and only if 
$[A]=[B]$; 
\item $\Delta(A,B)=\Delta(B,A)$;
\item $\Delta(A,C)\le \Delta(A,B)+\Delta(B,C)$;
\item
$\Delta(A,B)=\Delta(A \otimes I,B \otimes I)$, where $I$ is the identity operation on any finite-dimensional Hilbert space.
\een
Furthermore, note that
\beba
\Delta^2(A,B)=\dfrac{1}{2D}(\ip{A}{A}+\ip{B}{B}-2|\ip{A}{B}|),
\label{eq:deltaab}
\eeea
where
\bea
\ip{A}{B}\coloneqq\tr (A^{\dagger}B)
\eea
is the Hilbert-Schmidt inner product.

Now we define the distance between $M=\{M_i\}_{i \in \N}$
and $N=\{N_i\}_{i \in \N}$ as
\beba
\Delta(M,N)&\coloneqq &
\sqrt{\sum\limits_{i \in \N} \Delta^2(M_i,N_i)}\\
&=&\sqrt{\sum\limits_{i=1}^{\max\{|M|,|N|\}} \Delta^2(M_i,N_i)}.
\label{eq:Deltadef}
\eeea

Properties (i), (ii), (iii) and (v) of $\Delta(A,B)$ immediately imply that $\Delta(M,N)$ satisfies conditions 1, 2, 3 and 5. In addition, $\Delta(M,N)$ also fulfills condition 6 because, by Eq.(\ref{eq:deltaab}) and the completeness equation, 
\beba
\Delta^2(M,N)
&=&\sum\limits_{i \in \N} \Delta^2(M_i,N_i)\\
&=&\dfrac{1}{2D} \sum\limits_{i \in \N} (\ip{M_i}{M_i}+\ip{N_i}{N_i}\\
&&-2|\ip{M_i}{N_i}|)\\
&=&1-\dfrac{1}{D} \sum\limits_{i \in \N} |\ip{M_i}{N_i}| \\
& \le & 1.
\label{eq:DeltaMN2}
\eeea
Moreover, $\Delta(M,N)$ also satisfies condition 3. To prove this, we use  property (iv) of $\Delta(A,B)$ and the following lemma:
\begin{lemma}
If $a_i$, $b_i$ and $c_i$ are non-negative numbers such that $c_i \le a_i+b_i$ for $i=1,2,\dots,k$, then 
\bea
\sqrt{\sum\limits_{i=1}^k c_i^2}
\le \sqrt{\sum\limits_{i=1}^k a_i^2}
+ \sqrt{\sum\limits_{i=1}^k b_i^2}.
\eea 
\label{lem:tri.ine}
\end{lemma}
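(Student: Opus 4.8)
The plan is to reduce the claimed inequality to the ordinary triangle inequality for the Euclidean norm on $\mathbb{R}^k$ and then establish that inequality. First I would exploit the hypothesis that all three sequences are non-negative together with the componentwise bound $c_i \le a_i + b_i$. Since squaring preserves order among non-negative reals, $c_i^2 \le (a_i+b_i)^2$ for every $i$; summing over $i$ and using that the square root is monotone gives
\beq
\sqrt{\sum_{i=1}^k c_i^2} \le \sqrt{\sum_{i=1}^k (a_i+b_i)^2}.
\eeq
This step eliminates the $c_i$ entirely and leaves a statement purely about the two vectors $a=(a_1,\dots,a_k)$ and $b=(b_1,\dots,b_k)$.

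It then remains to prove the Minkowski (triangle) inequality $\sqrt{\sum (a_i+b_i)^2} \le \sqrt{\sum a_i^2} + \sqrt{\sum b_i^2}$. I would establish this by expanding the square and controlling the cross term with the Cauchy--Schwarz inequality. Concretely,
\beq
\sum_{i=1}^k (a_i+b_i)^2 = \sum_i a_i^2 + 2\sum_i a_i b_i + \sum_i b_i^2 ,
\eeq
and Cauchy--Schwarz gives $\sum_i a_i b_i \le \sqrt{\sum_i a_i^2}\,\sqrt{\sum_i b_i^2}$. Substituting this bound turns the right-hand side into the perfect square $\left(\sqrt{\sum_i a_i^2}+\sqrt{\sum_i b_i^2}\right)^2$; taking square roots yields the triangle inequality, and chaining it with the first display completes the argument.

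There is essentially no serious obstacle here, as the statement simply packages two standard facts. The only point requiring a moment of care is the Cauchy--Schwarz step, which is the sole nontrivial ingredient; everything else is monotonicity of squaring and of the square root on the non-negative reals. If one wanted a fully self-contained proof, one could derive Cauchy--Schwarz for this finite sum from the observation that the quadratic $t \mapsto \sum_i (a_i t - b_i)^2 \ge 0$ has non-positive discriminant, but for the purposes of this lemma it is cleaner to invoke it directly.
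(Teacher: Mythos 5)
Your proposal is correct and follows essentially the same route as the paper: both arguments use the monotonicity of squaring to pass from $c_i \le a_i + b_i$ to $\sum_i c_i^2 \le \sum_i (a_i+b_i)^2$, expand the square, and control the cross term $\sum_i a_i b_i$ with the Cauchy--Schwarz inequality. The only cosmetic difference is that you package the second half as the Minkowski inequality, whereas the paper squares the target inequality directly; the computations are identical.
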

\begin{proof}
We need to prove 
\bea
\sum\limits_{i=1}^k c_i^2
\le \sum\limits_{i=1}^k a_i^2
+ \sum\limits_{i=1}^k b_i^2
+2 \sqrt{(\sum\limits_{i=1}^k a_i^2)(\sum\limits_{i=1}^k b_i^2)}.
\eea
Since $0 \le c_i \le a_i+b_i$,
we have
\bea
\sum\limits_{i=1}^k c_i^2
\le \sum\limits_{i=1}^k (a_i+b_i)^2
=\sum\limits_{i=1}^k (a_i^2+b_i^2+2a_ib_i).
\eea
So it is sufficient to show
\bea
\sum\limits_{i=1}^k a_ib_i
\le 
\sqrt{(\sum\limits_{i=1}^k a_i^2)
(\sum\limits_{i=1}^k b_i^2)},
\eea
which is exactly the Cauchy-Schwarz inequality.
\end{proof}
Plugging $a_i=\Delta(M_i,N_i)$, $b_i=\Delta(N_i,L_i)$, $c_i=\Delta(M_i,L_i)$ for $1 \le i \le \max\{|M|,|N|,|L|\}$ into lemma \ref{lem:tri.ine}, we obtain 
\bea
\Delta(M,L) \le \Delta(M,N)+\Delta(N,L).
\eea

Now we show that $\Delta(M,N)$ reflects the \textit{average} difference between the ``behaviors" of $M$ and $N$ on a random input state. Here by ``behavior" we mean both the distribution of measurement outcomes and the post-measurement states. Specifically, let $\ket{\psi}$ be a random pure state chosen according to the normalized Haar measure. If we perform $M=\{M_i\}_{i \in \N}$ or $N=\{N_i\}_{i \in \N}$ on $\ket{\psi}$, then the unnormalized post-measurement states would be $\{M_i \ket{\psi}\}_{i \in \N}$ or $\{N_i \ket{\psi}\}_{i \in \N}$. We quantify the total difference between the two sets of states as
\beba
\sum\limits_{i \in \N} \|M_i\ket{\psi}
-N_i \ket{\psi}\|^2 
&=&
\sum\limits_{i \in \N}
\bra{\psi} (M_i^{\dagger}M_i+N_i^{\dagger}N_i\\
&&-M_i^{\dagger}N_i-N_i^{\dagger}M_i)\ket{\psi}.
\eeea
Integrating this quantity over $\ket{\psi}$ and using the fact
\beba
\dint \ketbra{\psi}{\psi} ~d \psi
=\dfrac{I}{D}
\eeea
and the completeness equation, we obtain
\beba
\dint \sum\limits_{i \in \N} \|M_i\ket{\psi}
-N_i \ket{\psi}\|^2  ~d \psi
&=& 2-\dfrac{2}{D}\sum\limits_{i \in \N} \Re\ip{M_i}{N_i}.
\eeea
By multiplying each $N_i$ by an appropriate phase (without changing the measurement being described), we can make $\ip{M_i}{N_i}$ real and non-negative. Then by Eq.(\ref{eq:DeltaMN2}), 
\beba
\dint \sum\limits_{i \in \N} \|M_i\ket{\psi}
-N_i \ket{\psi}\|^2 ~d\psi
&=&2-\dfrac{2}{D}\sum\limits_{i \in \N} |\ip {M_i}{N_i}|\\
&=&2 \Delta^2(M,N).
\label{eq:dMNP}
\eeea
Thus, if $\Delta(M,N)=\epsilon$ is small, then the expectation of $\sum_{i \in \N}\|M_i\ket{\psi}-N_i \ket{\psi}\|^2$ is $2\epsilon^2$, which implies $M_i\ket{\psi}$ and $N_i \ket{\psi}$ are close on average. Furthermore, let $p_i(\ket{\psi})=\|M_i\ket{\psi}\|^2$ and $q_i(\ket{\psi})=\|N_i\ket{\psi}\|^2$ be the probability of obtaining outcome $i$ when performing $M$ and $N$ on $\ket{\psi}$ respectively. Then by the fact
\bea
\|u-v\| \ge | \|u\|-\|v\| |~~~\forall u,v \in \CD,
\eea
we have
\beba
\sum\limits_{i \in \N}
\|M_i\ket{\psi}-N_i \ket{\psi}\|^2
&\ge &
\sum\limits_{i \in \N}
(\|M_i\ket{\psi}\|-\|N_i \ket{\psi}\|)^2\\
&=&\sum\limits_{i \in \N}
(\sqrt{p_i(\ket{\psi})}-\sqrt{q_i(\ket{\psi})})^2\\
&=&2-2F(\vp(\ket{\psi}),\vq(\ket{\psi})),
\label{eq:MPNP}
\eeea
where 
\beba
F(\vp(\ket{\psi}),\vq(\ket{\psi}))
\coloneqq \sum\limits_{i \in \N} \sqrt{p_i(\ket{\psi})q_i(\ket{\psi})}
\eeea
is the fidelity of $\vp(\ket{\psi})\coloneqq(p_i(\ket{\psi}))_{i \in \N}$ and $\vq(\ket{\psi})\coloneqq(q_i(\ket{\psi}))_{i \in \N}$. Taking the expectation of Eq.(\ref{eq:MPNP}), then we know from Eq.(\ref{eq:dMNP}) that
$F(\vec{p},\vec{q})$ is at least $1-\Delta^2(M,N)$ on average. So if $\Delta(M,N)$ is small, then $\vec{p}$ and $\vec{q}$ are  also close on average.

Note that by using the Markov inequality 
\bea
\Pr(|X|>a) \le \dfrac{\E(|X|)}{a},~~~\forall a>0,
\label{eq:Markov}
\eea
we can estimate the fraction of ``good" input states on which $M$ and $N$ behave similarly. Specifically, let $\Delta(M,N)=\epsilon$. Then setting $X=\sum_{i \in \N}\|M_i\ket{\psi}-N_i \ket{\psi}\|^2$ and $a=10\epsilon^2$ in Eq.(\ref{eq:Markov}) yields 
\bea
\Pr[\sum_{i \in \N}\|M_i\ket{\psi}-N_i \ket{\psi}\|^2 \ge 10\epsilon^2] \le \dfrac{1}{5}.
\eea
So for at least $4/5$ fraction of $\ket{\psi}$'s, $M\ket{\psi}$ and $N\ket{\psi}$ are close, provided that $\epsilon$ is small enough. Similarly, setting $X=1-F(\vp(\ket{\psi}),\vq(\ket{\psi}))$ and $a=10\epsilon^2$ in Eq.(\ref{eq:Markov}) yields
\bea
\Pr[1-F(\vp(\ket{\psi}),\vq(\ket{\psi})) \le 10\epsilon^2] \le \dfrac{1}{5}.
\eea
So for at least $4/5$ fraction of $\ket{\psi}$'s, $\vp(\ket{\psi})$ and $\vq(\ket{\psi})$ are close, provided that $\epsilon$ is small enough.

\subsection{Property Testing of Quantum Measurements}
The task of property testing can be formally described as follows. Suppose $\Omega$ is a class of mathematical objects equipped with a metric $\Delta$. A property is just a subset $\S \subset \Omega$. For any $A \in \Omega$, if $A \in \S$, then we say that $A$ has property $S$; otherwise, if $\Delta(A,\S) \ge \epsilon$, i.e. $\Delta(A,B)\ge \epsilon$
for any $B \in \S$, then we say that $A$ is $\epsilon$-far from property $\S$. An algorithm $\epsilon$-tests property $\S$ with query complexity $q(|\Omega|,\epsilon)$ if on any input $A \in \Omega$, it makes at most $q(|\Omega|,\epsilon)$ queries to $A$ and behaves as follows:
\bit
\item if $A$ has property $\S$, then the algorithm accepts $A$ with probability at least $2/3$;
\item if $A$ is $\epsilon$-far from property $\S$, then the algorithm accepts $A$ with probability at most $1/3$.
\eit
Note that by repeating this algorithm many times and choosing the majority answer, we can exponentially decrease the probability of making an erroneous decision. So the completeness error $1-2/3=1/3$ here and soundness error $1/3$ here can be replaced by arbitrarily small constants.

In this paper, we study the property testing of quantum measurements on finite-dimensional systems with respect to the metric $\Delta$ defined by Eq.(\ref{eq:Deltadef}). Specifically, fix a $D$-dimensional system, and let $\Omega$ be the set of all measurements on this system. Suppose $\S \subset \Omega$ is the property to be tested. Our input is a black box performing some unknown measurement $M \in \Omega$. We can access it by preparing the $D$-dimensional system (denoted by $A$) in some known state $\rho_A$, applying $M$ on the state,  and obtaining an outcome $i$ as well as the post-measurement state $M_i \rho_{A} M_i^{\dagger}$ (up to a normalization). Moreover, we can also introduce an ancilla system $B$ and prepare the joint system $A \otimes B$ in some entangled state $\rho_{AB}$, then apply $M$ on subsystem $A$ of this state, and get an outcome $i$ as well as the post-measurement state $(M_i \otimes I) \rho_{AB} (M_i^{\dagger} \otimes I)$ (up to a normalization). An algorithm $\epsilon$-tests $\S$ with query complexity $q(D,\epsilon)$ if for any input $M \in \Omega$, it accesses the black box in the above way at most $q(D,\epsilon)$ times and makes a correct decision about whether $M \in \S$ or $\Delta(M,\S)\ge \epsilon$ with probability at least $2/3$. Note that we allow the algorithm to extract useful information about $M$  from both the outcome statistics and post-measurement states. Our goal is to devise such a testing algorithm with the minimal query complexity. In addition, we also prefer this algorithm to be efficiently implementable. Namely, we want its time complexity to be polynomial in $\log(D)$ and $1/\epsilon$, assuming one query to the black box takes a unit time.

\subsection{Useful Tools}
\label{sec:tool}
The following tools will be very useful for our work.

\subsubsection{Probability Theory}
Due to the probabilistic nature of quantum measurements, our testing algorithms heavily depend on analyzing the outcome statistics. So the following results from probability theory will be very helpful.

Suppose $\vp=(p_i)_{i \in \N}$ and 
$\vq=(q_i)_{i \in \N}$ are two probability distributions over $\N$. Let
\bea
D(\vp,\vq)\coloneqq \dfrac{1}{2}\sum\limits_{i \in \N} |p_i-q_i|
\eea
be the variational distance of $\vp$ and $\vq$, and let 
\bea
F(\vp,\vq)\coloneqq \sum\limits_{i \in \N} \sqrt{p_iq_i}.
\eea
be the fidelity of $\vp$ and $\vq$.
It is easy to see that $0\le \Delta(\vp,\vq) ,F(\vp,\vq)\le 1$. In addition, they satisfy the following relation:
\bea
1-F(\vp,\vq) \le D(\vp,\vq) \le \sqrt{1-F^2(\vp,\vq)}.
\label{eq:st&f}
\eea

Meanwhile, the Chernoff-Hoeffding bound \cite{Che52,Hoe63} is another important and useful result from probability theory. It gives an exponentially decreasing bounds on tail distributions of sums of independent random variables. Specifically, let $X_1,X_2,\dots,X_n \in [0,1]$ be independent random variables and let $S=\sum_{i=1}^n X_i$. Then for any $0<\epsilon<\E[S]/n$,
\bea
\Pr \left[\left|\dfrac{S-\E[S]}{n} \right|>\epsilon \right] \le 2 e^{- 2n\epsilon^2}.
\label{eq:chernoff}
\eea
In particular, if the $X_i$'s are independent and identically distributed (i.i.d.) as $X$, then 
\bea
\Pr\left[\left|\dfrac{S}{n}-\E[X]\right| > \epsilon \right] \le 2 e^{- 2n\epsilon^2}.
\label{eq:chernoff2}
\eea
By choosing $n=O(\log(1/{\delta})/{\epsilon^2})$ we can make the right-hand side smaller than $\delta$. Then $\E[X]$ can be approximated by $S/n$ with precision $\epsilon$ and confidence $1-\delta$. 

Now imagine that $X$ is a $\{0,1\}$-valued random variable indicating whether a particular outcome occurs in an experiment. Namely, $X=1$ if a this outcome occurs, and $X=0$ otherwise. Then $\E[X]$ is the probability for this outcome in this experiment. By Eq.(\ref{eq:chernoff2}), we can approximate this probability with precision $\epsilon$ and confidence $1-\delta$ by repeating the experiment $O(\log(1/{\delta})/{\epsilon^2})$ times.

\subsubsection{The Choi-Jamio\l kowski Isomorphism}

The Choi-Jamio\l kowski isomorphism \cite{Jam72,Cho75} states that there is a duality between quantum operations and quantum states. Specifically, let
\bea
\ME \coloneqq \dfrac{1}{\sqrt{D}}\sum\limits_{i=1}^{D}\ket{i}\ket{i}
\eea
be the $D$-dimensional maximally entangled state. For any $A \in \BHD$, define
\beba
\ket{v(A)}\coloneqq (A \otimes I)\ME,
\eeea
where $A$ acts on the first subsystem. Then for any $A$, $B$,
\beba
\bracket{v(A)}{v(B)}&=&\dfrac{1}{D}\ip{A}{B}.\\
\label{eq:isoip}
\eeea
Namely, the inner product between $\ket{v(A)}$ and $\ket{v(B)}$ is proportional to the Hilbert-Schimdt product of $A$ and $B$. In particular, if $A=B$, then we get 
\bea
p(A) \coloneqq \|\ket{v(A)}\|^2=\dfrac{1}{D}\|A\|_F^2.
\eea
So $\ket{v(A)}$ is not normalized unless $\|A\|_F=\sqrt{D}$. We use $\ket{\tilde{v}(A)}$ to denote the normalized version
of $\ket{v(A)}$, i.e.
\bea
\ket{\tilde{v}(A)}\coloneqq\dfrac{\ket{v(A)}}{\|\ket{v(A)}\|}.
\eea

Now suppose that we perform a measurement $M=\{M_i\}_{i \in \N}$ on the first subsystem of $\ME$, then the unnormalized post-measurement states would be $\{\ket{v(M_i)}\}_{i \in \N}$. Namely, the outcome $i$ occurs with probability 
\beba
p(M_i) = \|\ket{v(M_i)}\|^2=\dfrac{1}{D}\|M_i\|_F^2, 
\eeea
and the corresponding post-measurement state is 
\bea
\ket{\tilde{v}(M_i)}=\dfrac{1}{\sqrt{p(M_i)}}\ket{v(M_i)}.
\eea 
This property will be crucial for every testing algorithm given in this paper. 

\subsubsection{Pauli Decomposition}
Let
\beba
\sigma_{00}=I=\bpm
1 & 0\\
0 & 1\\
\epm,&
\sigma_{10}=X=\bpm
0 & 1\\
1 & 0\\
\epm,\\
\sigma_{01}=Z=\bpm
1 & 0\\
0 & -1\\
\epm &
\sigma_{11}=Y=\bpm
0 & -i\\
i & 0\\
\epm
\eeea
be the Pauli operators. For any $\vx=(x_1,x_2,\dots,x_n)$,  $\vz=(z_1,z_2,\dots,z_n) \in \Ztn$, let
\beba
\sigma_{\vx,\vz}=\sigma_{x_1,z_1} \otimes \sigma_{x_2,z_2} \otimes \dots \otimes \sigma_{x_n,z_n}.
\label{eq:sigmaxz}
\eeea
Then for any $\va,\vb,\vc,\vd \in \Ztn$, we have
\bea
[\sigma_{\va,\vb}\sigma_{\vc,\vd}]
=[\sigma_{\va \oplus \vc,\vb \oplus \vd}].
\eea
Here `$\oplus$' denotes the bitwise addition modulo $2$.
So there exists $\beta_{\va,\vb,\vc,\vd} \in \C$ with $|\beta_{\va,\vb,\vc,\vd}|=1$ such that
\bea
\sigma_{\va,\vb}\sigma_{\vc,\vd}
=\beta_{\va,\vb,\vc,\vd}\sigma_{\va \oplus \vc,\vb \oplus \vd}.
\eea
Moreover, $\{\sigma_{\vx,\vz}\}_{\vx,\vz \in \Ztn}$
form an orthogonal basis for $\BHtn$ with respect to the Hilbert-Schmidt product. So for any $A \in \BHtn$ we can write it as 
\beba
A = \sum\limits_{\vx,\vz \in \Ztn}{\mu_{\vx,\vz}(A)\sigma_{\vx,\vz}},
\eeea
where 
\beba
\mu_{\vx,\vz}(A)\coloneqq \dfrac{1}{2^n}
\ip{\sigma_{\vx,\vz}}{A}.
\eeea
Note that
\bea
\|A\|_F^2=2^n \sum\limits_{\vx,\vz \in \Ztn}|\mu_{\vx,\vz}(A)|^2.
\eea 
Furthermore, by the Choi-Jamio\l kowski isomorphism,
\beba
\ket{v(A)}=\sum\limits_{\vx,\vz \in \Ztn}
\mu_{\vx,\vz}(A) \ket{v(\sigma_{\vx,\vz})},
\eeea
where $\{\ket{v(\sigma_{\vx,\vz})}\}_{\vx,\vz \in \Ztn}$ form an orthonormal basis for $\Htn$. So 
\beba
p(A)=\|\ket{v(A)}\|^2=\sum\limits_{\vx,\vz \in \Ztn}
|\mu_{\vx,\vz}(A)|^2.
\eeea
It follows that
\beba
\ket{\tilde{v}(A)}&=& \dfrac{1}{\sqrt{p(A)}} \ket{v(A)} \\
&=& \sum\limits_{\vx,\vz \in \Ztn}
\dfrac{\mu_{\vx,\vz}(A)}{\sqrt{p(A)}} \ket{v(\sigma_{\vx,\vz})}.
\eeea
So if we measure $\ket{\tilde{v}(A)}$ in the basis $\{\ket{v(\sigma_{\vx,\vz})}\}_{\vx,\vz \in \Ztn}$, then the outcome $(\vx,\vz)$ occurs with probability 
\beba
q_{\vx,\vz}(A) \coloneqq \dfrac{|\mu_{\vx,\vz}(A)|^2}{p(A)}. 
\eeea

Now consider any measurement $M=\{M_i\}_{i \in \N}$ on an 
$n$-qubit system. By plugging 
$M_i=\sum_{\vx,\vz \in \Ztn} \mu_{\vx,\vz}(M_i)\sigma_{\vx,\vz}$
into $\sum_{i \in \N}M_i^{\dagger}M_i=I$, we obtain
\beba
\sum\limits_{i \in \N} 
\sum\limits_{\vx,\vz \in \Ztn}
|\mu_{\vx,\vz}(M_i)|^2 
&=& 1.
\label{eq:pau.sum.1}
\eeea
Furthermore, if we measure $\ket{\tilde{v}(M_i)}$ in the basis
$\{\ket{v(\sigma_{\vx,\vz})}\}_{\vx,\vz \in \Ztn}$, then the outcome $(\vx,\vz)$ occurs with probability
\bea
q_{\vx,\vz}(M_i)=\dfrac{|\mu_{\vx,\vz}(M_i)|^2}{p(M_i)}
\eea
These properties will be very useful for the testing of the stabilizer and $k$-local measurements. 

Finally, the above facts can be straightforwardly generalized to the qudit case. We mainly need to replace the Pauli operators by their higher-dimensional analogues
\beba
\sigma_{x,z}&=&\sum\limits_{j=0}^{d-1} \omega^{jz}\ketbra{j \oplus x}{j},
\label{eq:dpauli}
\eeea
for $x,z \in \Zd$. Here $\omega=e^{i2\pi/d}$ and `$\oplus$' denotes the addition modulo $d$. The reader can easily work out the rest of the details.

\section{Testing the Stabilizer Measurements}
\label{sec:stabilizer}
Equipped with the notations introduced above, we begin with the testing of the stabilizer measurements. This kind of measurements have played a crucial role in quantum error correction \cite{Sho95,BDS+96,Ste96}. They are closely related to the stabilizer codes \cite{Got97}, which are defined by choosing a set of commuting operators from the Pauli group on $n$ qubits. Specifically, let $\Pn\coloneqq\{\pm \sigma_{\vx,\vz}: \vx,\vz \in \Ztn\}$.
Then any operator $g \in \Pn$ has eigenvalues $+1$ and $-1$, and its (+1)-eigenspace and (-1)-eigenspace are both $2^{n-1}$-dimensional. Now if we choose a set of commuting operators $g_1,g_2,\dots,g_k \in \Pn$ such that $-I \not\in \langle g_1,g_2,\dots,g_k\rangle$, then the simultaneous (+1)-eigenspace of $g_1,g_2,\dots,g_k$ defines the coding space of a stabilizer code. The error-correction for this code is quite simple: we simply measure each of $g_1,g_2,\dots,g_k$, and from the outcomes we can determine the error, and then we perform the appropriate recovery operation.

A stabilizer measurement is the projective measurement onto the $(\pm 1)$-eigenspaces of $\sigma_{\va,\vb}$ for any $\va,\vb \in \Ztn$. Formally, it is denoted by $P(\va,\vb)=\{P_1(\va,\vb),P_2(\va,\vb)\}$ where
\bea
P_{1}(\va,\vb)&\coloneqq &\dfrac{I+\sigma_{\va,\vb}}{2}, \\
P_{2}(\va,\vb)&\coloneqq &\dfrac{I-\sigma_{\va,\vb}}{2}.
\eea
So $M=\{M_i\}_{i \in \N}$ is a stabilizer measurement if and only if there exist $\va,\vb \in \Ztn$ such that
\beba
\mu_{\vzero,\vzero}(M_{1})=
\mu_{\va,\vb}(M_{1})=
\mu_{\vzero,\vzero}(M_{2})=
-\mu_{\va,\vb}(M_{2})=\dfrac{1}{2},\\
\mu_{\vx,\vy}(M_i)=0,~~\forall i \ge 3 \textrm{~or~}
(\vx,\vy) \neq (\vzero,\vzero), (\va,\vb).
\eeea
The following lemma says that if the above condition is approximately fulfilled, then $M=\{M_i\}_{i \in \N}$ is close to a stabilizer measurement.
\begin{lemma}
For any $0 \le \gamma,\delta \le 1/4$
and $\va,\vb \in \Ztn$, if a measurement $M=\{M_i\}_{i \in \N}$ on $n$ qubits satisfies 
\bea
|\mu_{\vzero,\vzero}(M_{1})|,|\mu_{\va,\vb}(M_{1})|
\in \left[\sqrt{\dfrac{1}{4}-\gamma}, \sqrt{\dfrac{1}{4}+\gamma}\right],
\label{eq:condm11}\\
|\mu_{\vzero,\vzero}(M_{2})|,|\mu_{\va,\vb}(M_{2})|
\in \left[\sqrt{\dfrac{1}{4}-\gamma}, \sqrt{\dfrac{1}{4}+\gamma}\right],
\label{eq:condm21}
\eea
and
\bea
\Re(\mu_{\vzero,\vzero}^*(M_{1})
\mu_{\va,\vb}(M_{1})) \ge \dfrac{1}{4}-\delta, 
\label{eq:condm12}\\
\Re(\mu_{\vzero,\vzero}^*(M_{2})
\mu_{\va,\vb}(M_{2})) \le -\dfrac{1}{4}+\delta, 
\label{eq:condm22}
\eea
then 
\bea
\Delta(M,P(\va,\vb)) \le \sqrt{8\gamma+2\delta}.
\eea
\label{lem:pauliclose}
\end{lemma}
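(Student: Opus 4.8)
The plan is to work entirely with the closed-form expression for the measurement distance in Eq.(\ref{eq:DeltaMN2}). Since the two projectors $P_1(\va,\vb),P_2(\va,\vb)$ are orthogonal and sum to the identity, $P(\va,\vb)$ satisfies the completeness equation, so that identity applies verbatim and gives $\Delta^2(M,P(\va,\vb)) = 1 - \frac{1}{D}\sum_{i}|\ip{M_i}{P_i(\va,\vb)}|$ with $D=2^n$. Hence the whole problem reduces to producing a good lower bound on $\frac{1}{D}\sum_i|\ip{M_i}{P_i(\va,\vb)}|$. The key virtue of this route is that the uncontrolled self-overlaps $\ip{M_i}{M_i}$ never appear, so we never need to know anything about the Pauli coefficients of $M_i$ outside the two positions $(\vzero,\vzero)$ and $(\va,\vb)$.

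Next I would evaluate each overlap in the Pauli basis. Writing $\ip{A}{B} = D\sum_{\vx,\vz}\mu_{\vx,\vz}^*(A)\mu_{\vx,\vz}(B)$ (which follows from the orthogonality of the $\sigma_{\vx,\vz}$ together with $\|A\|_F^2 = 2^n\sum_{\vx,\vz}|\mu_{\vx,\vz}(A)|^2$), and noting that $P_1(\va,\vb)$ has Pauli coefficients $\tfrac12$ at $(\vzero,\vzero)$ and $(\va,\vb)$ and zero elsewhere, while $P_2(\va,\vb)$ has coefficients $\tfrac12$ and $-\tfrac12$, only $\mu_{\vzero,\vzero}(M_i)$ and $\mu_{\va,\vb}(M_i)$ survive. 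This yields $\frac{1}{D}|\ip{M_1}{P_1(\va,\vb)}| = \tfrac12|\mu_{\vzero,\vzero}(M_1)+\mu_{\va,\vb}(M_1)|$ and $\frac{1}{D}|\ip{M_2}{P_2(\va,\vb)}| = \tfrac12|\mu_{\vzero,\vzero}(M_2)-\mu_{\va,\vb}(M_2)|$, while every term with $i\ge 3$ vanishes because $P_i(\va,\vb)=0$.

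Then I would insert the hypotheses. Expanding $|\mu_{\vzero,\vzero}(M_1)+\mu_{\va,\vb}(M_1)|^2 = |\mu_{\vzero,\vzero}(M_1)|^2 + |\mu_{\va,\vb}(M_1)|^2 + 2\Re(\mu_{\vzero,\vzero}^*(M_1)\mu_{\va,\vb}(M_1))$ and applying the magnitude bounds (\ref{eq:condm11}) together with the phase-alignment bound (\ref{eq:condm12}) gives a lower bound of $(\tfrac14-\gamma)+(\tfrac14-\gamma)+2(\tfrac14-\delta)=1-2\gamma-2\delta$. The minus sign in the second overlap is exactly matched by the reversed-sign hypotheses (\ref{eq:condm21}) and (\ref{eq:condm22}): expanding $|\mu_{\vzero,\vzero}(M_2)-\mu_{\va,\vb}(M_2)|^2$ turns the cross term into $-2\Re(\mu_{\vzero,\vzero}^*(M_2)\mu_{\va,\vb}(M_2))\ge \tfrac12-2\delta$, so the same bound $1-2\gamma-2\delta$ results. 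Summing, $\frac{1}{D}\sum_i|\ip{M_i}{P_i(\va,\vb)}| \ge \sqrt{1-2\gamma-2\delta}$, where $\gamma,\delta\le \tfrac14$ guarantees the radicand is nonnegative.

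Finally I would conclude with $\Delta^2(M,P(\va,\vb)) \le 1-\sqrt{1-2\gamma-2\delta}$ and the elementary estimate $1-\sqrt{1-x}\le x$ for $0\le x\le 1$, which gives $\Delta^2(M,P(\va,\vb))\le 2\gamma+2\delta\le 8\gamma+2\delta$, i.e. $\Delta(M,P(\va,\vb))\le \sqrt{8\gamma+2\delta}$ (the stated bound is in fact comfortably loose). There is no serious obstacle here; the only point requiring care is the very first reduction, namely recognizing that Eq.(\ref{eq:DeltaMN2})---which relies on completeness for both arguments---is precisely what lets us discard all unknown Pauli weight of $M$. A naive attempt to bound each $\Delta^2(M_i,P_i(\va,\vb))$ separately through $\tfrac{1}{2D}(\ip{M_i}{M_i}+\ip{P_i(\va,\vb)}{P_i(\va,\vb)}-2|\ip{M_i}{P_i(\va,\vb)}|)$ would stall, since $\ip{M_i}{M_i}$ is not controlled by the hypotheses and can be as large as $D$.
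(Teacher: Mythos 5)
Your proof is correct, and it takes a genuinely different route from the paper's. The paper proves the lemma termwise: after a without-loss-of-generality phase normalization it writes $\mu_{\vzero,\vzero}(M_i)=\alpha_i$, $\mu_{\va,\vb}(M_i)=\beta_ie^{i\theta_i}$, bounds each $\Delta^2(M_i,P_i(\va,\vb))$ by $\frac{1}{2D}\|M_i-P_i(\va,\vb)\|_F^2$, and then must separately control the residual Pauli weight --- it derives from the completeness equation the tail bound $\sum_{i=1,2}\sum_{(\vx,\vz)\neq(\vzero,\vzero),(\va,\vb)}|\mu_{\vx,\vz}(M_i)|^2+\sum_{i\ge3}\sum_{\vx,\vz}|\mu_{\vx,\vz}(M_i)|^2\le 4\gamma$, which is where the $8\gamma$ comes from. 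You instead invoke completeness once, globally, through the identity $\Delta^2(M,N)=1-\frac{1}{D}\sum_{i\in\N}|\ip{M_i}{N_i}|$, so all uncontrolled Pauli weight of $M$ is absorbed automatically and only the two overlaps $\frac{1}{2}|\mu_{\vzero,\vzero}(M_1)+\mu_{\va,\vb}(M_1)|$ and $\frac{1}{2}|\mu_{\vzero,\vzero}(M_2)-\mu_{\va,\vb}(M_2)|$ need to be lower-bounded; the hypotheses do this directly and the elementary inequality $1-\sqrt{1-x}\le x$ finishes. The computations all check out (including the nonnegativity of the radicand $1-2\gamma-2\delta$ under $\gamma,\delta\le 1/4$, and the vanishing of the $i\ge3$ overlap terms since $P_i(\va,\vb)=0$ there). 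Your argument is shorter, avoids the phase normalization and the explicit tail estimate, and yields the strictly better bound $\Delta^2(M,P(\va,\vb))\le 2\gamma+2\delta$; the only mild overstatement is your closing claim that the termwise approach ``would stall'' --- it does not stall, since the paper rescues it with the global tail bound, but it does pay for that with the weaker constant $8\gamma$.
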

\begin{proof} See Appendix \ref{apd:pauliclose}.
\end{proof}

With the help of lemma \ref{lem:pauliclose}, we obtain the following result on testing the stabilizer measurements.

\begin{theorem}
The stabilizer measurements on $n$ qubits can be $\epsilon$-tested with query complexity $O(1/\epsilon^4)$. Furthermore, the testing algorithm can be efficiently implemented.
\label{thm:stabilizer}
\end{theorem}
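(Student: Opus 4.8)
The plan is to convert the Pauli-coefficient characterization of stabilizer measurements into a sampling test, using the Choi-Jamio\l kowski state as the probe and invoking Lemma~\ref{lem:pauliclose} to turn approximate satisfaction of the defining equations into closeness in $\Delta$. The basic primitive is: prepare $\ME$, apply the black box $M$ to the first subsystem, record the outcome $i$, and then measure the normalized post-measurement state $\ket{\tilde{v}(M_i)}$ in the Bell basis $\{\ket{v(\sigma_{\vx,\vz})}\}$. As established in Sec.~\ref{sec:tool}, this produces the pair $(i,(\vx,\vz))$ with joint probability exactly $p(M_i)\,q_{\vx,\vz}(M_i)=|\mu_{\vx,\vz}(M_i)|^2$, so repeated runs estimate the squared magnitudes $|\mu_{\vx,\vz}(M_i)|^2$ appearing in the definition of a stabilizer measurement.

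First I would spend one batch of runs to (a) confirm that essentially all the probability mass lies on two outcomes, (b) identify the dominant nonzero Pauli label $(\va,\vb)$, and (c) verify that for $i=1,2$ the labels $(\vzero,\vzero)$ and $(\va,\vb)$ each carry weight $|\mu_{\vzero,\vzero}(M_i)|^2,|\mu_{\va,\vb}(M_i)|^2\approx 1/4$. This supplies the magnitude hypotheses (\ref{eq:condm11})-(\ref{eq:condm21}) of Lemma~\ref{lem:pauliclose} with some small $\gamma$. Crucially, since $\sum_i\sum_{\vx,\vz}|\mu_{\vx,\vz}(M_i)|^2=1$ by Eq.~(\ref{eq:pau.sum.1}), four coefficients summing to $\approx 1$ automatically force all remaining coefficients (every other Pauli, and every outcome $i\ge 3$) to be negligible, so no separate leakage test is needed.

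The remaining ingredient is the relative phase, i.e. the real-part conditions (\ref{eq:condm12})-(\ref{eq:condm22}). Here I would rerun the same probe but replace the Bell measurement by a measurement in the rotated basis containing $\ket{w_\pm}\coloneqq\frac{1}{\sqrt2}(\ket{v(\sigma_{\vzero,\vzero})}\pm\ket{v(\sigma_{\va,\vb})})$, completed to an orthonormal basis. A direct computation gives $\bra{v(M_i)}\Pi_{w_\pm}\ket{v(M_i)}=\frac12(|\mu_{\vzero,\vzero}(M_i)|^2+|\mu_{\va,\vb}(M_i)|^2\pm 2\,\Re(\mu_{\vzero,\vzero}^*(M_i)\mu_{\va,\vb}(M_i)))$, so the joint frequencies satisfy $\Pr(i,+)-\Pr(i,-)=2\,\Re(\mu_{\vzero,\vzero}^*(M_i)\mu_{\va,\vb}(M_i))$, directly estimating the quantity bounded in (\ref{eq:condm12})-(\ref{eq:condm22}). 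Confirming that outcome $1$ is almost always paired with $w_+$ and outcome $2$ with $w_-$ then yields those conditions with small $\delta$. Since $\sigma_{\va,\vb}$ is a tensor product of single-qubit Paulis, this basis change is an $O(n)$-gate Clifford circuit, so each run (Bell-pair preparation, one query to $M$, and the rotated Bell measurement) runs in time $\poly(n)=\poly(\log D)$, giving the claimed efficient implementability.

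Finally, choosing the acceptance tolerances so that $\gamma,\delta=\Theta(\epsilon^2)$, Lemma~\ref{lem:pauliclose} gives $\Delta(M,P(\va,\vb))\le\sqrt{8\gamma+2\delta}\le\epsilon$ whenever all checks pass, establishing soundness; completeness is clear because a genuine $P(\va,\vb)$ realizes the ideal target probabilities exactly, so by Chernoff-Hoeffding its empirical estimates fall within tolerance with high probability. Because each estimated probability must be resolved to additive precision $\Theta(\epsilon^2)$, Eq.~(\ref{eq:chernoff2}) shows $O(1/(\epsilon^2)^2)=O(1/\epsilon^4)$ runs suffice, matching the stated query complexity. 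The step I expect to be most delicate is the phase test together with the error propagation: one must check that the rotated-basis statistics genuinely isolate $\Re(\mu_{\vzero,\vzero}^*(M_i)\mu_{\va,\vb}(M_i))$ and that the accumulated empirical errors, once fed through the $\sqrt{8\gamma+2\delta}$ bound of Lemma~\ref{lem:pauliclose}, keep the final guarantee below $\epsilon$ with the right constants.
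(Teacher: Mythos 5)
Your proposal matches the paper's algorithm in its overall architecture: the same Choi--Jamio\l kowski probe $\ME$, the same first two checks (outcome support concentrated on $\{1,2\}$ with each outcome near probability $1/2$, followed by Bell-basis measurements of the post-measurement states to pin the weights $|\mu_{\vzero,\vzero}(M_i)|^2,|\mu_{\va,\vb}(M_i)|^2$ near $1/4$), the same use of Eq.~(\ref{eq:pau.sum.1}) to conclude that the remaining Pauli coefficients are automatically small, the same reduction to Lemma~\ref{lem:pauliclose} with $\gamma,\delta=\Theta(\epsilon^2)$, and the same $O(1/\epsilon^4)$ count from Chernoff--Hoeffding. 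Where you genuinely diverge is the phase test. The paper (step 6 of Algorithm~\ref{alg:stabilizer}) measures the Pauli observable $\sigma_{\va,\vb}$ qubit-by-qubit on the first subsystem of $\ket{\tilde{v}(M_i)}$; the resulting probability, Eqs.~(\ref{eq:prsj3})--(\ref{eq:prsj4}), contains $\Re(\mu_{\vzero,\vzero}^*(M_i)\mu_{\va,\vb}(M_i))$ plus a leakage term $\lambda_i$ coming from all the other Pauli components, which must then be bounded via Eq.~(\ref{eq:lambdaj}) using the magnitude estimates from the earlier steps. Your rotated-basis measurement onto $\ket{w_\pm}=\tfrac{1}{\sqrt2}(\ket{v(\sigma_{\vzero,\vzero})}\pm\ket{v(\sigma_{\va,\vb})})$ instead isolates $\Re(\mu_{\vzero,\vzero}^*(M_i)\mu_{\va,\vb}(M_i))$ exactly (your computation of $\Pr(i,+)-\Pr(i,-)$ is correct), so no leakage bound is needed and the error propagation into Lemma~\ref{lem:pauliclose} is cleaner. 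What you pay for this is a less local measurement: the paper's step 6 uses only single-qubit Pauli measurements, whereas your basis is not a stabilizer basis (e.g.\ for $n=1$, $\va\vb$ labelling $Z$, it is $\{\ket{00},\ket{11},\ket{\Psi^+},\ket{\Psi^-}\}$, whose elements do not share a common stabilizer group), so your claim that the basis change is an $O(n)$-gate \emph{Clifford} circuit is not right. It is nonetheless implementable in $\poly(n)$ gates --- after the transversal Bell-basis rotation, a CNOT network sending $(\va,\vb)\mapsto e_1$ followed by a Hadamard on the first qubit controlled on the remaining $2n-1$ qubits being zero (or an adaptive two-stage measurement) does the job --- so the efficiency claim survives; only the ``Clifford'' label should be dropped.
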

\begin{proof}
Given a black box performing some unknown measurement $M=\{M_i\}_{i \in \N}$ on $n$ qubits, we run the following test on it: \\

\begin{algorithm}[H]
\caption{~~~Testing the stabilizer measurements}
\ben
\item Let $D=2^n$, $L=\left\lceil \dfrac{20000}{\epsilon^4}\right\rceil$, $N=\left\lfloor \left(\dfrac{1}{2}-\dfrac{\epsilon^2}{64}\right)L \right\rfloor$,
$T= \left\lceil 0.99 N  \right\rceil$, 
$W= \left\lceil \dfrac{12}{\epsilon^2}  \right\rceil$.
\item Prepare $L$ copies of 
$$~~~~~~~\ME=\dfrac{1}{\sqrt{D}}\sum\limits_{(i_1,i_2,\dots,i_n)\in \Ztn}\ket{i_1,i_2,\dots,i_n}\ket{i_1,i_2,\dots,i_n}.$$
\item Perform $M=\{M_i\}_{i \in \N}$ on the first subsystem of each copy of $\ME$. If we obtain only outcomes $1$ and $2$ in these measurements, and the fraction of outcome $1$ is between $\dfrac{1}{2}-\dfrac{\epsilon^2}{64}$ and
$\dfrac{1}{2}+\dfrac{\epsilon^2}{64}$, then continue; otherwise, reject $M$ and quit; 
\item Now we should have at least $N$ copies of $\ket{\tilde{v}(M_{1})}$ and 
$\ket{\tilde{v}(M_{2})}$ which are the post-measurement states corresponding to the outcome $1$ and $2$ respectively.
\item Choose $T$ copies of $\ket{\tilde{v}(M_{1})}$ and 
$\ket{\tilde{v}(M_{2})}$, and measure each of them in the basis $\{\ket{v(\sigma_{\vx,\vz})}\}_{\vx,\vz \in \Ztn}$. If we obtain only two different outcomes $(\textbf{0},\textbf{0})$ and $(\va,\vb)$ for some $\va,\vb \in \Ztn$ in these measurements, and the fraction of outcome $(\textbf{0},\textbf{0})$ is between $\dfrac{1}{2}-\dfrac{\epsilon^2}{64}$ and $\dfrac{1}{2}+\dfrac{\epsilon^2}{64}$ for both those measurements on $\ket{\tilde{v}(M_{1})}$ and those measurements on $\ket{\tilde{v}(M_{2})}$, then continue; otherwise, reject $M$ and quit. 
\item Select another $W$ copies of $\ket{\tilde{v}(M_{1})}$
and $\ket{\tilde{v}(M_{2})}$. For each copy, do the following: 
\ben
\item For $j=1,2,\dots,n$, if $a_j=b_j=0$, then set $s_j=1$; otherwise, measure the $j$-th qubit of the first subsystem in the $\sigma_{a_j,b_j}$ basis and set $s_j$ to be the outcome $+1$ or $-1$. 
\item If the measured state is $\ket{\tilde{v}(M_{1})}$ and $\prod\limits_{i=1}^n s_j=-1$,
or the measured state is $\ket{\tilde{v}(M_{2})}$ and
$\prod\limits_{i=1}^n s_j=1$, then reject $M$ and quit.
\een
\item Now $M$ has passed all the above tests. Accept $M$. 
\een 
\label{alg:stabilizer}
\end{algorithm}

For correctness, we need to prove: \\
(1) if $M$ is a stabilizer measurement, then this algorithm accepts $M$ with probability at least $2/3$;\\ 
(2) on the other hand, if this algorithm accepts $M$ with probability at least $1/3$, then $M$ is $\epsilon$-close to some stabilizer measurement. (Taking the contrapositive, we get that if $M$ is $\epsilon$-far away from any stabilizer measurement, then this algorithm accepts it with probability at most $1/3$.)

Before proving the two statements, observe that:
\ben
\item
In step 3, for $i=1,2$, the outcome $i$ should occur with probability $p(M_i)$. So, by Eq.(\ref{eq:chernoff2}) and our choice of $L$, the fraction of outcome $i$ is $\epsilon^2/64$-close to $p(M_i)$ with probability at least $0.95$.
\item
In step 5, for $i=1,2$, if the measured state is $\ket{\tilde{v}(M_i)}$, then the outcome $(\vzero,\vzero)$ (or $(\va,\vb)$) should occur with probability $q_{\vzero,\vzero}(M_{i})$ (or $q_{\va,\vb}(M_{i})$). So, by Eq.(\ref{eq:chernoff2}) and our choice of $T$, the fraction of outcome $(\vzero,\vzero)$ (or $(\va,\vb)$) is $\epsilon^2/64$-close to $q_{\vzero,\vzero}(M_{i})$ (or $q_{\va,\vb}(M_{i})$) with probability at least $0.95$.
\item 
In step 6, for $i=1,2$, if the measured state is $\ket{\tilde{v}(M_i)}$, then the $s_j$'s depend only on the reduced state of the first subsystem of $\ket{\tilde{v}(M_i)}$, which is 
\bea
\rho_i=\dfrac{M_iM_i^{\dagger}}{\tr (M_i^{\dagger}M_i)}.
\eea
Then it is easy to see
\beba
\Pr[\prod\limits_{j=1}^n s_j=1 | \rho_1]
&=&\tr (P_1(\va,\vb)\rho_1)\\
&=&\dfrac{\tr (P_1(\va,\vb)M_1M_1^{\dagger})}{\tr (M_1^{\dagger}M_1)},\\
\label{eq:prsjrho1}
\eeea
\beba
\Pr[\prod\limits_{j=1}^n s_j=-1 | \rho_2]
&=&\tr (P_2(\va,\vb)\rho_2)\\
&=&\dfrac{\tr (P_2(\va,\vb)M_2M_2^{\dagger})}{\tr (M_2^{\dagger}M_2)}.
\label{eq:prsjrho2}
\eeea
If $M$ passes the test with probability at least $1/3$, then we must have
\bea
\Pr[\prod\limits_{j=1}^n s_j=1 | \rho_1] \ge 1-\dfrac{\epsilon^2}{4},
\label{eq:prsj1}\\
\Pr[\prod\limits_{j=1}^n s_j=-1 | \rho_2] \ge 1-\dfrac{\epsilon^2}{4},
\label{eq:prsj2}
\eea 
because, if otherwise, step 6 would reject $M$ with probability at least $1-(1-\epsilon^2/4)^W \ge 1-e^{-\epsilon^2W/4} \ge 0.9$ by our choice of $W$.
\een

Now let us prove statement (1). Suppose $M=P(\va,\vb)$ for some $\va$, $\vb$ $\in \Ztn$. Since 
\bea
p(P_{1}(\va,\vb))=p(P_{2}(\va,\vb))=1/2,
\eea
by observation 1 step 3 rejects $P(\va,\vb)$ with probability at most $0.1$. Next, since
\bea
q_{\vzero,\vzero}(P_{1}(\va,\vb))
=q_{\va,\vb}(P_{1}(\va,\vb))
=\dfrac{1}{2}, \\
q_{\vzero,\vzero}(P_{2}(\va,\vb))
=q_{\va,\vb}(P_{2}(\va,\vb))
=\dfrac{1}{2}, 
\eea
by observation 2 step 5 rejects $P(\va,\vb)$ with probability at most $0.1$. Then by observation 3 step 6 never rejects $P(\va,\vb)$. So, overall, algorithm \ref{alg:stabilizer} accepts $P(\va,\vb)$ with probability at least $0.8$.

Next, we prove statement (2). Suppose $M$ is accepted by algorithm \ref{alg:stabilizer} with probability at least $1/3$. Then we claim that: \\
(i) $M$ satisfies Eqs.(\ref{eq:condm11}) and (\ref{eq:condm21}) with $\gamma=\epsilon^2/16$; \\
(ii) $M$ satisfies Eqs.(\ref{eq:condm12}) and (\ref{eq:condm22}) with
$\delta=\epsilon^2/4$. \\
Then, by lemma \ref{lem:pauliclose}, $M$ is $\epsilon$-close to $P(\va,\vb)$. 

To prove (i), it is sufficient to show
\bea
p(M_1),p(M_2) \in \left[\dfrac{1}{2}-\dfrac{\epsilon^2}{32}, \dfrac{1}{2}+\dfrac{\epsilon^2}{32}\right]
\label{eq:pm1pm2}
\eea
and
\bea
q_{\vzero,\vzero}(M_1),
q_{\va,\vb}(M_1) \in 
\left[\dfrac{1}{2}-\dfrac{\epsilon^2}{32}, \dfrac{1}{2}+\dfrac{\epsilon^2}{32}\right],
\label{eq:qm1}\\
q_{\vzero,\vzero}(M_2),
q_{\va,\vb}(M_2) 
\in 
\left[\dfrac{1}{2}-\dfrac{\epsilon^2}{32}, \dfrac{1}{2}+\dfrac{\epsilon^2}{32}\right],
\label{eq:qm2}
\eea
because if they are true, then
\beba
|\mu_{\vx,\vz}(M_i)|^2&=&p(M_i)q_{\vx,\vz}(M_i) \\
&\in &\left[\left(\dfrac{1}{2}-\dfrac{\epsilon^2}{32}\right)^2,\left(\dfrac{1}{2}+\dfrac{\epsilon^2}{32}\right)^2\right]\\
&\subset & 
\left[\dfrac{1}{4}-\dfrac{\epsilon^2}{16}, \dfrac{1}{4}+\dfrac{\epsilon^2}{16}\right],\\
&&\forall i=1,2,~\forall (\vx,\vz)=(\vzero,\vzero),(\va,\vb).
\label{eq:muvxvzmj}
\eeea 
Eq.(\ref{eq:pm1pm2}) holds because, if otherwise, then by observation 1 step 3 would reject $M$ with probability at least $0.95$, contradicting our assumption. Similarly, Eqs.(\ref{eq:qm1})
and (\ref{eq:qm2}) also hold because, if otherwise, then 
by observation 2 step 5 would reject $M$ with probability at least $0.95$, also contradicting our assumption.

Now it only remains to prove (ii). Plugging 
\beq
M_i=\sum\limits_{\vx,\vz \in \Ztn} \mu_{\vx,\vz}(M_i)\sigma_{\vx,\vz}
\eeq 
into Eqs.(\ref{eq:prsjrho1}) and (\ref{eq:prsjrho2}) gives
\beba
\Pr[\prod\limits_{j=1}^n s_j=1 | \rho_1]
&=&\dfrac{1}{2}
+\dfrac{2 \Re(\mu_{\vzero,\vzero}^*(M_1)
\mu_{\va,\vb}(M_1))+\lambda_1}{2p(M_1)},
\label{eq:prsj3}
\eeea
\beba
\Pr[\prod\limits_{j=1}^n s_j=-1 | \rho_2]
&=&\dfrac{1}{2}
-\dfrac{2 \Re(\mu_{\vzero,\vzero}^*(M_2)
\mu_{\va,\vb}(M_2))+\lambda_2}{2p(M_2)},
\label{eq:prsj4}
\eeea
where
\beba
\lambda_i \coloneqq \sum\limits_{(\vx,\vz)\neq (\vzero,\vzero),
(\va,\vb)} \beta_{\vx,\vz,\vx\oplus \va,\vz \oplus \vb} \mu_{\vx,\vz}(M_i)
\mu_{\vx\oplus \va,\vz \oplus \vb}^*(M_i)
\eeea
for $i=1,2$. Note that
\beba
|\lambda_i|
& \le &
\sum\limits_{(\vx,\vz)\neq (\vzero,\vzero),
(\va,\vb)} |\mu_{\vx,\vz}(M_i)|
\cdot |\mu_{\vx\oplus \va,\vz \oplus \vb}(M_i)|\\
&\le & 
\sum\limits_{(\vx,\vz)\neq (\vzero,\vzero),
(\va,\vb)} |\mu_{\vx,\vz}(M_i)|^2 \\
&=& p(M_i)-|\mu_{\vzero,\vzero}(M_i)|^2
-|\mu_{\va,\vb}(M_i)|^2 \\
&\le & \left(\dfrac{1}{2}+\dfrac{\epsilon^2}{32}\right)
-2\cdot \left(\dfrac{1}{4}-\dfrac{\epsilon^2}{16}\right) \\
&\le & \dfrac{3\epsilon^2}{16},
\label{eq:lambdaj}
\eeea
where in the second last step we use Eqs.(\ref{eq:pm1pm2}) and (\ref{eq:muvxvzmj}). Now by Eqs.(\ref{eq:prsj1}),
(\ref{eq:pm1pm2}), 
(\ref{eq:prsj3}) and (\ref{eq:lambdaj}), we obtain
\bea
\Re(\mu_{\vzero,\vzero}^*(M_1)
\mu_{\va,\vb}(M_1)) \ge \dfrac{1}{4}-\dfrac{\epsilon^2}{4},
\eea
Similarly, by Eqs.(\ref{eq:prsj2}), (\ref{eq:pm1pm2}),  (\ref{eq:prsj4}) and (\ref{eq:lambdaj}), we get
\bea
\Re(\mu_{\vzero,\vzero}^*(M_2)
\mu_{\va,\vb}(M_2)) \le -\dfrac{1}{4}+\dfrac{\epsilon^2}{4},
\eea
as desired.

Finally, algorithm \ref{alg:stabilizer} has query complexity $O(1/\epsilon^4)$. Moreover, besides querying the black box, it requires $O(n/\epsilon^4)$ quantum operations including: 1. preparing the state $\ME$, which is equivalent to preparing the $j$-th and $(n+j)$-th qubits in the Bell state for $j=1,2,\dots,n$; 2. measuring a $2n$-qubit state in the basis $\{\ket{v(\sigma_{\vx,\vz})}\}_{\vx,\vz \in \Ztn}$, which is equivalent to measuring the $j$-th and $(n+j)$-th qubit in the Bell basis for $j=1,2,\dots,n$; 3. measuring any Pauli operator on a qubit. In addition, the classical processing is easy. So algorithm \ref{alg:stabilizer} can be efficiently implemented. 
\end{proof}

The stabilizer code has been generalized to the qudit case where $d$ is any prime number. So we can also consider testing the stabilizer measurements on $n$ qudits, which are the projective measurements onto the $d$ eigenspaces of $\sigma_{\vx,\vz}$ for any $\vx,\vz \in \Zdn$. Algorithm \ref{alg:stabilizer} can be straightforwardly generalized to test these measurements. We basically follow the same pattern, except that now we have $d$ equally likely outcomes instead of two. The reader can easily work out the rest of the details. This generalized algorithm still has query complexity $O(1/\epsilon^4)$, and still can be efficiently implemented.

Finally, note that algorithm 1 does not only test whether $M$ is close to a stabilizer measurement, but also learns which stabilizer measurement $M$ is closest to. This is fortunate, but not generic. Usually learning an object
is much harder than testing it, as we need to acquire more structural information about the object.

\section{Testing the $k$-Local Measurements}
\label{sec:klocal} 
Given a measurement on an $n$-qudit system, it is natural ask whether this measurement truly involves all of the $n$ qudits. Namely, its outcome might depends only on a small subsystem. Specifically, Let $[n]\coloneqq\{1,2,\dots,n\}$. For any $T=\{j_1,j_2,\dots,j_m\} \subseteq [n]$, we  use $T$ to denote the subsystem consisting of the $j_1$-th, $j_2$-th, $\dots$, $j_m$-th qubits. We say that a measurement $M=\{M_i\}_{i \in \N}$ is \textit{$k$-local} if there exists some $T \subseteq [n]$ with $|T|=k$ such that
\bea
M_i=\tilde{M}_{i|T} \otimes I_{|T^c}
\label{eq:klocaldef}
\eea
for any $i \in \N$.  Here $\tilde{M}_i$ acts on the subsystem $T$ and $I$ acts on the complementary subsystem $T^c$.  
In this section, we will study the testing of these $k$-local measurements.

For convenience, we introduce the following notations. For any $\vx=(x_1,x_2,\dots,x_n), \vz=(z_1,z_2,\dots,z_n) \in \Zdn$, let 
\bea
\supp(\vx,\vz)=\{i \in [n]: x_i \neq 0~\textrm{or}~z_i \neq 0\}
\eea
be the support of $(\vx,\vz)$. Then let
\bea
\Gamma_T = \{(\vx,\vz): \vx,\vz \in \Zdn, ~\supp(\vx,\vz) \subseteq T\}.
\eea
Then  for any $A \in \BHdn$, we can write it as
\beba
A&=&f_T(A)+g_T(A),
\eeea
where
\bea
f_T(A) &\coloneqq & \sum\limits_{(\vx,\vz)\in \Gamma_T} \mu_{\vx,\vz}(A) \sigma_{\vx,\vz},\\
g_T(A) &\coloneqq &\sum\limits_{(\vx,\vz) \not \in \Gamma_T} \mu_{\vx,\vz}(A) \sigma_{\vx,\vz}.
\eea
While $f_T(A)$ acts non-trivially only on the subsystem $T$, $g_T(A)$  does not. Also, $f_T(A)$ and $g_T(A)$ are orthogonal with respect to the Hilbert-Schmidt product. Therefore,
\beba
\|A\|_F^2=\|f_T(A)\|_F^2+\|g_T(A)\|_F^2.
\eeea

With the above notations, a measurement $M=\{M_i\}_{i \in \N}$ is $k$-local if and only if there exists some $T \subseteq [n]$ with $|T|=k$ such that $M_i=f_T(M_i)$ for any $i \in \N$. The following lemma says that if this condition is approximately fulfilled, then $M$ is close to a $k$-local measurement:

\begin{lemma}
For any $0<\delta<1$, if a measurement $M=\{M_i\}_{i\in \N}$ on $n$ qudits satisfies
\bea
\sum\limits_{i\in \N} \|f_T(M_i)\|_F^2
\ge D(1-\delta^2),
\label{eq:sumgtmi}
\eea
where $D=d^n$, then it is $\delta$-close to a $|T|$-local measurement.
\label{lem:klocal}
\end{lemma}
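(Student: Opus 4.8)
The plan is to build an explicit $|T|$-local measurement $N$ that is $\delta$-close to $M$ by discarding the ``off-$T$'' Pauli content of each $M_i$ and then repairing completeness with one extra outcome. Since every $\sigma_{\vx,\vz}$ with $(\vx,\vz)\in\Gamma_T$ acts as the identity on $T^c$, the operator $f_T(M_i)$ automatically factorizes as $f_T(M_i)=\tilde M_i\otimes I_{T^c}$ for some $\tilde M_i$ on the subsystem $T$; thus each $f_T(M_i)$ already has $|T|$-local shape. The only defect is completeness: writing $E:=\sum_{i\in\N}\tilde M_i^\dagger\tilde M_i$ on $T$, in general $\sum_i f_T(M_i)^\dagger f_T(M_i)=E\otimes I_{T^c}\ne I$. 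My candidate is therefore $N_0:=(I_T-E)^{1/2}\otimes I_{T^c}$ together with $N_i:=f_T(M_i)$ for $i\ge1$, where I set $M_0:=0$ so that the two families are indexed compatibly.

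The crux — and the step I expect to be the main obstacle — is to show $E\le I_T$, so that $(I_T-E)^{1/2}$ is well defined and $N$ is a genuine measurement. The clean route is to note that $f_T$ is a Pauli twirl over $T^c$: $f_T(A)=\frac{1}{d^{2(n-|T|)}}\sum_{\alpha}U_\alpha A U_\alpha^\dagger$, where $U_\alpha$ ranges over the $d^{2(n-|T|)}$ Pauli operators supported on $T^c$ (averaging over all such conjugations annihilates exactly the Pauli components outside $\Gamma_T$). For each fixed $\alpha$, $\{U_\alpha M_iU_\alpha^\dagger\}_i$ is again a measurement, so $\sum_i(U_\alpha M_iU_\alpha^\dagger)^\dagger(U_\alpha M_iU_\alpha^\dagger)=I$. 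Since $X\mapsto X^\dagger X$ is operator convex — concretely $(\overline{A})^\dagger\overline{A}\le\overline{A^\dagger A}$ for any average, because $\sum_\alpha\lambda_\alpha(A_\alpha-\overline A)^\dagger(A_\alpha-\overline A)\ge0$ — applying this with $A_\alpha=U_\alpha M_iU_\alpha^\dagger$ and then summing over $i$ yields $\sum_i f_T(M_i)^\dagger f_T(M_i)\le\frac{1}{d^{2(n-|T|)}}\sum_\alpha I=I$. Restricting the product form $E\otimes I_{T^c}\le I$ to $T$ gives $E\le I_T$, as needed.

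With $N$ in hand I would bound the distance using $\Delta^2(A,B)=\frac{1}{2D}(\ip{A}{A}+\ip{B}{B}-2|\ip{A}{B}|)$. For $i\ge1$, orthogonality of the $f_T$/$g_T$ splitting gives $\ip{M_i}{N_i}=\ip{M_i}{f_T(M_i)}=\|f_T(M_i)\|_F^2=\|N_i\|_F^2$, so $\Delta^2(M_i,N_i)=\frac{1}{2D}(\|M_i\|_F^2-\|f_T(M_i)\|_F^2)=\frac{1}{2D}\|g_T(M_i)\|_F^2$. For the extra outcome, $\Delta^2(M_0,N_0)=\frac{1}{2D}\|N_0\|_F^2=\frac{1}{2D}\tr(I_T-E)\,d^{n-|T|}$. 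Summing, and using $\sum_i\|M_i\|_F^2=\tr(\sum_iM_i^\dagger M_i)=D$ together with the hypothesis to get $\sum_i\|g_T(M_i)\|_F^2\le D\delta^2$, and using $\tr(I_T-E)=d^{|T|}-\tr E\le d^{|T|}\delta^2$ (again from the hypothesis, since $\tr E\,d^{n-|T|}=\sum_i\|f_T(M_i)\|_F^2\ge D(1-\delta^2)$), I obtain $\Delta^2(M,N)\le\frac{\delta^2}{2}+\frac{\delta^2}{2}=\delta^2$, i.e. $\Delta(M,N)\le\delta$.

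The genuinely nontrivial ingredient is the operator inequality $E\le I_T$; everything else is bookkeeping with the Frobenius norm and the orthogonality of the $f_T$/$g_T$ decomposition. An alternative to the extra-outcome trick would be to normalize via $\tilde M_iE^{-1/2}$, but this requires inverting $E$ and then separately controlling how far the normalization displaces each operator, which is messier than simply absorbing the deficit $I_T-E$ into one fresh outcome; I would therefore favor the completion construction above.
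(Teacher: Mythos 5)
Your construction of the local measurement $N$ is exactly the one the paper uses (keep $N_i=f_T(M_i)$ and repair completeness with a single extra outcome $\bigl(I_T-E\bigr)^{1/2}\otimes I_{T^c}$), and your distance bookkeeping matches the paper's line for line, so the only genuine divergence is in how you establish the key operator inequality $\sum_i f_T^{\dagger}(M_i)f_T(M_i)\le I$. The paper proves it by testing against product states: it takes an arbitrary pure state $\ket{\psi}$ on $T$ tensored with the maximally mixed state on $T^c$, expands $1=\tr\bigl[\bigl(\sum_i M_i^{\dagger}M_i\bigr)(\ketbra{\psi}{\psi}\otimes\rho)\bigr]$ using $M_i=\tilde f_T(M_i)\otimes I+g_T(M_i)$, observes that the cross terms vanish and the $g_T^{\dagger}g_T$ term is nonnegative, and concludes $\bra{\psi}E\ket{\psi}\le 1$ for all $\ket{\psi}$. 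You instead realize $f_T$ as a Pauli twirl over $T^c$ and invoke the operator Cauchy--Schwarz inequality $(\overline{A})^{\dagger}\overline{A}\le\overline{A^{\dagger}A}$ for averages of conjugated measurement operators; this is correct (the phase cancellation in the qudit Pauli twirl does annihilate exactly the components outside $\Gamma_T$, and conjugation preserves the completeness equation). Your route is somewhat more structural — it would transfer verbatim to any projection realizable as an average over a unitary group, which is essentially what the paper's Lemma 4 on permutation invariance needs as well — while the paper's argument is more elementary, requiring nothing beyond positivity of $g_T^{\dagger}(M_i)g_T(M_i)$ and the vanishing of partial traces of nontrivial Paulis. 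Both are complete; your equality $\Delta^2(M_i,N_i)=\frac{1}{2D}\|g_T(M_i)\|_F^2$ is in fact marginally sharper than the paper's upper bound, though it makes no difference to the final estimate.
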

\begin{proof}
See Appendix \ref{apd:lemklocal}.
\end{proof}

With the help of lemma \ref{lem:klocal}, we have the following result on testing the $k$-local measurements.

\begin{theorem}
The $k$-local measurements on $n$ qudits can be $\epsilon$-tested with query
complexity $O(k \log (k/\epsilon)/ \epsilon^2)$. Furthermore, the testing algorithm can be efficiently implemented.
\end{theorem}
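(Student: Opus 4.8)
The plan is to reduce testing $k$-locality to a purely classical question about a distribution over subsets of $[n]$, and to sample from that distribution with one draw per query using the Choi--Jamio\l kowski isomorphism together with a Pauli-basis measurement. Concretely, I would perform the unknown $M=\{M_i\}_{i\in\N}$ on the first subsystem of $\ME$, obtaining outcome $i$ with probability $p(M_i)$ and the post-measurement state $\ket{\tilde{v}(M_i)}$, and then measure that state in the basis $\{\ket{v(\sigma_{\vx,\vz})}\}_{\vx,\vz}$, obtaining a label $(\vx,\vz)$ with conditional probability $q_{\vx,\vz}(M_i)=|\mu_{\vx,\vz}(M_i)|^2/p(M_i)$. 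Unconditionally the label $(\vx,\vz)$ then appears with probability $\sum_{i\in\N}|\mu_{\vx,\vz}(M_i)|^2$, which by the completeness equation is a genuine distribution on $\Ztn\times\Ztn$ (or $\Zdn\times\Zdn$). Recording only the support $S=\supp(\vx,\vz)$ yields, with a single query, one sample from an induced distribution $\mathcal D$ over subsets of $[n]$.

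Next I would record the dictionary between $\mathcal D$ and Lemma~\ref{lem:klocal}. For a fixed $T$, $\Pr_{S\sim\mathcal D}[S\subseteq T]=\sum_{i\in\N}\sum_{(\vx,\vz)\in\Gamma_T}|\mu_{\vx,\vz}(M_i)|^2=\tfrac1D\sum_{i\in\N}\|f_T(M_i)\|_F^2$, so the hypothesis of Lemma~\ref{lem:klocal} with $\delta=\epsilon$ is precisely $\Pr[S\not\subseteq T]\le\epsilon^2$. Thus if $M$ is $k$-local with locality set $T$ then $\mathcal D$ is supported on subsets of $T$, whereas if $M$ is $\epsilon$-far from every $k$-local measurement then, applying the contrapositive of Lemma~\ref{lem:klocal} to each candidate $T$, we get $\Pr[S\not\subseteq T]>\epsilon^2$ for \emph{every} $T$ with $|T|=k$. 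Testing $k$-locality therefore becomes the classical task of deciding whether $\mathcal D$ lives inside some common size-$k$ set or escapes every such set with probability more than $\epsilon^2$.

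The algorithm I would use is the natural one: draw samples $S_1,S_2,\dots$ as above, maintain the running union $\hat T=\bigcup_j S_j$, reject as soon as $|\hat T|>k$, and accept if the union never exceeds $k$. Completeness is immediate and error-free, since for a $k$-local $M$ every $S_j\subseteq T$ forces $|\hat T|\le k$. For soundness I would argue adaptively, in $k+1$ phases. Whenever the current union $U$ has $|U|\le k$, fix any size-$k$ set $T\supseteq U$; since $\Pr[S\not\subseteq T]>\epsilon^2$, each fresh sample enlarges $U$ by a coordinate outside $T$ (hence new to $U$) with probability exceeding $\epsilon^2$. Consequently, within a block of $W'=O(\log(k/\epsilon)/\epsilon^2)$ samples the union grows at least once except with probability $(1-\epsilon^2)^{W'}\le\tfrac{1}{3(k+1)}$; running $k+1$ such blocks and taking a union bound, the union grows in every block, and thus reaches size $k+1$, with probability at least $2/3$. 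The total query count is $(k+1)W'=O(k\log(k/\epsilon)/\epsilon^2)$, and a Chernoff bound (Eq.~(\ref{eq:chernoff2})) controls the sampling error within each block.

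I expect the main obstacle to be \emph{dimension-independence}. The event ``$|\hat T|\le k$'' is the union over the $\binom{n}{k}$ candidate sets $T$ of the events ``all samples lie in $T$,'' so a non-adaptive union bound would cost a factor $\log\binom{n}{k}=\Theta(k\log n)$ and reintroduce dependence on the number of qudits $n$ (equivalently on $D$). The adaptive phased analysis above is exactly what removes this: it never commits to a candidate $T$ in advance but charges each useful sample to a genuinely new coordinate, so only $O(k)$ progress events are required and the $\binom{n}{k}$ is avoided. For efficient implementability I would finally note that each query consists of preparing $\ME$ as $n$ Bell pairs across the $j$-th and $(n+j)$-th qudits, applying the black box, and measuring in the $\{\ket{v(\sigma_{\vx,\vz})}\}$ basis by a transversal Bell-basis measurement, followed by trivial classical bookkeeping of supports and their union, all running in time $\poly(n,1/\epsilon)=\poly(\log D,1/\epsilon)$.
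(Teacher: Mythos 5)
Your proposal is correct and follows the paper's route essentially step for step: the same Choi--Jamio\l kowski plus Pauli-basis sampling, the same identification $\Pr[S\subseteq T]=\tfrac{1}{D}\sum_{i}\|f_T(M_i)\|_F^2$, the same appeal to Lemma~\ref{lem:klocal}, the same error-free completeness argument, and the same accept/reject rule based on the running union of supports (rejecting early versus checking at the end is immaterial). The one point of genuine divergence is the soundness tail bound. The paper fixes $L=O(k\log(k/\epsilon)/\epsilon^2)$ samples up front and bounds the acceptance probability by a union bound over the $\binom{L}{k}$ choices of which $k$ samples could determine the final union, obtaining $\binom{L}{k}(1-\epsilon^2)^{L-k}\le L^{k}e^{-\epsilon^2(L-k)}\le 0.1$. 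You instead give an adaptive, phased argument: in each of $k+1$ blocks you condition on the current union $U$, pick any size-$k$ set $T\supseteq U$, and use $\Pr[S\not\subseteq T]>\epsilon^2$ (the contrapositive of Lemma~\ref{lem:klocal} applied to every candidate $T$) to conclude that the union fails to grow during a block of $W'$ samples with probability at most $(1-\epsilon^2)^{W'}$; since $T$ is determined by the history and the samples are i.i.d., this conditional bound is valid, and a union bound over the $k+1$ blocks forces the union past size $k$. Both analyses correctly sidestep the $\binom{n}{k}$ union bound over candidate locality sets, which is exactly what makes the query complexity dimension-independent, and both give $O(k\log(k/\epsilon)/\epsilon^2)$ queries; your phased version in fact only needs each block to fail with probability $\tfrac{1}{3(k+1)}$, so $W'=O(\log(k)/\epsilon^2)$ suffices and the total is $O(k\log(k)/\epsilon^2)$, marginally sharpening the stated bound. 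One cosmetic remark: the Chernoff bound you invoke at the end of the soundness argument is not actually needed, since the bound $(1-\epsilon^2)^{W'}$ already follows directly from conditional independence of the samples within a block.
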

\begin{proof}
Given a black box performing some unknown measurement $M=\{M_i\}_{i \in \N}$ on $n$ qudits, we run the following test on it: 

\begin{algorithm}[H]
\caption{~~~Testing the $k$-local measurements}
\ben
\item Let $D=d^n$, $L=\left\lceil \dfrac{1200k}{\epsilon^2}\left[\ln\left(\dfrac{k}{\epsilon}\right)+1\right]\right\rceil$.
\item Prepare $L$ copies of $$~~~~~~~~\ME=\dfrac{1}{\sqrt{D}}\sum\limits_{(i_1,i_2,\dots,i_n)\in \Zdn}\ket{i_1,i_2,\dots,i_n}\ket{i_1,i_2,\dots,i_n}.
$$
\item For $j=1,2,\dots, L$, perform $M=\{M_i\}_{i \in \N}$ on the first subsystem of the $j$-th copy of $\ME$, then, no matter which outcome is obtained, measure the post-measurement state in the basis $\{\ket{v(\sigma_{\vx,\vz})}\}_{\vx,\vz \in \Zdn}$. Let $(\vx_j,\vz_j)$ be the outcome.
\item Accept $M$ if and only if $|\bigcup_{j=1}^L \supp(\vx_j,\vz_j)| \le k$. 
\een 
\label{alg:klocal}  
\end{algorithm}

For correctness, it is sufficient to show:\\
(1) if $M$ is $k$-local, then this algorithm accepts $M$ with certainty;\\
(2) if $M$ is accepted by this algorithm with probability at least $1/3$, then $M$ is $\epsilon$-close to some $k$-local measurement.\\

Before prove the two statements, observe that when we perform $M=\{M_i\}_{i \in \N}$ on the first subsystem of $\ME$, with probability $p(M_i)$ we obtain outcome $i$ and the corresponding post-measurement state is $\ket{\tilde{v}(M_i)}$. If this post-measurement state is measured in the basis $\{\ket{v(\sigma_{\vx,\vz})}\}_{\vx,\vz \in \Zdn}$, then the outcome $(\vx,\vz)$ occurs with probability $q_{\vx,\vz}(M_i)$. So, in step 3, the probability of $(\vx_j,\vz_j)$ being some $(\vx,\vz)$ is 
\beba
\xi_{\vx,\vz}(M) &\coloneqq & \sum\limits_{i \in \N} p(M_i)q_{\vx,\vz}(M_i) \\
&=& \sum\limits_{i \in \N} |\mu_{\vx,\vz}(M_i)|^2.
\label{eq:xivxvz}
\eeea
Hence, the probability of $(\vx_j,\vz_j)$ being from $\Gamma_T$
is
\beba
\eta_T(M)& \coloneqq & 
\sum\limits_{(\vx,\vz) \in \Gamma_T} \xi_{\vx,\vz}(M)\\
&=& \sum\limits_{i \in \N}
\sum\limits_{(\vx,\vz) \in \Gamma_T}
|\mu_{\vx,\vz}(M_i)|^2 \\
&=& \dfrac{1}{D}\sum\limits_{i \in \N}
\|f_T(M_i)\|_F^2.
\label{eq:pT}
\eeea
This holds for any $j=1,2,\dots,L$.

Now let us prove statement (1). Suppose $M$ is $k$-local, i.e. $M_i$'s satisfy Eq.(\ref{eq:klocaldef}) for some $T \subseteq [n]$ with $|T|=k$. Then for any $i \in \N$, $\mu_{\vx,\vz}(M_i)$ is nonzero only for $(\vx,\vz) \in \Gamma_T$. This implies that in step 3 we have $(\vx_j,\vz_j) \in \Gamma_T$ for any $j$. So $\bigcup_{j=1}^L \supp(\vx_j,\vz_j)\subseteq T$ and $M$ is always accepted by algorithm \ref{alg:klocal}.

Next, we prove statement (2). Suppose $M$ is accepted by the algorithm with probability at least $1/3$. Then we claim that there exists $T \subseteq [n]$ with $|T|=k$ such that 
\bea
\eta_T(M) \ge 1-\epsilon^2.
\label{eq:etat}
\eea 
Suppose on the contrary that for any such $T$,
$\eta_T(M)<1-\epsilon^2$. Then $M$ must be rejected by the algorithm with probability at least $0.9$. The reason is as follows. The algorithm accepts $M$ only if 
$(\vx_1,\vz_1),(\vx_2,\vz_2),\dots,(\vx_L,\vz_L)$
satisfy $|\bigcup_{j=1}^L \supp(\vx_j,\vz_j)| \le k$.
For any such $(\vx_j,\vz_j)$'s, we can find $k$ numbers $1 \le j_1<j_2<\dots<j_k \le L$ such that 
\beba
\bigcup\limits_{j=1}^L \supp(\vx_j,\vz_j)
=\bigcup\limits_{i=1}^k \supp(\vx_{j_i},\vz_{j_i}).
\eeea
(For each $a \in \bigcup_{j=1}^L \supp(\vx_j,\vz_j)$, we choose a
$(\vx_{j_i},\vz_{j_i})$ such that $\supp(\vx_{j_i},\vz_{j_i})$ contains $a$. Since there are at most $k$ elements in $\bigcup_{j=1}^L \supp(\vx_j,\vz_j)$, $k$ such $(\vx_{j_i},\vz_{j_i})$'s are sufficient.)
Now there can be $\binom{L}{k}$ possibilities for $j_1,j_2,\dots,j_k$, and 
when the $(\vx_{j_i},\vz_{j_i})$'s are determined, the total support $W\coloneqq\bigcup_{j=1}^L \supp(\vx_j,\vz_j)$ is fixed, and then any $(\vx_{j},\vz_{j})$
for $j\neq j_1,\dots,j_L$ must be chosen from $\Gamma_W$, with probability is at most $1-\epsilon^2$ by assumption. Thus, the total probability of getting a valid sequence of $(\vx_j,\vz_j)$'s is at most
$\binom{L}{k}(1-\epsilon^2)^{L-k}\le
L^k e^{-\epsilon^2(L-k)} \le 0.1$ by our choice
of $L$.

Now, by Eqs.(\ref{eq:pT}) and (\ref{eq:etat}), 
\bea
\sum\limits_{i\in \N} \|f_T(M_i)\|_F^2
\ge D(1-\epsilon^2) 
\eea
for $|T|=k$. Then, by lemma \ref{lem:klocal}, $M$ is $\epsilon$-close to a $k$-local measurement.

Finally, algorithm \ref{alg:klocal} obviously has query complexity $O(k \log (k/\epsilon)/ \epsilon^2)$ as claimed. Moreover, besides querying the black box, it requires $O(k \log (k/\epsilon)/ \epsilon^2)$ quantum operations including: 1. preparing the state $\ME$, which is equivalent to preparing the $j$-th and $(n+j)$-th qudits in the $d$-dimensional maximally entangled state
for $j=1,2,\dots,n$; 2. measuring a $2n$-qudit state in the basis $\{\ket{v(\sigma_{\vx,\vz})}\}_{\vx,\vz \in \Zdn}$, which is equivalent to measuring the $j$-th and $(n+j)$-th qubits in the basis 
$\{\ket{v(\sigma_{x,z})}\}_{x,z\in \Zd}$ for $j=1,2,\dots,n$. In addition, the classical processing is also efficient. Thus, this algorithm can be efficiently implemented.
\end{proof} 

\section{Testing the Permutation-Invariant Measurements}
\label{sec:perminv}
In this section, we consider testing a class of multi-qudit measurements possessing certain symmetry with respect to the permutations of the qudits. This kind of measurements have been widely used in many tasks such as quantum data compression  \cite{Sch95,JS94,JHH+98,HM02a,BHL06}, state estimation  \cite{VLP+99,KW01,BBG+06} and entanglement concentration  \cite{BBP+96,HM02b}. Usually we perform such measurements on $\rho^{\otimes n}$ to extract certain information from it, or to transform it into a better form, where $\rho$ is some mixed state that we are interested in. For example, in quantum data compression, suppose an i.i.d. quantum source produces a $d$-dimensional mixed state $\rho$ each time. We first collect $n$ copies of this state, and then compress $\rho^{\otimes n}$ into a smaller Hilbert space whose dimension is approximately $2^{nS(\rho)}$. Here $S(\rho)$ is the von Neumann entropy of $\rho$, i.e. if $\rho$ has the spectral decomposition 
\bea
\rho=\sum\limits_{x=0}^{d-1} p(x)\ketbra{x}{x},
\eea
then 
\bea
S(\rho)=-\sum\limits_{x=0}^{d-1}p(x)\log_2 p(x).
\eea
The typical method \cite{Sch95,JS94} is that we perform a projective measurement $\{\Pi_{\epsilon},I-\Pi_{\epsilon}\}$ on $\rho^{\otimes n}$, and if the outcome corresponds to $\Pi_{\epsilon}$, then we succeed. Here $\Pi_{\epsilon}$ is the projection operator onto the so-called ``$\epsilon$-typical subspace". This subspace is spanned by all $\ket{x_1,x_2,\dots,x_n}$'s  satisfying
\bea
\left |\dfrac{1}{n}\log_2\left(\dfrac{1}{p(x_1)p(x_2)\dots p(x_n)}\right) -S(\rho) \right| \le \epsilon.
\eea
Note that this condition depends only on how many $x_i$'s are $0$, $1$, $\dots$, $d-1$ respectively. Therefore, this $\epsilon$-typical subspace is invariant under any permutation of the $n$ qudits. In other words, the measurement $\{\Pi_{\epsilon},I-\Pi_{\epsilon}\}$ treats every qudit equally. This property is shared by many other measurements, including those used in other quantum data compression methods  \cite{JHH+98,HM02a,BHL06}, state estimation  \cite{VLP+99,KW01,BBG+06} and entanglement concentration  \cite{BBP+96,HM02b}. Thus, as a preliminary step of testing these measurements, we can test whether an unknown measurement is permutation-invariant in the first place.

Formally, let $S_n$ be the symmetric group on $n$ elements. For any $\tau \in S_n$, it can be viewed as a unitary operation on $n$ qudits as follows:
\bea
\tau \ket{\phi_1,\phi_2,\dots,\phi_n}
=\ket{\phi_{\tau^{-1}(1)}, \phi_{\tau^{-1}(2)}, \dots, \phi_{\tau^{-1}(n)}},
\eea
where the $\ket{\phi_i}$'s are arbitrary qudit states. We say an operator $A \in \BHdn$ is \textit{permutation-invariant} if
\bea
A=\tau A \tau^{\dagger},~~~\forall \tau \in S_n.
\label{eq:perm.inv.1}
\eea
Note that this condition is equivalent to
\bea
A=\sum\limits_{k} \mu_k E_k^{\otimes n}
\label{eq:perm.inv.2}
\eea
for some $\mu_k \in \C$ and $E_k \in \BHd$. A measurement $M=\{M_i\}_{i \in \N}$ on $n$ qudits is said to be permutation-invariant if $M_i$ is permutation-invariant for each $i \in \N$. In this case, if we first apply a permutation to the qudits, then perform $M$, and at last apply the inverse permutation, then the effect is equivalent to directly performing $M$.

\subsection{Representation Theory Background}
Our algorithm for testing the permutation-invariant measurements depends crucially on the following results from the representation theory of symmetric group. Let $d,n$ be arbitrary integers. Let 
\beba
\I_{d,n}&=&\{\vla=(\lambda_1,\lambda_2,\dots,\lambda_d): \lambda_i \in \Z, ~\lambda_1 \ge \lambda_2 \ge \dots  \\
&& \ge \lambda_d \ge 0,~~\sum\limits_{i=1}^d \lambda_i=n.\}
\eeea
be the set of partitions of $n$ into at most $d$ parts. A Young diagram of shape $\vla \in \I_{d,n}$ (denoted by $Y(\vla)$) is a finite collection of boxes arranged in left-justified rows, with the $i$-th row having $\lambda_i$ boxes for $i=1,2,\dots,d$. Fox example, Fig.\ref{fig:youngdiagram} illustrates the Young diagram of shape $(5,3,1)$.
\begin{figure}[H]
\centering
\setlength{\unitlength}{0.3mm}
\begin{picture}(200,80)
\put(40,0){\line(0,1){60}}
\put(60,0){\line(0,1){60}}
\put(80,20){\line(0,1){40}}
\put(100,20){\line(0,1){40}}
\put(120,40){\line(0,1){20}}
\put(140,40){\line(0,1){20}}
\put(40,60){\line(1,0){100}}
\put(40,40){\line(1,0){100}}
\put(40,20){\line(1,0){60}}
\put(40,0){\line(1,0){20}}
\end{picture}
\caption{The Young diagram of shape $(5,3,1)$.}
\label{fig:youngdiagram}
\end{figure}
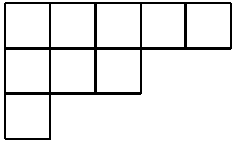
Each $\vla \in \I_{d,n}$ (or the Young diagram $Y(\vla)$) corresponds to an irreducible representation $V_{\vla}$ of the symmetric group $S_n$, and also corresponds to an irreducible representation $W_{\vla,d}$ of the general linear group $GL(d)$ of $d \times d$ invertible matrices. Furthermore, we can calculate the dimension of $V_{\vla}$ and $W_{\vla,d}$ as follows. For a box in the $i$-th row and $j$-th column of a Young diagram $Y(\vla)$, we define its \textit{hook length} $\hook(i,j)$ to be the number of boxes that are in the same row to the right of it or in the same column below it plus one. For example, Fig. \ref{fig:hooklength} illustrates the hook lengths of each box in the Young diagram of shape $(5,3,1)$.
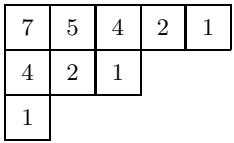
\begin{figure}[H]
\centering
\setlength{\unitlength}{0.3mm}
\begin{picture}(200,80)
\put(40,0){\line(0,1){60}}
\put(60,0){\line(0,1){60}}
\put(80,20){\line(0,1){40}}
\put(100,20){\line(0,1){40}}
\put(120,40){\line(0,1){20}}
\put(140,40){\line(0,1){20}}
\put(40,60){\line(1,0){100}}
\put(40,40){\line(1,0){100}}
\put(40,20){\line(1,0){60}}
\put(40,0){\line(1,0){20}}
\put(40,0){\makebox(20,20){1}}
\put(40,20){\makebox(20,20){4}}
\put(40,40){\makebox(20,20){7}}
\put(60,20){\makebox(20,20){2}}
\put(60,40){\makebox(20,20){5}}
\put(80,20){\makebox(20,20){1}}
\put(80,40){\makebox(20,20){4}}
\put(100,40){\makebox(20,20){2}}
\put(120,40){\makebox(20,20){1}}
\end{picture}
\caption{The hook lengths of each box in the Young diagram
of shape $(5,3,1)$}
\label{fig:hooklength}
\end{figure}

Then the dimension of $V_{\vla}$ is given by
\bea
v_{\vla} \coloneqq \dfrac{n!}{\prod\limits_{(i,j) \in Y(\vla)} \hook(i,j)},
\eea
while the dimension of $W_{\vla,d}$ is given by
\beba
w_{\vla,d}& \coloneqq &\prod\limits_{(i,j) \in Y(\vla)} \dfrac{d+j-i}{ \hook(i,j)}.
\eeea
For more details about $V_{\vla}$ and $W_{\vla,d}$, please see Refs.\cite{FH91,Ful97}.

The Schur-Weyl duality \cite{Sch01,Sch27,Wey39} states that there exists an orthonormal basis for $\Hdn$, which is called the \textit{Schur basis} and lablled by $\ket{\vla}\ket{a_{\vla}}\ket{b_{\vla}}_{\sch}$ for $\vla \in \I_{d,n}$, $0 \le a_{\vla} \le w_{\vla,d}-1$, $0 \le b_{\vla} \le v_{\vla}-1$, such that for any $\tau \in S_n$ and $E \in GL(d)$ we have
\bea
\tau \ket{\vla}\ket{a_{\vla}}\ket{b_{\vla}}_{\sch}
&=& 
\ket{\vla}\ket{a_{\vla}}(V_{\vla}(\tau)\ket{b_{\vla}})_{\sch},
\label{eq:swdual1}\\
E^{\otimes n} \ket{\vla}\ket{a_{\vla}}\ket{b_{\vla}}_{\sch}
&=& 
\ket{\vla}(W_{\vla,d}(E)\ket{a_{\vla}})\ket{b_{\vla}}_{\sch}.
\label{eq:swdual2}
\eea
In other words, the space $\Hdn$ can be decomposed as
\bea
\Hdn=\bigoplus\limits_{\vla \in \I_{d,n}}
(\W_{\vla,d} \otimes \V_{\vla})_{\sch},
\eea 
where $\W_{\vla,d}$ is $w_{\vla,d}$-dimensional and $\V_{\vla}$ is $v_{\vla}$-dimensional, such that
any $\tau \in S_n$ and $E \in GL(d)$ satisfy
\bea
\tau &=&\bigoplus\limits_{\vla \in \I_{d,n}} (I \otimes V_{\vla}(\tau))_{\sch}, 
\label{eq:dual1}\\
E^{\otimes n}
&=&\bigoplus\limits_{\vla \in \I_{d,n}} (W_{\vla,d}(E) \otimes I)_{\sch}.
\label{eq:dual2}
\eea
Since $W_{\vla,d}$ is a continuous mapping from $GL(d)$ to $GL(w_{\vla,d})$, and $GL(d)$ is a dense subset of the $d \times d$ matrices, we can naturally extend $W_{\vla,d}$ to all $d \times d$ matrices. Then for any permutation-invariant $A \in \BHdn$, if
\bea
A=\sum_k \mu_k E_k^{\otimes k}, 
\eea
let
\bea
\hat{A}_{\vla}\coloneqq
\sum_k \mu_k W_{\vla,d}(E_k).
\eea
Then
\bea
A\ket{\vla}\ket{a_{\vla}}\ket{b_{\vla}}_{\sch}
&=& 
\ket{\vla}(\hat{A}_{\vla}\ket{a_{\vla}})\ket{b_{\vla}}_{\sch},
\eea
and
\beba
A&=&\bigoplus\limits_{\vla \in \I_{d,n}} (\hat{A}_{\vla} \otimes I)_{\sch}.
\label{eq:blockdiag}
\eeea
On the other hand, it is easy to see that any $A \in \BHdn$ satisfying Eq.(\ref{eq:blockdiag}) commutes with any $\tau \in S_n$ (which satisfies Eq.(\ref{eq:dual1})), since they act on disjoint subsystems. Hence, $A \in \BHdn$ is permutation-invariant if and only if it has the block-diagonal form in Eq.(\ref{eq:blockdiag}) with respect to the Schur basis.

Finally, let $\Lambda_{d,n} \coloneqq \{(\vla,a_{\vla},b_{\vla}):\vla \in \I_{d,n},0 \le a_{\vla} \le w_{\vla,d}-1, 0 \le b_{\vla} \le v_{\vla}-1\}$, and fix a bijection $h:\Zdn \to \Lambda_{d,n}$.
Then the standard basis state $\ket{i_1,i_2,\dots,i_n}$ can be also labelled as $\ket{\vla,a_{\vla},b_{\vla}}$,  provided that $h(i_1,i_2,\dots,i_n)=(\vla,a_{\vla},b_{\vla})$. We use $U_{\sch}$ to denote the unitary operation that converts the standard basis to the Schur basis, and call it the \textit{Schur transform}  \cite{BCH07}. Namely,
if $h(i_1,i_2,\dots,i_n)=(\vla,a_{\vla},b_{\vla})$, then
\bea
U_{\sch}\ket{i_1,i_2,\dots,i_n}= 
\ket{\vla,a_{\vla},b_{\vla}}_{\sch}.
\eea
Its inverse $U_{\sch}^{-1}$ is called \textit{the inverse Schur transform}. Both $U_{\sch}$ and $U_{\sch}^{-1}$ can be (approximately) implemented with time complexity polynomial in $n$, $d$ and $\log(1/\epsilon)$, where $\epsilon$ is the accuracy parameter \cite{BCH07}. 

\subsection{Our Result on Testing the Permutation-Invariant Measurements}
It will be convenient to establish the following notations. For any $A \in \BHdn$, it can be written as 
\beba
A=\sum\limits_{\vla,\vla' \in \I_{d,n}} (\ketbra{\vla}{\vla'}
\otimes A_{\vla,\vla'})_{\sch}
\eeea
for some $A_{\vla,\vla'}$'s. We are particularly interested in the diagonal term $A_{\vla,\vla} \in \mathcal{B}(\W_{\vla,d} \otimes \V_{\vla})$.
Let $\{g_{\vla,j}\}_{0 \le j \le v_{\vla}^2-1}$ be the generalized Pauli operators on $\V_{\vla}$ such that $g_{\vla,0}=I$. Then we can expand $A_{\vla,\vla}$
as 
\beba
A_{\vla,\vla}
=\hat{A}_{\vla}\otimes I
+\sum\limits_{j \neq 0} \tilde{A}_{\vla,j} \otimes g_{\vla,j}
\eeea
for some $\hat{A}_{\vla}$ and $\tilde{A}_{\vla,j}$'s. 
So $A$ can be written as the sum of three terms:
\bea
A = (\hat{A} + \bar{A} + \tilde{A})_{\sch},
\eea
where
\bea
\hat{A} &\coloneqq &\bigoplus\limits_{\vla \in \I_{d,n}} \hat{A}_{\vla} \otimes I, \\
\tilde{A} &\coloneqq & 
\bigoplus\limits_{\vla \in \I_{d,n}}
(\sum\limits_{j \neq 0} \tilde{A}_{\vla,j} \otimes g_{\vla,j}),\\
\bar{A} &\coloneqq &\sum\limits_{\vla \neq \vla'}
\ketbra{\vla}{\vla'} \otimes A_{\vla,\vla'}.
\eea
While $\hat{A}$ is permutation-invariant, $\tilde{A}$ and $\bar{A}$ 
are not. Furthermore, $\hat{A}$, $\tilde{A}$ and $\bar{A}$ are mutually orthogonal with respect to the Hilbert-Schmidt product. Thus,
\bea
\|A\|_F^2=\|\hat{A}\|_F^2+\|\tilde{A}\|_F^2+\|\bar{A}\|_F^2.
\eea

With these notations, a measurement $M=\{M_i\}_{i \in \N}$ is permutation-invariant if and only if $M_i=(\hat{M}_i)_{\sch}$ for any $i \in \N$. The next lemma shows that if this is approximately true, then $M$ is close to a permutation-invariant measurement.

\begin{lemma}
For any $0<\delta<1$, if a measurement $M=\{M_i\}_{i \in \N}$ on $n$ qudits satisfies
\bea
\sum\limits_{i \in \N} \|\hat{M}_i\|_F^2 \ge D(1-\delta^2),
\eea
where $D=d^n$, then $M$ is $\delta$-close to a permutation-invariant measurement.
\label{lem:perminv}
\end{lemma}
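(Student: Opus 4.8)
The plan is to build, by hand, a permutation-invariant measurement $N=\{N_i\}_{i\in\N}$ lying within distance $\delta$ of $M$, obtained by discarding the non-invariant parts of each $M_i$ and then repairing the completeness relation with a single extra outcome. The starting observation is that $\hat{M}_i$ is exactly the image of $M_i$ under the twirl $\T(A):=\frac{1}{n!}\sum_{\tau\in S_n}\tau A\tau^{\dagger}$: writing $A$ in the Schur basis and using Eqs.~(\ref{eq:dual1}) and (\ref{eq:dual2}) together with Schur orthogonality, averaging over $S_n$ annihilates the off-diagonal blocks $\bar{M}_i$ (since distinct $\vla$ index inequivalent $S_n$-irreps) and projects each diagonal block onto its scalar part on $\V_{\vla}$, which is precisely $\hat{M}_i$. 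In particular each $\hat{M}_i$ is permutation-invariant, i.e. commutes with every $\tau\in S_n$.

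The crux is the operator inequality $G:=\sum_{i\in\N}\hat{M}_i^{\dagger}\hat{M}_i\le I$. I would prove it directly from the mixed-unitary form of $\T$: for any $\ket{\psi}$, convexity of $\|\cdot\|^2$ gives $\|\hat{M}_i\ket{\psi}\|^2=\|\T(M_i)\ket{\psi}\|^2\le\frac{1}{n!}\sum_{\tau}\|\tau M_i\tau^{\dagger}\ket{\psi}\|^2$; summing over $i$ and invoking the completeness equation $\sum_i M_i^{\dagger}M_i=I$ collapses the bound to $\bra{\psi}G\ket{\psi}\le\langle\psi|\psi\rangle$. (This is just the Kadison--Schwarz inequality for the unital channel $\T$.) Since $G$ commutes with every $\tau$, so do $I-G\ge0$ and its square root, so $N_{*}:=(I-G)^{1/2}$ is a legitimate permutation-invariant operator. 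I then let $N$ consist of the operators $\{\hat{M}_i\}_{i\in\N}$ together with the single extra outcome $N_{*}$, placed at an index where $M$ vanishes. Completeness holds because $\sum_i\hat{M}_i^{\dagger}\hat{M}_i+N_{*}^{\dagger}N_{*}=G+(I-G)=I$, and every operator is permutation-invariant, so $N$ is a valid permutation-invariant measurement.

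To bound the distance I use $\Delta^2(M_i,\hat{M}_i)\le\frac{1}{2D}\|M_i-\hat{M}_i\|_F^2=\frac{1}{2D}(\|\tilde{M}_i\|_F^2+\|\bar{M}_i\|_F^2)$ for the matched outcomes and $\Delta^2(0,N_{*})=\frac{1}{2D}\|N_{*}\|_F^2=\frac{1}{2D}\tr(I-G)$ for the extra one. The orthogonal splitting $\|M_i\|_F^2=\|\hat{M}_i\|_F^2+\|\tilde{M}_i\|_F^2+\|\bar{M}_i\|_F^2$ together with $\sum_i\|M_i\|_F^2=\tr I=D$ gives $\tr(I-G)=D-\sum_i\|\hat{M}_i\|_F^2=\sum_i(\|\tilde{M}_i\|_F^2+\|\bar{M}_i\|_F^2)$, so the two contributions coincide and $\Delta^2(M,N)\le\frac{1}{D}\sum_i(\|\tilde{M}_i\|_F^2+\|\bar{M}_i\|_F^2)=1-\frac{1}{D}\sum_i\|\hat{M}_i\|_F^2\le\delta^2$ by the hypothesis $\sum_i\|\hat{M}_i\|_F^2\ge D(1-\delta^2)$. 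Hence $\Delta(M,N)\le\delta$, as claimed.

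The main obstacle is establishing $G\le I$, since this is exactly what makes $I-G$ positive semidefinite and allows the completeness deficit to be absorbed by the single genuine operator $N_{*}$. Repairing completeness by appending one outcome, rather than rescaling each $\hat{M}_i$ by $G^{-1/2}$ (which would force one to control $G^{-1/2}$ even on the subspace where $G$ has tiny eigenvalues), is what makes the two error terms add up to exactly $\sum_i(\|\tilde{M}_i\|_F^2+\|\bar{M}_i\|_F^2)$ and yields the clean bound $\delta$ instead of a larger constant multiple of $\delta$. I expect the analogous Lemma~\ref{lem:klocal} to be provable by the same template, with the twirl $\T$ replaced by the Pauli average over $T^{c}$ that realizes $f_T$.
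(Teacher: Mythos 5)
Your proposal is correct, and its skeleton coincides with the paper's: both replace each $M_i$ by $\hat{M}_i$, append the single repair outcome $\sqrt{I-\sum_i\hat{M}_i^{\dagger}\hat{M}_i}$, and obtain $\Delta^2(M,N)\le 1-\frac{1}{D}\sum_i\|\hat{M}_i\|_F^2\le\delta^2$ from the orthogonal splitting $\|M_i\|_F^2=\|\hat{M}_i\|_F^2+\|\tilde{M}_i\|_F^2+\|\bar{M}_i\|_F^2$ and $\sum_i\|M_i\|_F^2=D$. The one place where you genuinely diverge is the key operator inequality $\sum_i\hat{M}_i^{\dagger}\hat{M}_i\le I$. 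The paper proves it block by block in the Schur basis, evaluating $\tr\bigl[\bigl(\sum_i M_i^{\dagger}M_i\bigr)(\ketbra{\vla}{\vla}\otimes\ketbra{\psi}{\psi}\otimes\rho)_{\sch}\bigr]$ with $\rho$ maximally mixed on $\V_{\vla}$ and using $\tr(g_{\vla,j})=0$ to discard the cross terms. You instead identify $\hat{M}_i$ as the $S_n$-twirl $\T(M_i)=\frac{1}{n!}\sum_{\tau}\tau M_i\tau^{\dagger}$ (which is correct by Schur orthogonality, exactly as you sketch) and then get $\|\T(M_i)\ket{\psi}\|^2\le\frac{1}{n!}\sum_{\tau}\|\tau M_i\tau^{\dagger}\ket{\psi}\|^2$ from convexity, so that summing over $i$ and using completeness immediately gives the bound. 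Your route is cleaner and more portable: it never touches the fine structure of the blocks, works verbatim for any unital mixed-unitary conditional expectation, and, as you note, specializes to Lemma~\ref{lem:klocal} with the Pauli average over $T^c$ in place of $\T$; the paper's computation, by contrast, is self-contained within the Schur--Weyl formalism it has already set up and avoids invoking the twirl identity. Both are complete proofs.
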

\begin{proof}
See Appendix \ref{apd:lemperminv}.
\end{proof}

With the help of lemma \ref{lem:perminv}, we obtain the following result on testing the permutation-invariant measurements.

\begin{theorem}
The permutation-invariant measurements on $n$ qudits can be $\epsilon$-tested with query complexity $O(1/\epsilon^2)$. Furthermore, the testing algorithm can be efficiently implemented.
\end{theorem}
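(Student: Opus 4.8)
The plan is to reduce the test to estimating the single quantity $\theta \coloneqq \frac{1}{D}\sum_{i\in\N}\|\hat{M}_i\|_F^2$, which by Lemma~\ref{lem:perminv} controls the distance of $M$ from the permutation-invariant measurements. Following the Choi--Jamio\l kowski strategy of the previous sections, in each round I would prepare $\ME$, perform $M$ on the first subsystem, and then apply a single two-outcome ``symmetry test'' $\{P_{\mathrm{sym}},I-P_{\mathrm{sym}}\}$ to the post-measurement state. The key point, established below, is that the probability of passing this test in one round is exactly $\theta$. Since $\theta=1$ when $M$ is permutation-invariant, the test is one-sided, and (exactly as in Algorithm~\ref{alg:klocal}) I would simply \emph{accept iff every round passes}; this yields query complexity $O(1/\epsilon^2)$ rather than the $O(1/\epsilon^4)$ that two-sided estimation of $\theta$ would require.

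The heart of the matter is to identify $P_{\mathrm{sym}}$ and to show that it isolates the $\hat{}$-component. First I would observe that, since permutation matrices are real, $\ket{v(\tau A\tau^{\dagger})}=(\tau\otimes\tau)\ket{v(A)}$ for every $\tau\in S_n$; hence $A$ is permutation-invariant exactly when $\ket{v(A)}$ is fixed by the diagonal action $\tau\mapsto\tau\otimes\tau$ of $S_n$ on $\Hdn\otimes\Hdn$. Pairing the $j$-th qudit of the first register with the $j$-th qudit of the second turns this diagonal action into the ordinary permutation action of $S_n$ on $n$ ``super-qudits'' of dimension $d^2$, whose invariant subspace is precisely the symmetric subspace. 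I therefore take $P_{\mathrm{sym}}$ to be the projector onto this symmetric subspace, which by Schur--Weyl duality is exactly the $\vla=(n,0,\dots,0)$ sector; it is measured by applying the Schur transform $U_{\sch}$ to the $n$ super-qudits and checking whether the irrep label equals $(n,0,\dots,0)$. Because $A\mapsto\ket{v(A)}$ is an isometry (up to the factor $1/\sqrt{D}$, hence a bijection by dimension count) intertwining conjugation with $\tau\otimes\tau$, the projection commutes with it, $P_{\mathrm{sym}}\ket{v(A)}=\ket{v(\hat{A})}$, so $\bracket{v(A)}{P_{\mathrm{sym}}v(A)}=\frac{1}{D}\|\hat{A}\|_F^2$. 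Summing the per-outcome success probabilities against their weights $p(M_i)$ then gives $\Pr[\mathrm{pass}]=\sum_{i}p(M_i)\frac{\|\hat{M}_i\|_F^2}{\|M_i\|_F^2}=\frac{1}{D}\sum_i\|\hat{M}_i\|_F^2=\theta$, using $\sum_i M_i^{\dagger}M_i=I$.

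Correctness then follows in two cases. If $M$ is permutation-invariant then $\hat{M}_i=M_i$ for all $i$, so $\theta=\frac{1}{D}\sum_i\|M_i\|_F^2=\frac{1}{D}\tr(I)=1$ and every round passes, so $M$ is accepted with certainty. If $M$ is $\epsilon$-far, then by Lemma~\ref{lem:perminv} (with $\delta=\epsilon$) the inequality $\theta\ge 1-\epsilon^2$ would force $M$ to be $\epsilon$-close, a contradiction; hence $\theta<1-\epsilon^2$, so each round fails with probability $>\epsilon^2$, and the probability that all $L$ rounds pass is at most $(1-\epsilon^2)^L\le e^{-\epsilon^2 L}\le 1/3$ once $L=\lceil 2/\epsilon^2\rceil$. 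This uses only $L=O(1/\epsilon^2)$ queries. For efficiency, each round needs only the preparation of $\ME$ ($n$ maximally entangled qudit pairs), one query to $M$, and the super-qudit Schur transform, which runs in time $\mathrm{poly}(n,d,\log(1/\epsilon))$ \cite{BCH07}; together with trivial classical post-processing this makes the algorithm efficiently implementable.

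The main obstacle I anticipate is the second step: pinning down $P_{\mathrm{sym}}$ and rigorously proving $P_{\mathrm{sym}}\ket{v(A)}=\ket{v(\hat{A})}$, so that the passing probability equals $\theta$ \emph{exactly} and not merely up to a part of $\tilde{A}$ or $\bar{A}$ that might accidentally overlap the symmetric subspace. This amounts to verifying that the decomposition $A=\hat{A}+\tilde{A}+\bar{A}$ coincides with the orthogonal splitting of $\BHdn$ into the conjugation-invariants and their complement --- equivalently, that $\tilde{A}$ and $\bar{A}$ are Hilbert--Schmidt orthogonal to every permutation-invariant operator, which is immediate from their block structure in the Schur basis (traceless $g_{\vla,j}$ factors and off-diagonal $\vla\neq\vla'$ blocks). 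A secondary technical point is that $U_{\sch}$ is only implemented approximately, so one must check that a sufficiently small transform error perturbs $\Pr[\mathrm{pass}]$ by at most, say, $\epsilon^2/10$, which is absorbed by enlarging the constant in $L$.
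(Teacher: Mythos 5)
Your proposal is correct, and it reaches the theorem by the same overall skeleton as the paper---prepare $\ME$, apply $M$ to the first subsystem, run a one-sided projective test whose per-round pass probability is exactly $\frac{1}{D}\sum_{i}\|\hat{M}_i\|_F^2$, and close with Lemma~\ref{lem:perminv} and $(1-\epsilon^2)^L$---but the test itself and its analysis are genuinely different. The paper realizes the accept event concretely: Schur transform on the first register, a comparison of the two $\ket{\vla}$ registers, and a generalized Bell measurement on the two $\ket{b_{\vla}}$ registers, with the pass probability obtained by explicitly pushing $\hat{M}_i$, $\tilde{M}_i$, $\bar{M}_i$ through the circuit. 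You instead identify the accept subspace abstractly: from $\ket{v(\tau A\tau^{\dagger})}=(\tau\otimes\tau)\ket{v(A)}$ (valid because permutation matrices are real), the permutation-invariant operators correspond to the fixed subspace of the diagonal $S_n$ action, i.e.\ the symmetric subspace of $n$ super-qudits of dimension $d^2$; since $A\mapsto\sqrt{D}\,\ket{v(A)}$ is a unitary intertwiner, $P_{\mathrm{sym}}\ket{v(A)}=\ket{v(\hat{A})}$ once one knows that $\hat{A}$ is the Hilbert--Schmidt orthogonal projection of $A$ onto the permutation-invariant operators. You correctly reduce that last point to the tracelessness of the $g_{\vla,j}$ for $j\neq 0$ and the off-diagonality of $\bar{A}$, which is indeed the only place the argument could have leaked probability into the accept event. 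The two accept projectors are in fact the same operator (the projector onto the subspace of all $\ket{v(B)}$ with $B$ permutation-invariant), so the algorithms are equivalent as measurements; what your version buys is a computation-free derivation of the pass probability and a single Schur transform on $d^2$-dimensional systems in place of the paper's Schur transform plus Bell measurement, at the cost of invoking the fact that the group average $\frac{1}{n!}\sum_{\tau}\tau A\tau^{\dagger}$ is an orthogonal projection. You also treat the approximate implementation of $U_{\sch}$ explicitly, which the paper leaves implicit. Both routes give $O(1/\epsilon^2)$ queries and an efficient implementation.
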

\begin{proof}
Given a black box performing some unknown measurement $M=\{M_i\}_{i \in \N}$ on $n$ qudits, we run the following test on it: 

\begin{algorithm}[H]
\caption{Testing the permutation-invariant measurements}
\ben
\item Let $D=d^n$, $L=\left\lceil \dfrac{5}{\epsilon^2}\right\rceil$.
\item Repeat the following test $L$ times:
\ben
\item Prepare 
$$~~~~~~~~~\ME=
\dfrac{1}{\sqrt{D}}\sum\limits_{(i_1,\dots,i_n)\in \Zdn}\ket{i_1,i_2,\dots,i_n}\ket{i_1,i_2,\dots,i_n}.$$
\item Perform the Schur transform on the first subsystem of $\ME$; \item Perform $M$ on the first subsystem.
\item No matter which measurement outcome occurs in step (c), perform the inverse Schur transform on the first subsystem.
\item Re-label the standard basis of both subsystems as  $\ket{\vla,a_{\vla},b_{\vla}}$'s via the bijection $h:\Zdn \to \Lambda_{d,n}$.
\item Measure the two $\ket{\vla}$ registers. If the two outcomes are not equal, then reject $M$ and quit.
\item Measure the two $\ket{b_{\vla}}$ registers in the basis $\{\ket{v(g_{\vla,j})}\}_{0 \le j \le v_{\vla}^2-1}$. If the outcome corresponds to $j \neq 0$, then reject $M$ and quit.
\een
\item Now $M$ has passed all the above tests. Accept $M$.  
\een
\label{alg:perminv}
\end{algorithm}

For correctness, we claim that this algorithm accepts $M$ with probability $(\frac{1}{D}\sum_{i \in \N}\|\hat{M_i}\|_F^2)^L$. Assuming this is true, then:\\
(1) if $M$ is permutation-invariant, then $M_i=(\hat{M_i})_{\sch}$ for any $i \in \N$. Then by $\sum_{i \in \N} M_i^{\dagger}M_i=I$, we obtain
\bea
\sum\limits_{i \in \N} \|\hat{M}_i\|_F^2=
\sum\limits_{i \in \N} \|{M}_i\|_F^2=D,
\eea
which implies that algorithm \ref{alg:perminv} accepts $M$ with certainty; \\
(2) on the other hand, if $M$ is accepted by algorithm \ref{alg:perminv} with probability at least $1/3$, then by our choice of $L$ we must have 
\bea
\sum\limits_{i \in \N}\|\hat{M_i}\|_F^2 \ge D(1-\epsilon^2)
\eea 
(because if otherwise, then the probability of $M$ being 
accepted is at most $(1-\epsilon^2)^L \le e^{-\epsilon^2L} <0.1$, contradicting our assumption). Then, by lemma \ref{lem:perminv}, $M$ is $\epsilon$-close to a permutation-invariant measurement.

Now we prove the above claim by showing that $M$ passes each iteration of step 2 with probability 
$\frac{1}{D}\sum_{i \in \N}\|\hat{M_i}\|_F^2$. 
Assuming the outcome at step 2.(c) is $i$, the unnormalized state after step 2.(e) is given by
\beba
&&[(U_{\sch}^{\dagger} M_i U_{\sch}) \otimes I] \ME  \\
&=& [(\hat{M_i}+\tilde{M_i}+\bar{M_i}) \otimes I] \ME  \\
&=&\dfrac{1}{\sqrt{D}}
\sum\limits_{\vla,a_{\vla},b_{\vla}}
((\hat{M_i}+\tilde{M_i}+\bar{M_i}) 
\ket{\vla,a_{\vla},b_{\vla}})
\ket{\vla,a_{\vla},b_{\vla}} ,
\eeea
where
\beba
\hat{M_i}\ket{\vla}\ket{a_{\vla}}\ket{b_{\vla}}
&=&\ket{\vla}((\hat{M_i})_{\vla}\ket{a_{\vla}})\ket{b_{\vla}},
\eeea
\beba
\tilde{M_i}\ket{\vla}\ket{a_{\vla}}\ket{b_{\vla}}
&=&\sum\limits_{j \neq 0}
\ket{\vla}((\tilde{M_i})_{\vla,j}\ket{a_{\vla}}) (g_{\vla,j}\ket{b_{\vla}}) ,
\eeea
\beba
\bar{M_i}\ket{\vla}\ket{a_{\vla}}\ket{b_{\vla}}
&=&\sum\limits_{\vla' \neq \vla}
\ket{\vla'} ((\bar{M_i})_{\vla',\vla}
\ket{a_{\vla}}\ket{b_{\vla}}).
\eeea
Note that $\bar{M_i}$ changes $\ket{\vla}$ and
$\tilde{M_i}$ changes $\ket{b_{\vla}}$. Only
$\hat{M_i}$ leaves both $\ket{\vla}$ and $\ket{b_{\vla}}$ intact. 
In step 2.(f), we do not reject $M$ only if the measurements on the two $\ket{\vla}$ registers yield the same outcome. In this case, assuming the outcome is $\vla$, then the unnormalized state of the rest four registers becomes
\beba
&&\dfrac{1}{\sqrt{D}}
\sum\limits_{a_{\vla},b_{\vla}}
[((\hat{M_i})_{\vla} \ket{a_{\vla}})\ket{b_{\vla}}
\ket{a_{\vla}}\ket{b_{\vla}} \\
&&+
\sum\limits_{j \neq 0}
((\tilde{M_i})_{\vla,j}\ket{a_{\vla}}) (g_{\vla,j}\ket{b_{\vla}})
\ket{a_{\vla}}\ket{b_{\vla}}] \\
&=&\dfrac{1}{\sqrt{D}}
\sum\limits_{a_{\vla}}
[((\hat{M_i})_{\vla} \ket{a_{\vla}}) \ket{a_{\vla}}
(\sum\limits_{b_{\vla}} \ket{b_{\vla}}\ket{b_{\vla}})\\
&&+
\sum\limits_{j \neq 0}
((\tilde{M_i})_{\vla,j}\ket{a_{\vla}})
\ket{a_{\vla}} 
(\sum\limits_{b_{\vla}} (g_{\vla,j}\ket{b_{\vla}})\ket{b_{\vla}})]\\
&=&
\sqrt{\dfrac{v_{\vla}}{D}}
\sum\limits_{a_{\vla}}
[((\hat{M_i})_{\vla} \ket{a_{\vla}})\ket{a_{\vla}} \ket{v(g_{\vla,0})} \\
&&+
\sum\limits_{j \neq 0}
((\tilde{M_i})_{\vla,j}\ket{a_{\vla}})
\ket{a_{\vla}} \ket{v(g_{\vla,j})}],
\eeea
where in the first step we have switched the order of the first $\ket{b_{\vla}}$ and the second $\ket{a_{\vla}}$. In this case,  the measurement in step 2.(g) produces the outcome $j=0$ with probability 
\beba
\dfrac{v_{\vla}}{D}
\|\sum\limits_{a_{\vla}}
((\hat{M_i})_{\vla} \ket{a_{\vla}})\ket{a_{\vla}}\|^2
&=&
\dfrac{v_{\vla}}{D} \|(\hat{M_i})_{\vla}\|_F^2.
\eeea
Summing this probability over all $i$'s and $\vla$'s, we know that $M$ passes one iteration of step 2 with probability
\bea
\sum\limits_{i \in \N}\sum\limits_{\vla} \dfrac{v_{\vla}}{D}\|(\hat{M_i})_{\vla}\|_F^2
=\dfrac{1}{D} \sum\limits_{i \in \N} \|\hat{M_i}\|_F^2, 
\eea
as claimed.

Finally, this algorithm obviously has query complexity $O(1/\epsilon^2)$. Furthermore, besides querying the black box, it needs $\poly(n,1/\epsilon)$ number of quantum operations including: (1)preparing the state $\ME$, which is equivalent to preparing the $j$-th and $(n+j)$-th qudits in the $d$-dimensional maximally entangled state;
(2)quantum Schur transform and its inverse, which can be efficiently implemented;
(3)measuring the two $\ket{b_{\vla}}$ registers in the basis $\{\ket{v(g_{\vla,j})}\}_{0 \le j \le v_{\vla}^2-1}$, which can be accomplished efficiently with the circuit in Fig.\ref{fig:measureb}. 
\begin{figure}[H]
\begin{equation*}
\Qcircuit @C=2em @R=2em {
& \qw & \multigate{1}{CX_{v_{\vla}}} &  \gate{F_{v_{\vla}}} & \meter \\
& \qw &  \ghost{CX_{v_{\vla}}}   & \qw                & \meter
}
\end{equation*}
\caption{The circuit for measuring the two $\ket{b_{\vla}}$ registers in the basis 
$\{\ket{v(g_{\vla,j})}\}_{0 \le j \le v_{\vla}^2-1}$ where 
the $g_{\vla,j}$'s are the $v_{\vla}$-dimensional Pauli operators. Here $CX_{v_{\vla}}$ is the $v_{\vla}$-dimensional generalization of CNOT: $CX_{v_{\vla}}\ket{a}\ket{b}=\ket{a}\ket{b+a(\textrm{mod}~v_{\vla})}$, for $a,b=0,1,\dots,v_{\vla}-1$, and $F_{v_{\vla}}$ is the $v_{\vla}$-dimensional quantum Fourier transform. Since $v_{\vla} \le d^n$, both $CX_{v_{\vla}}$ and $F_{v_{\vla}}$ can be efficiently implemented. 
}
\label{fig:measureb}
\end{figure}
In addition, the classical processing is also easy. So algorithm \ref{alg:perminv} can be efficiently implemented.
\end{proof}

\section{Testing Any Finite Set of Measurements}
\label{sec:finite}
So far we have studied testing three particular classes of quantum measurements. In this section, we present a general result about property testing of quantum measurements. Specifically, we give an algorithm that can test any finite set of measurements on any finite-dimensional system.

Before stating our result, it is helpful to consider the following question. Suppose $M=\{M_i\}_{i \in \N}$ and $N=\{N_i\}_{i \in \N}$ are two measurements on a $D$-dimensional system. If a black box performs one of them, how do we know which one it really implements? We solve this problem by repeating the following experiment sufficiently many times: each time we prepare a copy of $\ME$ and apply the unknown measurement to its first subsystem, obtaining an outcome and post-measurement state. If the unknown measurement is $M$ (or $N$), then the outcome $i$ occurs with probability $p(M_i)$
(or $p(N_i)$) and the corresponding post-measurement state would be $\ket{\tilde{v}(M_i)}$ (or $\ket{\tilde{v}(N_i)}$). We claim that as long as $\Delta(M,N)$ is large, we do not need to repeat this experiment many times to successfully identify the unknown measurement. The reason is as follows.

Let $\vp(M) \coloneqq (p(M_i))_{i \in \N}$ and $\vp(N) \coloneqq (p(N_i))_{i \in \N}$ be the distributions of outcomes for $M$ and $N$ respectively. Consider their variational distance
\bea
\Delta(\vp(M),\vp(N))
=\dfrac{1}{2}\sum\limits_{i \in \N} |p(M_i)-p(N_i)|.
\eea
By Eqs.(\ref{eq:DeltaMN2}), (\ref{eq:st&f}), (\ref{eq:isoip}) and the Cauchy-Schwarz inequality, 
\beba
\Delta^2(M,N) &=& 1- \sum\limits_{i \in \N} |\bracket{v(M_i)}{v(N_i)}| \\
&\ge &
1- \sum\limits_{i \in \N} \|\ket{v(M_i)}\|\cdot \|\ket{v(N_i)}\| \\
&= &
1-\sum\limits_{i \in \N} \sqrt{p(M_i)p(N_i)} \\
&=&
1-F(\vp(M),\vp(N)) \\
&\ge & \dfrac{1}{2}D^2(\vp(M),\vp(N)).
\eeea
Thus
\bea
\Delta(M,N) \ge \dfrac{1}{\sqrt{2}}D(\vp(M),\vp(N)).
\label{eq:deltadelta}
\eea
This means that if $D(\vp(M),\vp(N))$ is large, then $\Delta(M,N)$ is also large. In this case, we can simply distinguish $M$ and $N$ from the outcome statistics.

But what if $\Delta(M,N)$ is large but $D(\vp(M),\vp(N))$ is small? In this case, we need to use the post-measurement states. Specifically, suppose $|M|,|N|\le k$. We repeat the above experiment $L$ times, and suppose the outcome $i$ occurs $L_i$ times for $i=1,2,\dots,k$. Note that with high probability we have $L_i \approx p(M_i)L \approx p(N_i)L$ for each $i$. Then let
$\vL=(L_1,L_2,\dots,L_k)$ and
\bea
\chi_{\vL}(M) \coloneqq \bigotimes\limits_{i=1}^k (\ket{\tilde{v}(M_i)}^{\otimes L_i}),\\
\chi_{\vL}(N) \coloneqq \bigotimes\limits_{i=1}^k (\ket{\tilde{v}(N_i)}^{\otimes L_i}).
\eea
The following lemma says that if $\Delta(M,N)$ is large, then with high probability $\chi_{\vL}(M)$ and $\chi_{\vL}(N)$ have small overlap for some small $L$, and thus can be easily distinguished.

\begin{lemma}
Let $\vL=(L_1,L_2,\dots,L_k)$ be a set of non-negative integers with $\sum_{i=1}^k L_i=L$. Suppose $M=\{M_1,M_2,\dots,M_k\}$ and $N=\{N_1,N_2,\dots,N_k\}$ are two measurements such that $\Delta(M,N) \ge \delta$ for some $0<\delta<1$. If for any $i$ such that $L_i \ge 0.1\delta^2L/k$, we have
\bea
p(M_i),p(N_i) \ge (1-0.1\delta^2)\dfrac{L_i}{L},
\eea
then 
\bea
|\bracket{\chi_{\vL}(M)}{\chi_{\vL}(N)}|\le (1-0.6\delta^2)^L.
\eea
\label{lem:chimchin}
\end{lemma}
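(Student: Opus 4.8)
The plan is to reduce the overlap to a product over outcomes, apply the weighted arithmetic--geometric mean inequality to turn that product into an average, and then bound the average using the hypothesis. First, since $\chi_{\vL}(M)$ and $\chi_{\vL}(N)$ are tensor products indexed by the outcomes, the overlap factorizes:
\beq
|\bracket{\chi_{\vL}(M)}{\chi_{\vL}(N)}| = \prod_{i=1}^k |\bracket{\tilde{v}(M_i)}{\tilde{v}(N_i)}|^{L_i}.
\eeq
Writing $r_i \coloneqq |\bracket{\tilde{v}(M_i)}{\tilde{v}(N_i)}|$ and $c_i \coloneqq \frac{1}{D}|\ip{M_i}{N_i}|$, the Choi-Jamio\l kowski identity Eq.(\ref{eq:isoip}) together with $\|\ket{v(M_i)}\|=\sqrt{p(M_i)}$ gives $r_i = c_i/\sqrt{p(M_i)p(N_i)}$, while Eq.(\ref{eq:DeltaMN2}) gives $\sum_i c_i = 1-\Delta^2(M,N) \le 1-\delta^2$. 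Thus everything reduces to proving $\prod_i r_i^{L_i} \le (1-0.6\delta^2)^L$.

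Set $t_i \coloneqq L_i/L$, so that $\sum_i t_i = 1$ and $\prod_i r_i^{L_i} = (\prod_i r_i^{t_i})^L$. The weighted AM--GM inequality yields $\prod_i r_i^{t_i} \le \sum_i t_i r_i$, so (both sides lying in $[0,1]$) the whole lemma reduces to the single scalar estimate $\sum_i t_i r_i \le 1-0.6\delta^2$. This is the conceptual heart of the argument: it replaces the exponentially small product by the linear average $\sum_i t_i r_i$, a quantity the hypothesis controls directly.

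To establish $\sum_i t_i r_i \le 1 - 0.6\delta^2$ I would split the outcomes into the heavy set $S \coloneqq \{i : L_i \ge 0.1\delta^2 L/k\}$ and its complement. For $i \notin S$ I use only $r_i \le 1$ (Cauchy--Schwarz) and $\sum_{i \notin S} L_i \le k \cdot 0.1\delta^2 L/k = 0.1\delta^2 L$, so the light outcomes contribute at most $0.1\delta^2$. For $i \in S$ the hypothesis $\sqrt{p(M_i)p(N_i)} \ge (1-0.1\delta^2)L_i/L$ gives $t_i r_i \le c_i/(1-0.1\delta^2)$, whence $\sum_{i\in S} t_i r_i \le (1-\delta^2)/(1-0.1\delta^2)$ after invoking $\sum_i c_i \le 1-\delta^2$. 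Combining the two contributions,
\beq
\sum_i t_i r_i \le \dfrac{1-\delta^2}{1-0.1\delta^2} + 0.1\delta^2,
\eeq
and clearing denominators shows this is $\le 1-0.6\delta^2$ precisely when $-0.2\delta^2 \le 0.07\delta^4$, which holds for every $\delta$.

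I expect the main difficulty to be bookkeeping rather than a genuine obstacle. The delicate points are the choice of the heavy/light threshold $0.1\delta^2 L/k$, which must be small enough that the light outcomes are negligible yet large enough that the heavy denominators $\sqrt{p(M_i)p(N_i)}$ stay bounded below, and checking that the numerical constants leave exactly the slack needed to reach $1-0.6\delta^2$. The degenerate cases ($L=0$, or outcomes with $L_i=0$ that contribute no factor) are trivial and can be dispatched at the outset.
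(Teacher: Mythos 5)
Your proof is correct and follows essentially the same route as the paper's: factorize the overlap into $\prod_i \xi_i^{L_i}$, split the outcomes at the threshold $0.1\delta^2 L/k$, apply weighted AM--GM, and feed in $\sum_i \sqrt{p(M_i)p(N_i)}\,\xi_i \le 1-\delta^2$. The only (cosmetic) difference is that the paper first discards the light outcomes from the product via $\xi_i\le 1$ and applies AM--GM over the heavy set alone, whereas you keep them inside a single AM--GM with weights $L_i/L$ and absorb them into the final scalar inequality; both sets of numerical estimates check out.
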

\begin{proof}
See Appendix \ref{apd:lemchimchin}.
\end{proof}

Now let us return to the problem of testing any finite set of measurements on a finite-dimensional system. Suppose $\S=\{M^{(1)},M^{(2)},\dots,M^{(m)}\}$ is the set of measurements to be tested, where $M^{(i)}=\{M^{(i)}_j\}_{j \in \N}$. Given an unknown measurement $M$, we want to know whether $M$ is in $\S$ or far away from $\S$. Our basic idea is to repeat the above experiment sufficiently many times. If $M$ is from $\S$, then with high probability the outcome statistics is well-behaved and the post-measurement states lie in certain subspace. On the other hand, if $M$ is far from any $M^{(i)}$, then we divide $\S$ into two subsets: for  one subset, the outcome statistics are quite different for $M$ and any $M^{(i)}$ in it, so we can easily distinguish $M$ from this subset; for the other subset, the outcome statistics are similar for $M$ and any $M^{(i)}$ in it, so we must utilize the post-measurement states. Lemma \ref{lem:chimchin} says that $\chi_{\vL}(M)$ and $\chi_{\vL}(M^{(i)})$ are almost orthogonal for some small $L$. Then by the following lemma, $\chi_{\vL}(M)$ has a small projection onto the subspace spanned by these $\chi_{\vL}(M^{(i)})$'s. So by a projective measurement, we can get to know whether $M$ is in $\S$ or far away from $\S$.

\begin{lemma}
Suppose $\ket{\phi_1},\ket{\phi_2},\dots,\ket{\phi_m},\ket{\psi}
\in \HD$ (where $D$ is arbitrary) satisfy $|\bracket{\psi}{\phi_i}|\le 1/(5m)$ for any $i$ and $|\bracket{\phi_i}{\phi_j}|\le 1/(5m)$ for any $i \neq j$. Let $\Pi$ be the projection operator onto the subspace spanned by $\ket{\phi_1}$, $\ket{\phi_2}$, $\dots$, $\ket{\phi_m}$. Then 
$\bra{\psi}\Pi\ket{\psi} \le 0.1$.
\label{lem:pi}
\end{lemma}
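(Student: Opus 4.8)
The plan is to turn $\bra{\psi}\Pi\ket{\psi}=\|\Pi\ket{\psi}\|^2$ into a quadratic form in the Gram matrix of the $\ket{\phi_i}$'s and then control that form using near-orthonormality. I treat the $\ket{\phi_i}$ and $\ket{\psi}$ as unit vectors, which is the relevant case (each $\chi_{\vL}(\cdot)$ in our application is a tensor product of normalized states). Let $G$ be the $m \times m$ Hermitian matrix with $G_{ij}=\bracket{\phi_i}{\phi_j}$ and set $v_i=\bracket{\phi_i}{\psi}$. Since $\Pi\ket{\psi}$ lies in the span, write it as $\sum_i \alpha_i \ket{\phi_i}$; taking the inner product with each $\ket{\phi_j}$ and using $\Pi^{\dagger}=\Pi$ and $\Pi\ket{\phi_j}=\ket{\phi_j}$ yields the linear system $G\vec{\alpha}=\vec{v}$. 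Substituting $\vec{\alpha}=G^{-1}\vec{v}$ then gives
\[
\bra{\psi}\Pi\ket{\psi}=\sum_i \alpha_i \bracket{\psi}{\phi_i}=\vec{v}^{\dagger} G^{-1} \vec{v}\le \frac{\|\vec{v}\|^2}{\lambda_{\min}(G)} .
\]

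The step I expect to be the crux is justifying that $G$ is invertible — a priori the $\ket{\phi_i}$ need not be independent — and lower-bounding $\lambda_{\min}(G)$. This is exactly where the hypothesis $|\bracket{\phi_i}{\phi_j}|\le 1/(5m)$ is used. I would write $G=I+E$ with $E$ the off-diagonal part, so $E_{ii}=0$ and $|E_{ij}|\le 1/(5m)$ for $i\neq j$. Since $E$ is Hermitian, Gershgorin's theorem bounds its spectrum by the largest absolute row sum, giving $\|E\|_{\mathrm{op}}\le \max_i \sum_{j\neq i}|E_{ij}|\le (m-1)/(5m)<1/5$. Hence $\lambda_{\min}(G)\ge 1-\|E\|_{\mathrm{op}}>4/5$, which at once shows $G$ is positive definite (so the $\ket{\phi_i}$ are linearly independent and the displayed formula is legitimate) and that $1/\lambda_{\min}(G)<5/4$.

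It then remains only to bound $\|\vec{v}\|^2$, which is routine: the hypothesis $|\bracket{\psi}{\phi_i}|\le 1/(5m)$ gives $\|\vec{v}\|^2=\sum_{i=1}^m |\bracket{\phi_i}{\psi}|^2\le m\,(1/(5m))^2=1/(25m)\le 1/25$. Combining the three estimates yields $\bra{\psi}\Pi\ket{\psi}\le (5/4)(1/25)=1/20\le 0.1$, as desired. The slack between $1/20$ and the claimed bound $0.1$ indicates that the constant $1/(5m)$ is not tight, but it is comfortable enough for the application.
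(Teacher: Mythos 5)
Your proof is correct, and it takes a genuinely different route from the paper's. The paper never forms the Gram matrix explicitly: it assumes (without loss of generality) that the $\ket{\phi_i}$ are linearly independent, picks an orthonormal basis $\{\ket{\varphi_i}\}$ of their span, writes $\ket{\varphi_i}=\sum_j\lambda_{i,j}\ket{\phi_j}$, and applies the near-orthogonality hypothesis together with the elementary bound $|\lambda_{i,j}||\lambda_{i,j'}|\le(|\lambda_{i,j}|^2+|\lambda_{i,j'}|^2)/2$ twice --- once to deduce $\sum_{i,j}|\lambda_{i,j}|^2\le 2m$ from $m=\sum_i\bracket{\varphi_i}{\varphi_i}$, and once more to bound $\bra{\psi}\Pi\ket{\psi}=\sum_i|\bracket{\varphi_i}{\psi}|^2$ --- arriving at the constant $2/25$. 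You instead package the same data into $G=I+E$, invoke Gershgorin to get $\lambda_{\min}(G)\ge 1-(m-1)/(5m)>4/5$, and evaluate $\bra{\psi}\Pi\ket{\psi}=\vec{v}^{\dagger}G^{-1}\vec{v}\le\|\vec{v}\|^2/\lambda_{\min}(G)\le 1/(20m)$. Your version buys two things: the linear independence of the $\ket{\phi_i}$ falls out as a conclusion (positive definiteness of $G$) rather than being assumed away as in the paper, and your final bound $1/(20m)$ is sharper and decays with $m$. Both arguments silently require the $\ket{\phi_i}$ to be unit vectors --- the paper uses $\bracket{\phi_j}{\phi_j}=1$ when it identifies the diagonal Gram contribution with $\sum_{i,j}|\lambda_{i,j}|^2$ --- and you are right to flag this explicitly; it holds in the one place the lemma is invoked, since the $\chi_{\vL}(M^{(i_l)})$ are normalized states.
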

\begin{proof}
See Appendix \ref{apd:lempi}.
\end{proof}

Now we formally state our result on testing any finite set of measurements on any finite-dimensional system.

\begin{theorem}
Suppose $\S=\{M^{(1)},M^{(2)},\dots,M^{(m)}\}$ is a set of 
measurements on a $D$-dimensional system, where $M^{(i)}=\{M^{(i)}_1,M^{(i)}_2,\dots,M^{(i)}_k\}$ has at most $k$ possible outcomes for any $i$, and $\Delta(M^{(i)},M^{(j)}) \ge \gamma$ for any $i \neq j$, for some $0<\gamma<1$. Then $\S$ can be $\epsilon$-tested with query complexity 
$O(\max\{k^2\log(k)/a^8,\log(m)/a^2\})$ where $a=\min\{\epsilon,\gamma\}$.
\label{thm:general}
\end{theorem}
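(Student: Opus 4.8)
The plan is to run the Choi--Jamio\l kowski experiment of Sec.~\ref{sec:tool} a total of $L$ times on the unknown $M$, recording both the outcome counts $\vL=(L_1,\dots,L_k)$ and the collected (tensor product of) post-measurement states, which is exactly $\chi_{\vL}(M)$, and then to decide by a single projective measurement. First I would use the empirical frequencies $L_j/L$ as estimates $\hat p_j$ of $p(M_j)$ and split $\S$ into a \emph{near} part $\S_{\mathrm{near}}$, consisting of those $M^{(i)}$ whose known distribution $\vp(M^{(i)})$ matches $\hat p$ on every outcome to within a small precision $\zeta$, and a \emph{far} part. Candidates in the far part are ruled out for free, since large variational distance forces $\Delta(M,M^{(i)})$ to be large by Eq.~(\ref{eq:deltadelta}). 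For the near part I would build the projector $\Pi$ onto $\mathrm{span}\{\chi_{\vL}(M^{(i)}):M^{(i)}\in\S_{\mathrm{near}}\}$ (each such state is explicitly computable from the known $M^{(i)}$), measure $\{\Pi,I-\Pi\}$ on the collected state $\chi_{\vL}(M)$, and accept exactly when the outcome is $\Pi$. Since the only queries are the $L$ experiments, the task reduces to choosing $L$ and $\zeta$.

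Completeness is then almost immediate: if $M=M^{(i_0)}\in\S$ then the collected state equals $\chi_{\vL}(M^{(i_0)})$ exactly, and with high probability the Chernoff--Hoeffding bound puts $\hat p$ within $\zeta$ of $\vp(M^{(i_0)})$, so $M^{(i_0)}\in\S_{\mathrm{near}}$ and $\chi_{\vL}(M)$ lies in the range of $\Pi$; hence $\bra{\chi_{\vL}(M)}\Pi\ket{\chi_{\vL}(M)}=1$ and the algorithm accepts with near-certainty. For soundness, suppose $\Delta(M,M^{(i)})\ge\epsilon\ge a$ for every $i$. For each near candidate I want to invoke Lemma~\ref{lem:chimchin} with $\delta=a$ to get $|\bracket{\chi_{\vL}(M)}{\chi_{\vL}(M^{(i)})}|\le(1-0.6a^2)^L$, and to apply the same lemma to each pair $M^{(i)},M^{(j)}\in\S_{\mathrm{near}}$ (using $\Delta(M^{(i)},M^{(j)})\ge\gamma\ge a$) to bound the pairwise overlaps of the $\chi_{\vL}(M^{(i)})$'s by the same quantity. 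Choosing $L\ge\ln(5m)/(0.6a^2)=O(\log(m)/a^2)$ makes all of these overlaps at most $1/(5m)$, so Lemma~\ref{lem:pi} yields $\bra{\chi_{\vL}(M)}\Pi\ket{\chi_{\vL}(M)}\le0.1$ and the algorithm rejects with probability at least $0.9$.

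The real work, and the main obstacle, is discharging the probability hypothesis of Lemma~\ref{lem:chimchin}: for every outcome $j$ with $L_j\ge 0.1a^2L/k$ one needs $p(M_j)$ and $p(M^{(i)}_j)$ to be at least $(1-0.1a^2)L_j/L$, simultaneously for $M$, for all near candidates, and for all relevant pairs. I would control this by demanding an \emph{additive} accuracy $\zeta=O(a^4/k)$ on the frequency estimates, so that at the threshold frequency $0.1a^2/k$ the gap $L_j/L-p(M_j)$ stays below $0.1a^2(L_j/L)$; the same $\zeta$, used as the near/far cutoff, forces each near $M^{(i)}$ to satisfy the inequality as well, while far candidates never enter the span. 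A Chernoff--Hoeffding bound at precision $\zeta$ with a union bound over the $k$ outcomes (and the candidates) costs $L=O(\log(k)/\zeta^2)=O(k^2\log(k)/a^8)$ samples, and taking $L$ to be the maximum of this and the $O(\log(m)/a^2)$ requirement above gives the stated query complexity. Finally, the case where $M$ places appreciable weight on outcomes beyond the first $k$ is handled automatically: the estimated mass on $\{1,\dots,k\}$ then falls short of $1$, no candidate can be classified near, $\S_{\mathrm{near}}$ is empty, the projection vanishes, and $M$ is rejected.
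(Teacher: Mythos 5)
Your proposal follows the paper's proof essentially step for step: the same Choi--Jamio\l kowski sampling of $\chi_{\vL}(M)$, the same frequency-based pruning of candidates so that the hypothesis of Lemma~\ref{lem:chimchin} holds for every surviving $M^{(i)}$, and the same combination of Lemma~\ref{lem:chimchin} and Lemma~\ref{lem:pi} with the same choice of $L=O(\max\{k^2\log(k)/a^8,\log(m)/a^2\})$. The only cosmetic difference is that the paper rejects outright the moment any outcome $j>k$ is observed, which is a cleaner way to keep $\sum_{j\le k}L_j=L$ than relying on the near/far classification to catch the excess mass; otherwise the two arguments coincide.
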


\begin{proof}
Given a black box performing some unknown measurement $M=\{M_j\}_{j \in \N}$, we run the following test on it: 

\begin{algorithm}[H]
\caption{Testing $\S=\{M^{(1)},M^{(2)},\dots,M^{(m)}\}$}
\ben
\item Let $L=\max\left\{ \left\lceil \dfrac{5000k^2 \ln(20k)}{a^8} \right\rceil, \left \lceil \dfrac{2\ln(5m)}{a^2} \right \rceil \right \}$.
\item Prepare $L$ copies of $\ME=\dfrac{1}{\sqrt{D}}\sum\limits_{i=0}^{D-1}\ket{i}\ket{i}$. 
\item Perform $M=\{M_i\}_{i \in \N}$ on the first subsystem of each copy of $\ME$. Let $L_j$ be the number of occurrences of outcome $j$ for any $j \in \N$, and let 
$\vL=(L_1,L_2,\dots,L_k)$.
\item If $L_j>0$ for some $j>k$, then reject $M$ and quit; otherwise, define $\T \subseteq \S$ as follows: $M^{(i)} \in \T$ if and only if for any $j$ such that $L_j \ge 0.1a^2 L/k$,
we have $p(M^{(i)}_j) \ge (1-0.1a^2)L_j/L$. 
If $\T=\varnothing$, then reject $M$ and quit. 
\item Now suppose $\T=\{M^{(i_1)},M^{(i_2)},\dots,M^{(i_t)}\}$ for some $1 \le t \le m$. Let $\Pi$ be the projection operator onto the subspace spanned by $\chi_{\vL}(M^{(i_1)})$, $\chi_{\vL}(M^{(i_2)})$, $\dots$,  $\chi_{\vL}(M^{(i_t)})$. Perform the projective measurement $\{\Pi,I-\Pi\}$ on the state $\chi_{\vL}(M)$. If the outcome corresponds to $\Pi$, then accept $M$; otherwise, reject $M$.
\een 
\label{alg:general} 
\end{algorithm} 

Before prove the correctness of this algorithm, observe that, in step 3,  we should obtain outcome $j$ with probability $p(M_j)$, for any $j$. Then by Eq.(\ref{eq:chernoff2}) and our choice of $L$,  
\beba
\Pr\left[\left|\dfrac{L_j}{L}-p(M_j )\right|\le \dfrac{a^4}{100k}  \right] \ge 1-\dfrac{1}{10k}, 
\eeea
for any $j$. Thus, with probability at least $0.9$, for any $j$ satisfying $L_j/L \ge 0.1a^2/k$, we have 
\beba 
 p(M_{j}) \ge \dfrac{L_j}{L}-\dfrac{a^4}{100k}  \ge (1-0.1a^2)\dfrac{L_j}{L}
 \eeea
 
Now suppose that $M=M^{(i)}$ for some $i$. By the above observation, $M^{(i)}$ is in the set $\T$ with probability at least $0.9$. Then,  $\chi_{\vL}(M)=\chi_{\vL}(M^{(i)})$ is obviously in the subspace spanned by the $\chi_{\vL}(M^{(i_j)})$'s and hence the measurement in step 5 always produces the outcome corresponding to $\Pi$. Therefore, $M=M^{(i)}$ is accepted by algorithm \ref{alg:general} with probability at least $0.9$.

On the other hand, suppose $M$ is accepted by algorithm \ref{alg:general} with probability at least $1/3$. We need to prove that $M$ is $\epsilon$-close to some $M^{(i)}$. Suppose on the contrary that this is not true. Then, still by the above observation,  with probability at least $0.9$, for any $j$ satisfying $L_j/L \ge 0.1a^2/k$, we have $p(M_{j}) \ge (1-0.1a^2)L_j/L$. Furthermore, by the definition of $\T$, any $M^{(i_l)} \in \T$ also satisfies $p(M^{(i_l)}_{j}) \ge (1-0.1a^2)L_j/L$ for such $j$'s. 
Then by lemma \ref{lem:chimchin} and our choice of $L$,
we obtain 
\beba
|\bracket{\chi_{\vL}(M)}{\chi_{\vL}(M^{(i_l)})}| &\le & (1-0.6a^2)^L  \\
&\le & e^{-0.6a^2L}  \\
&\le & \dfrac{1}{5m},
\eeea
for any $l$, and similarly
\beba
|\bracket{\chi_{\vL}(M^{(i_l)})}{\chi_{\vL}(M^{(i_{l'})})}| &\le & \dfrac{1}{5m},  
\eeea
for any $l \neq l'$. Thus, by lemma \ref{lem:pi}, in step 5 we obtain the outcome corresponding to $\Pi$ with probability at most $0.1$. Overall, $M$ can be accepted by  algorithm \ref{alg:general} with probability at most $0.2$, contradicting our assumption.

Finally, the query complexity of algorithm \ref{alg:general} is $O(\max\{k^2\log(k)/a^8,\log(m)/a^2\})$ as claimed. In particular, if the system consists of $n$ qudits, then the query complexity is polynomial (in $n$) as long as $k$ and $1/a$ are at most polynomial and $m$ is at most exponential (in $n$). 
\end{proof}

In general, algorithm \ref{alg:general} may be not efficiently implementable. But from an information-theoretic point of view, theorem \ref{thm:general} shows that we do not need to query the black box too many times to know whether it belongs to some finite set or is far away from this set, as long as the set is not too large (i.e. $m$ is not too large) , the measurements in the set are well-separated (i.e. $\gamma$ is not too small), and any measurement in the set does not have too many possible outcomes (i.e. $k$ is not too large).

Finally, let us apply theorem \ref{thm:general} to the set of stabilizer measurements.  This set contains $4^n$ elements. The distance between any two of them is $1/2$, i.e.
$\Delta(P(\va,\vb),P(\vc,\vd))=1/2$ for any $(\va,\vb) \neq (\vc,\vd) $. Furthermore, each stabilizer measurement has only $2$ possible outcomes. Thus, by theorem \ref{thm:general}, the stabilizer measurements can be $\epsilon$-tested with query complexity $O(\max\{1/\epsilon^8,n/\epsilon^2\})$. This query complexity is only slightly worse than the $O(1/\epsilon^4)$ given by theorem \ref{thm:stabilizer}. This example shows that although algorithm \ref{alg:general} uses only certain distance information about the measurements to be tested, it still can achieve a quite good efficiency.

\section{Estimating the Distance of Two Unknown Measurements}
\label{sec:estdist}
In the previous sections, we have studied testing several properties of a single  measurement. In this section, we consider a different scenario. Suppose we are given two black-box measurement devices, how do we know whether they perform the same measurement? And if not, how do we estimate their distance? We will give an efficient algorithm for this task. Surprisingly, its query complexity does not depend on the dimension of the system, but depends only on the proximity parameter and the number of possible outcomes for the two measurements.

%basic idea
Our basic idea is as follows. Suppose $M=\{M_i\}_{1 \le i \le k}$ and $N=\{N_i\}_{1 \le i \le k}$ are two measurements with at most $k$ possible outcomes. Since
\bea
\Delta^2(M,N)=1-\sum\limits_{i=1}^k |\bracket{v(M_i)}{v(N_i)}|.
\eea
So we can estimate $\Delta^2(M,N)$ by
estimating each $|\bracket{v(M_i)}{v(N_i)}|$. Note that
\beba
|\bracket{v(M_i)}{v(N_i)}|=\sqrt{p(M_i)p(N_i)}|\bracket{\tilde{v}(M_i)}{\tilde{v}(N_i)}|.
\eeea
Thus, we can estimate $|\bracket{v(M_i)}{v(N_i)}|$ by estimating
$p(M_i)$, $p(N_i)$ and $|\bracket{\tilde{v}(M_i)}{\tilde{v}(N_i)}|$.
Recall that $p(M_i)$ (or $p(N_i)$) is the probability of obtaining outcome $i$ when we perform $M$ (or $N$) on the first subsystem of $\ME$, and $\ket{\tilde{v}(M_i)}$ (or $\ket{\tilde{v}(N_i)}$) is the corresponding post-measurement state. So we repeat this experiment many times. From the outcome statistics, we can estimate $p(M_i)$ and $p(N_i)$ with good precision. As to $|\bracket{\tilde{v}(M_i)}{\tilde{v}(N_i)}|$, we can estimate it by performing the swap test  \cite{BCW+01} on the states $\ket{\tilde{v}(M_i)}$ and $\ket{\tilde{v}(N_i)}$. In fact, the swap test can be used to estimate the overlap between any two pure states, as stated by the following lemma.

%estimate the inner product by swap test
\begin{lemma}
Given $O(\log(1/\delta)/\epsilon^4)$ copies of $\ket{\phi}$ and $\ket{\psi}$, we can estimate $|\bracket{\phi}{\psi}|$ with precision $\epsilon$ and confidence $1-\delta$. Furthermore, the estimation algorithm can be efficiently implemented.
\label{lem:swaptest}
\end{lemma}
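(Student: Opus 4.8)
The plan is to build the estimator around the swap test. Recall that the swap test on two pure states $\ket{\phi}$ and $\ket{\psi}$ introduces one ancilla qubit, applies a Hadamard to it, then a controlled-SWAP of the two registers conditioned on the ancilla, then another Hadamard, and finally measures the ancilla. A direct computation shows that this measurement yields outcome $0$ with probability $P_0 = \frac{1}{2}(1+|\bracket{\phi}{\psi}|^2)$ and outcome $1$ with the complementary probability. Thus a single run of the swap test consumes one copy each of $\ket{\phi}$ and $\ket{\psi}$ and produces a $\{0,1\}$-valued random variable whose expectation encodes $|\bracket{\phi}{\psi}|^2$.

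First I would estimate $P_0$ empirically. Running the swap test $n$ times on fresh copies gives i.i.d.\ $\{0,1\}$-valued outcomes with mean $P_0$, so by the Chernoff--Hoeffding bound of Eq.(\ref{eq:chernoff2}), choosing $n=O(\log(1/\delta)/\eta^2)$ yields an estimate $\hat{P}_0$ with $|\hat{P}_0 - P_0| \le \eta$ with confidence at least $1-\delta$. Next I would recover $s \coloneqq |\bracket{\phi}{\psi}|$ from the relation $s^2 = 2P_0-1$. Setting $\hat{s} = \sqrt{\max\{0,\,2\hat{P}_0-1\}}$, I claim $|\hat{s}-s| \le \sqrt{2\eta}$. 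Indeed $|2\hat{P}_0-1 - s^2| \le 2\eta$ (this also covers the clamped case, since then $s^2 = 2P_0-1 \le (2\hat{P}_0-1)+2\eta < 2\eta$), and the elementary inequality $|\sqrt{a}-\sqrt{b}| \le \sqrt{|a-b|}$ for $a,b\ge 0$ gives the bound. To guarantee $|\hat{s}-s| \le \epsilon$ it therefore suffices to take $\eta = \epsilon^2/2$, which requires $n = O(\log(1/\delta)/\epsilon^4)$ runs, hence $O(\log(1/\delta)/\epsilon^4)$ copies of each state.

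The step I expect to be the crux is precisely this square-root extraction. Because the swap test reveals only $|\bracket{\phi}{\psi}|^2$, recovering $|\bracket{\phi}{\psi}|$ itself with additive precision $\epsilon$ forces us to estimate the squared overlap to precision $\epsilon^2$, and this is what turns the naive $1/\epsilon^2$ sample cost into $1/\epsilon^4$. The inequality $|\sqrt{a}-\sqrt{b}|\le\sqrt{|a-b|}$, which is tight when one argument is $0$, is what makes this tradeoff rigorous even when $\ket{\phi}$ and $\ket{\psi}$ are nearly orthogonal; one must also verify that clamping the estimate at $0$ never increases the error beyond this bound.

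Finally, for efficient implementability I would note that the swap test needs only one ancilla qubit, two Hadamards, and a controlled swap of the two $D$-dimensional registers, costing $O(\log D)$ elementary gates (for $D=d^n$ this is a qudit-wise controlled swap), while the remaining classical post-processing---computing a fraction and a square root---is trivial. Combining these observations yields both the claimed sample complexity and the efficiency of the estimation algorithm.
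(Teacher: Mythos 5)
Your proposal is correct and follows essentially the same route as the paper's proof: run the swap test, estimate $P_0=(1+|\bracket{\phi}{\psi}|^2)/2$ to precision $\epsilon^2/2$ via the Chernoff--Hoeffding bound, and take the square root of $2\hat{P}_0-1$, using $|\sqrt{a}-\sqrt{b}|\le\sqrt{|a-b|}$ (equivalently the paper's $|a-b|^2\le|a^2-b^2|$) to convert precision $\epsilon^2$ on the square into precision $\epsilon$ on the overlap. Your explicit clamping at $0$ is a small tidiness improvement over the paper's version but not a different argument.
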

\begin{proof}
Given any $\ket{\phi}$ and $\ket{\psi}$, the swap test is the standard procedure to estimate $|\bracket{\phi}{\psi}|$. It works as follows: first, we prepare a qubit in the state $\ket{+}=1/\sqrt{2}(\ket{0}+\ket{1})$; then, we perform a controlled-swap gate on $\ket{\phi}$ and $\ket{\psi}$, using $\ket{+}$ as the control qubit (a swap gate is the operation
$\ket{\phi}\ket{\psi} \to \ket{\psi}\ket{\phi}$); finally, we apply a Hadamard gate on the control qubit and measure it in the standard basis. The circuit for this procedure is illustrated in Fig. \ref{fig:swap}. 

\begin{figure}[H]
\begin{equation*}
\Qcircuit @C=2em @R=2em {
\lstick{\ket{0}}  & \gate{H} &  \ctrl{1} &  \gate{H} & \meter \\
\lstick{\ket{\phi}} & \qw & \multigate{1}{\textrm{Swap}}  & \qw & \qw \\
\lstick{\ket{\psi}} & \qw & \ghost{\textrm{Swap}} & \qw & \qw
}
\end{equation*}
\caption{The swap test}
\label{fig:swap}
\end{figure}
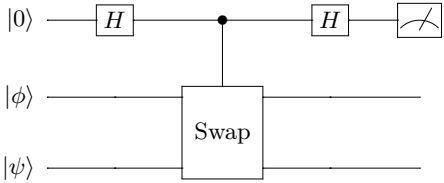
It is easy to see that the final state before the measurement is 
\bea
\dfrac{1}{2}\ket{0}(\ket{\phi}\ket{\psi}+\ket{\psi}\ket{\phi})
+\dfrac{1}{2}\ket{1}(\ket{\phi}\ket{\psi}-\ket{\psi}\ket{\phi}).
\eea
Thus, the measurement on the first qubit yields outcome $0$ with the probability $(1+|\bracket{\phi}{\psi}|^2)/2$. 

Now consider the following estimation algorithm: 

\begin{algorithm}[H]
\caption{Estimating $|\bracket{\phi}{\psi}|$}
\ben
\item Assume we have $L=\lceil 2\ln(2/\delta)/\epsilon^4 \rceil$ copies of $\ket{\phi}$ and $\ket{\psi}$. We run the swap test on each pair of $\ket{\phi}$ and $\ket{\psi}$. Let $\hat{p}_0$ be the fraction of outcome $0$ among the outcomes obtained. Then return $\sqrt{2\hat{p}_0-1}$ as the estimate of $|\bracket{\phi}{\psi}|$.
\een
\label{alg:swaptest}
\end{algorithm}

By Eq.(\ref{eq:chernoff2}) and our choice of $L$, $\hat{p}_0$ will be $\epsilon^2/2$-close to $(1+|\bracket{\phi}{\psi}|^2)/2$ with probability at least $1-\delta$. This implies that $2\hat{p}_0-1$ is $\epsilon^2$-
close to $|\bracket{\phi}{\psi}|^2$ with probability at least $1-\delta$, and hence $\sqrt{2\hat{p}_0-1}$ is $\epsilon$-close to $|\bracket{\phi}{\psi}|$ with probability at least $1-\delta$ (note that $|a-b|^2 \le |a^2-b^2|$ for any $a,b\in [0,1]$.)

The swap test requires three kinds of quantum operations including the Hadamard gate, controlled-swap gate and the measurement in the standard basis. If $\ket{\phi}$ and $\ket{\psi}$ are $n$-qudit states, then the controlled-swap gate can be implemented with $O(n)$ basic quantum gates. Algorithm \ref{alg:swaptest} repeats the swap test 
$O(\log(1/\delta)/\epsilon^4)$ times. So it can be efficiently implemented. 
\end{proof}

%ingore small measurement opertors
So, if we have sufficiently many copies of $\ket{\tilde{v}(M_i)}$ and $\ket{\tilde{v}(N_i)}$, then we can estimate their overlap with good precision. But what if $p(M_i)$ or $p(N_i)$ is so small that the outcome $i$ almost never occurs, unless we repeat the experiment extremely many times? Note that for such $i$'s, $\|M_i\|_F$
or $\|N_i\|_F$ must be small. The next lemma shows that the total contribution of such $|\bracket{v(M_i)}{v(N_i)}|$'s to $\Delta^2(M,N)$ is small, and hence we can safely ignore them.

\begin{lemma}
Suppose $M=\{M_1,M_2,\dots,M_k\}$ and $N=\{N_1,N_2,\dots,N_k\}$ are two measurements. Let $\I_{\delta}=\{i: p(M_i) \le \delta~\textrm{or}~p(N_i) \le \delta\}$ for any $0<\delta<1$ . Then 
\bea
\sum\limits_{i \in \I_{\delta}} |\bracket{v(M_i)}{v(N_i)}|
\le 2\sqrt{\delta k}.
\eea
\label{lem:smallvmivni}
\end{lemma}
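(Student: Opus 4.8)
The plan is to reduce the claim to two applications of the Cauchy--Schwarz inequality, glued together by the normalization $\sum_{i=1}^k p(M_i) = \sum_{i=1}^k p(N_i) = 1$. This normalization is not an extra hypothesis: it follows from the completeness equation $\sum_i M_i^{\dagger}M_i = I$ together with the identity $p(M_i) = \frac{1}{D}\|M_i\|_F^2$, since $\sum_i \|M_i\|_F^2 = \tr(\sum_i M_i^{\dagger}M_i) = \tr(I) = D$.

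First I would bound each summand individually. Because $\ket{v(M_i)}$ and $\ket{v(N_i)}$ satisfy $\|\ket{v(M_i)}\|^2 = p(M_i)$ and $\|\ket{v(N_i)}\|^2 = p(N_i)$, the Cauchy--Schwarz inequality gives $|\bracket{v(M_i)}{v(N_i)}| \le \sqrt{p(M_i)\,p(N_i)}$. Hence it suffices to prove $\sum_{i \in \I_\delta}\sqrt{p(M_i)\,p(N_i)} \le 2\sqrt{\delta k}$.

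Next I would split the index set. Writing $A = \{i : p(M_i) \le \delta\}$ and $B = \{i : p(N_i) \le \delta\}$, we have $\I_\delta = A \cup B$, so $\sum_{i \in \I_\delta}\sqrt{p(M_i)p(N_i)} \le \sum_{i \in A}\sqrt{p(M_i)p(N_i)} + \sum_{i \in B}\sqrt{p(M_i)p(N_i)}$ (all terms being non-negative, over-counting $A \cap B$ only weakens the inequality). On $A$ the smallness of $p(M_i) \le \delta$ yields $\sqrt{p(M_i)p(N_i)} \le \sqrt{\delta}\,\sqrt{p(N_i)}$, and symmetrically on $B$. Then a second Cauchy--Schwarz, now over the index, gives $\sum_{i \in A}\sqrt{p(N_i)} \le \sqrt{|A|}\,\sqrt{\sum_{i \in A}p(N_i)} \le \sqrt{k}\cdot 1$, and likewise $\sum_{i \in B}\sqrt{p(M_i)} \le \sqrt{k}$. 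Adding the two contributions produces $\sqrt{\delta}\,\sqrt{k} + \sqrt{\delta}\,\sqrt{k} = 2\sqrt{\delta k}$, as required.

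The calculation is essentially routine, so there is no serious obstacle; the only points needing care are that $\I_\delta$ must be handled through the \emph{union} $A \cup B$ rather than a disjoint split, and that it is precisely the completeness relation — not any feature of the inner products themselves — that supplies $\sum_i p(M_i) = 1$. This last fact is what allows the second Cauchy--Schwarz to terminate at $\sqrt{k}$ with a bound that is completely independent of the dimension $D$.
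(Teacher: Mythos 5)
Your proposal is correct and follows essentially the same route as the paper's own proof: bound each term by $\sqrt{p(M_i)p(N_i)}$ via Cauchy--Schwarz, split $\I_\delta$ as the union $A\cup B$, and bound each piece by $\sqrt{\delta k}$ using $|A|,|B|\le k$ and $\sum_i p(\cdot)\le 1$. You merely make explicit two points the paper leaves implicit (the index-level Cauchy--Schwarz in the final step and the completeness-equation origin of the normalization), which is fine.
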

\begin{proof}
Let $A=\{i: p(M_i)\le \delta\}$ and $B=\{i: p(N_i)\le \delta\}$.
Then $\I_{\delta}=A \cup B$. So it is sufficient to show
\bea
\sum\limits_{i \in A} |\bracket{v(M_i)}{v(N_i)}|
\le \sqrt{\delta k},
\label{eq:sumvmivni1}\\
\sum\limits_{i \in B} |\bracket{v(M_i)}{v(N_i)}|
\le \sqrt{\delta k},
\label{eq:sumvmivni2}.
\eea
To prove Eq.(\ref{eq:sumvmivni1}), we use the Cauchy-Schwarz inequality and the definition of $A$,
\beba
\sum\limits_{i \in A} |\bracket{v(M_i)}{v(N_i)}|
&\le &
\sum\limits_{i \in A} \|\ket{v(M_i)}\| \cdot \|\ket{v(N_i)}\| \\
&=&
\sum\limits_{i \in A} \sqrt{p(M_i)p(N_i)} \\
&\le &
\sqrt{\delta} \sum\limits_{i \in A} \sqrt{p(N_i)} \\
&\le &\sqrt{\delta k},
\eeea
where the last step follows from $\sum\limits_{i \in A}p(N_i) \le 1$ and $|A|\le k$.
Eq.(\ref{eq:sumvmivni2}) can be proved similarly.

\end{proof}

Now we are ready to formally describe our result.

\begin{theorem}
Suppose $M=\{M_1,M_2,\dots,M_k\}$ and $N=\{N_1,N_2,\dots,N_k\}$ are two measurements with at most $k$ possible outcomes on a $D$-dimensional system. Then we can estimate their distance (i.e. $\Delta(M,N)$) with precision $\epsilon$ and confidence $0.8$ by using them $O(k^5\log(k)/\epsilon^{12} )$ times. Furthermore, the estimation algorithm can be efficiently implemented provided that $k$ is polylogarithmic in $D$.
\label{thm:estdist}
\end{theorem}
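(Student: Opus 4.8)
The plan is to estimate $\Delta^2(M,N)$ rather than $\Delta(M,N)$ directly, exploiting the identity $\Delta^2(M,N)=1-\sum_{i=1}^k|\bracket{v(M_i)}{v(N_i)}|$ stated just above. Since $|\sqrt{a}-\sqrt{b}|\le\sqrt{|a-b|}$ for $a,b\in[0,1]$, an additive $\epsilon^2$-estimate of $T:=\sum_{i=1}^k|\bracket{v(M_i)}{v(N_i)}|$ yields an $\epsilon$-estimate of $\Delta(M,N)$ after taking a square root (clamping the argument into $[0,1]$ first). So the whole task reduces to estimating $T$ to additive precision $\epsilon^2$ with constant confidence. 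First I would fix a probability threshold $\delta=\Theta(\epsilon^4/k)$ and split the sum into the \emph{good} outcomes, where $p(M_i)>\delta$ and $p(N_i)>\delta$, and the rest $\I_\delta$. By lemma \ref{lem:smallvmivni} the discarded terms satisfy $\sum_{i\in\I_\delta}|\bracket{v(M_i)}{v(N_i)}|\le 2\sqrt{\delta k}\le\epsilon^2/2$, so they can simply be absorbed into the error budget.

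For each good outcome I would estimate the three factors in $|\bracket{v(M_i)}{v(N_i)}|=\sqrt{p(M_i)p(N_i)}\,|\bracket{\tilde v(M_i)}{\tilde v(N_i)}|$ separately. Repeating the Choi--Jamio\l{}kowski experiment (prepare $\ME$, apply the black box to the first subsystem, record the outcome and keep the post-measurement state) simultaneously yields empirical estimates of all $p(M_i)$ via Eq.(\ref{eq:chernoff2}) and a stockpile of copies of every $\ket{\tilde v(M_i)}$; I would do the same with $N$, then estimate each overlap $|\bracket{\tilde v(M_i)}{\tilde v(N_i)}|$ by the swap test of lemma \ref{lem:swaptest}. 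Because there are at most $k$ good terms, it suffices to estimate each to precision $\epsilon^2/(2k)$; using $|\hat f\hat g-fg|\le|\hat f-f|+|\hat g-g|$ together with $|\sqrt{a}-\sqrt{b}|\le\sqrt{|a-b|}$ once more, this is achieved by estimating each $p(M_i),p(N_i)$ to precision $\Theta(\epsilon^4/k^2)$ and each overlap to precision $\Theta(\epsilon^2/k)$. A union bound over the $O(k)$ outcomes for both the probability estimates and the swap tests contributes the $\log(k)$ factor in the stated complexity.

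The query complexity is dominated by harvesting enough post-measurement states for the swap tests. Estimating one overlap to precision $\Theta(\epsilon^2/k)$ costs $O(\log(k)\,k^4/\epsilon^8)$ copies of each of $\ket{\tilde v(M_i)}$ and $\ket{\tilde v(N_i)}$ by lemma \ref{lem:swaptest}; since a good outcome occurs with probability at least $\delta=\Theta(\epsilon^4/k)$, the experiment must be run $O(\log(k)\,k^4/\epsilon^8)/\delta=O(k^5\log(k)/\epsilon^{12})$ times to accumulate that many copies (a single run supplies a copy for \emph{whichever} outcome occurs, so all good outcomes are served in parallel). This is exactly the claimed bound and it swamps the $O(k^4\log(k)/\epsilon^8)$ runs needed for the probability estimates. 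Efficiency then follows from the building blocks: preparing $\ME$, running the swap test, and the classical bookkeeping are all cheap, and the number of distinct outcomes (hence of swap tests) stays polynomial when $k$ is polylogarithmic in $D$.

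I expect the main obstacle to be the tension governing the choice of $\delta$: shrinking $\delta$ reduces the truncation error from lemma \ref{lem:smallvmivni} but inflates the $1/\delta$ blow-up in the repetitions needed to collect the rarer post-measurement states, and it is precisely the balance $\delta=\Theta(\epsilon^4/k)$ that produces the $\epsilon^{-12}$ and $k^5$ dependence. A secondary nuisance is that $\I_\delta$ is defined through the \emph{true} probabilities, which are unknown; I would classify outcomes using the empirical probabilities against a slightly relaxed threshold (e.g. $\delta/2$), so that every retained outcome still has true probability $\gtrsim\delta$ (guaranteeing enough swap-test copies) while every discarded outcome still falls under a version of lemma \ref{lem:smallvmivni}. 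This boundary argument, together with the layered error propagation through products and square roots, is where the bookkeeping is heaviest, though none of it is conceptually deep.
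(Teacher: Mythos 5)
Your proposal is correct and follows essentially the same route as the paper: split the outcomes by an empirical probability threshold of order $\epsilon^4/k$, bound the discarded terms via lemma \ref{lem:smallvmivni}, estimate each surviving $|\bracket{v(M_i)}{v(N_i)}|$ as a product of $\sqrt{p(M_i)p(N_i)}$ (to precision $\Theta(\epsilon^4/k^2)$ in the probabilities) and the swap-test overlap (to precision $\Theta(\epsilon^2/k)$), propagate errors through the product and the final square root, and take a union bound over the $O(k)$ outcomes. The parameter choices and the resulting $O(k^5\log(k)/\epsilon^{12})$ count match the paper's algorithm \ref{alg:estdist} in all essentials.
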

\begin{proof}

We run the following algorithm on $M$ and $N$: 

\begin{algorithm}[H]
\caption{Estimating $\Delta(M,N)$} 
\ben
\item Let $L=\left\lceil \dfrac{50000 k^5 \ln(40k)}{\epsilon^{12}} \right\rceil$, $T=\left\lfloor \left(\dfrac{\epsilon^4}{16k}-\dfrac{\epsilon^4}{36k^2} \right)L \right\rfloor$.
\item Prepare $L$ copies of $\ME=\dfrac{1}{\sqrt{D}}\sum\limits_{i=0}^{D-1}\ket{i}\ket{i}$. Perform $M$ on the first subsystem of each copy. Let $a_i$ be the fraction of outcome $i$ among the $L$ outcomes obtained, for $i=1,2,\dots,k$. Let $A=\{i: a_i \ge \dfrac{\epsilon^4}{16k}-\dfrac{\epsilon^4}{36k^2}\}$.
\item 
Prepare another $L$ copies of $\ME$. Perform $N$ on the first subsystem of each copy. Let 
$b_i$ be the fraction of outcome $i$ among the $L$ outcomes obtained, for $i=1,2,\dots,k$.
Let $B=\{i: b_i \ge \dfrac{\epsilon^4}{16k}-\dfrac{\epsilon^4}{36k^2}\}$.
\item For any $i \in A \cap B$, we have at least $T$ copies of 
$\ket{\tilde{v}(M_i)}$ and $\ket{\tilde{v}(N_i)}$. Then we run algorithm \ref{alg:swaptest} on them, and get an estimate 
$\lambda_i$ of $|\bracket{\tilde{v}(M_i)}{\tilde{v}(N_i)}|$.
\item Return $\hat{\Delta}\coloneqq \sqrt{\sum\limits_{i \in A \cap B} \sqrt{a_ib_i}\lambda_i}$
as an estimate of $\Delta(M,N)$.
\een
\label{alg:estdist}
\end{algorithm}

Now we prove the correctness of this algorithm. Note that in step 2, by Eq.(\ref{eq:chernoff2}) and out choice of $L$,  $a_i$ is $ \epsilon^4/(36k^2)$-close to $p(M_i)$ with probability at least $1-1/(20k)$ for each $i$. Therefore, with probability at least $0.95$, every $i$ satisfying $p(M_i) \ge \epsilon^4/(16k)$ is in $A$. Similarly, in step 3, with probability at least $0.95$, every $i$ satisfying $p(N_i) \ge \epsilon^4/(16k)$ is in $B$.  Furthermore, by the proof of lemma \ref{lem:swaptest} and our choice of $T$, for each $i \in A \cap B$, the $\lambda_i$ produced by algorithm \ref{alg:swaptest} is $\epsilon^2/(6k)$-close to 
$|\bracket{\tilde{v}(M_i)}{\tilde{v}(N_i)}|$ with probability at least $1-1/(10k)$.
Thus, by a union bound, with probability at least $0.8$, we simultaneously have: 
\ben
\item $|a_i-p(M_i)| \le \dfrac{\epsilon^4}{36k^2}$, $\forall i$;
\item $|b_i-p(N_i)| \le \dfrac{\epsilon^4}{36k^2}$, $\forall i$;
\item $p(M_i) \le \dfrac{\epsilon^4}{16k}$, $\forall i \not\in A$;
\item $p(N_i) \le \dfrac{\epsilon^4}{16k}$, $\forall i \not\in B$;
\item $|\lambda_i-|\bracket{\tilde{v}(M_i)}{\tilde{v}(N_i)}|| \le 
\dfrac{\epsilon^2}{6k}$, $\forall i \in A\cap B.$
\een
Using these facts and lemma \ref{lem:smallvmivni}, we obtain
\beba
&&|\hat{\Delta}^2-\Delta^2(M,N)| \\
&=&
|(\sum\limits_{i \in A\cap B} |\bracket{v(M_i)}{v(N_i)}|-\sum\limits_{i \in A\cap B} \sqrt{a_ib_i}\lambda_i) \\
&&+\sum\limits_{i \not\in A\cap B} |\bracket{v(M_i)}{v(N_i)}|| \\
&\le &
\sum\limits_{i \in A\cap B} 
||\bracket{v(M_i)}{v(N_i)}|-\sqrt{a_ib_i}\lambda_i| + \dfrac{\epsilon^2}{2} \\
&\le & \dfrac{\epsilon^2}{2}+\dfrac{\epsilon^2}{2} \\
&=&\epsilon^2,
\eeea
where in the third step we have used 
\beba
&&||\bracket{v(M_i)}{v(N_i)}|-\sqrt{a_ib_i}\lambda_i| \\
&\le & |\sqrt{p(M_i)}-\sqrt{a_i}| 
+|\sqrt{p(N_i)}-\sqrt{b_i}| \\
&&+||\bracket{\tilde{v}(M_i)}{\tilde{v}(N_i)}| -\lambda_i| \\
&\le & \dfrac{\epsilon^2}{6k}+\dfrac{\epsilon^2}{6k}+\dfrac{\epsilon^2}{6k} \\
&= & \dfrac{\epsilon^2}{2k},~~~\forall i \in A \cap B.
\eeea
(Note that for any $0 \le \alpha_1,\alpha_2,
\beta_1,\beta_2,\gamma_1,\gamma_2 \le 1$, we have 
$|\alpha_1 \beta_1 \gamma_1
-\alpha_2 \beta_2 \gamma_2|
\le 
|\alpha_1
-\alpha_2|
+|\beta_1
-\beta_2|
+|\gamma_1
-\gamma_2|$).
Thus,
\beba
|\hat{\Delta}-\Delta(M,N)|  \le \epsilon
\eeea
as desired.

Finally, this algorithm obviously has query complexity $O(k^5 \log(k)/\epsilon^{12})$. Moreover, besides querying the black boxes, it requires two kinds of quantum operations: 1.preparing $O(k^5 \log(k)/\epsilon^{12})$ copies of $\ME$; 2. running algorithm \ref{alg:swaptest} $O(k)$ times. $\ME$ can be easily prepared, and algorithm \ref{alg:swaptest} is also efficient. In addition, the classical processing is also efficient. So as long as $k$ is polylogarithmic in $D$, algorithm \ref{alg:estdist} can be efficiently implemented.
\end{proof}

As a corollary of theorem \ref{thm:estdist}, we can easily determine whether two unknown measurements are the same or quite different.

\begin{corollary}
Suppose $M=\{M_1,M_2,\dots,M_k\}$ and $N=\{N_1,N_2,\dots,N_k\}$ are two measurements with $k$ possible outcomes on a $D$-dimensional system. Assuming that they are either the same or $\epsilon$-far away, we can know which case it is with probability at least $0.8$ by using them $O(k^5\log(k)/\epsilon^{12} )$ times. Furthermore, the testing algorithm can be efficiently implemented provided that $k$ is polylogarithmic in $D$.
\end{corollary}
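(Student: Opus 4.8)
The plan is to reduce the corollary directly to Theorem \ref{thm:estdist}. The key observation is that the two hypothesized cases are separated by a gap in the value of $\Delta(M,N)$: if $M$ and $N$ are the same measurement, then $[M]=[N]$ and hence $\Delta(M,N)=0$ by the defining property that $\Delta(M,N)=0$ iff $[M]=[N]$; whereas if they are $\epsilon$-far, then $\Delta(M,N) \ge \epsilon$ by definition. So it suffices to estimate $\Delta(M,N)$ accurately enough to tell these two cases apart, and then apply a simple threshold test to the estimate.

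Concretely, I would run the estimation algorithm of Theorem \ref{thm:estdist} (algorithm \ref{alg:estdist}) with precision parameter $\epsilon/3$ in place of $\epsilon$, obtaining an estimate $\hat{\Delta}$ that, with probability at least $0.8$, satisfies $|\hat{\Delta}-\Delta(M,N)| \le \epsilon/3$. In the ``same'' case this forces $\hat{\Delta} \le \epsilon/3$, while in the ``$\epsilon$-far'' case it forces $\hat{\Delta} \ge \epsilon - \epsilon/3 = 2\epsilon/3$. Since these two intervals are disjoint, I would declare the measurements identical if $\hat{\Delta} < \epsilon/2$ and declare them $\epsilon$-far otherwise; this decision is correct whenever the estimate lies within $\epsilon/3$ of the true distance, i.e. with probability at least $0.8$, as required.

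For the query complexity, I would observe that running Theorem \ref{thm:estdist} with precision $\epsilon/3$ costs $O(k^5\log(k)/(\epsilon/3)^{12}) = O(k^5\log(k)/\epsilon^{12})$ queries, since the factor $3^{12}$ is an absolute constant and is absorbed into the big-$O$. Efficiency likewise carries over verbatim from Theorem \ref{thm:estdist}: the only quantum operations are preparing copies of $\ME$ and invoking the swap-test subroutine (algorithm \ref{alg:swaptest}), together with easy classical post-processing, so the whole procedure is efficiently implementable whenever $k$ is polylogarithmic in $D$.

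Because the entire argument is a black-box reduction to an already-established theorem, there is essentially no technical obstacle. The only point requiring (minor) care is the choice of precision and threshold so that the ``same'' and ``$\epsilon$-far'' intervals are cleanly separated with no boundary ambiguity; taking precision $\epsilon/3$ and threshold $\epsilon/2$ leaves a clean gap. This step in turn relies crucially on the fact that $\Delta(M,N)$ is \emph{exactly} zero, not merely small, in the ``same'' case, which is what guarantees a genuine gap between the two regimes rather than just a separation in expectation.
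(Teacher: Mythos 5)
Your proposal is correct and follows essentially the same route as the paper: both reduce the corollary to a single run of the distance-estimation algorithm of Theorem \ref{thm:estdist} followed by a threshold test on the resulting estimate $\hat{\Delta}$. The only (immaterial) difference is that the paper estimates with precision $\epsilon/2$ and thresholds at $\epsilon/2$, whereas you estimate with precision $\epsilon/3$ and threshold at $\epsilon/2$, which gives a slightly cleaner separation of the two cases but the same $O(k^5\log(k)/\epsilon^{12})$ query complexity and the same efficiency guarantee.
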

\begin{proof}
We simply use algorithm \ref{alg:estdist} to estimate $\Delta(M,N)$ with precision $\epsilon/2$. If it is smaller than $\epsilon/2$, then we conclude that $M$ and $N$ are the same; otherwise, we conclude that $M$ and $N$ are $\epsilon$-far away. By theorem \ref{thm:estdist}, we succeed with probability at least $0.8$. In addition, the query complexity and time complexity are as claimed.
\end{proof}

\section{Conclusion}
\label{sec:conclude}

To summarize, we have introduced a metric $\Delta$ for quantum measurements on finite-dimensional systems. This metric  indicates the average difference between the behaviors of two measurements on a random input state. Then we show that, with respect to this metric, the stabilizer measurements, $k$-local measurements and permutation-invariant measurements can be all efficiently tested with query complexity independent of the system's dimension. Moreover, we also present an algorithm for testing any finite set of measurements. Finally, we give an efficient algorithm for estimating the distance of two unknown measurements, and its query complexity is also independent of the system's dimension. As a consequence, we can easily test whether two unknown measurements are identical or quite different. 

%entanglement is important
It is worth noting that entanglement plays a crucial role in all of our testing  algorithms. Namely, we need to prepare the maximally entangled state $\ME$, then ``imprint" the measurement operator $M_i$ on it, obtaining the state $\ket{v(M_i)}=(M_i \otimes I)\ME$, and finally extract information about $M_i$ from this state. It seems that by  utilizing entanglement, we can somehow gain a better  understanding of the global property of a measurement. We suspect that the three classes of measurements considered in this paper cannot be tested with the same efficiency if entanglement is not allowed.

%open questions
Finally, we would like to point out several directions that deserve further investigation:

%improve our results and lower bound
First, in this paper we have focused on giving \textit{upper bounds} for the query complexity of testing quantum measurements. It remains open to prove \textit{lower bounds} for the query complexity of the same task. In particular, it would be interesting to know whether our testing algorithms are optimal and, if they are not, how to improve them. 

%how about povm?
Second, in this paper we have assumed that we have access to both the measurement outcome and post-measurement state. For some applications, however, the post-measurement state is inaccessible and people are merely interested the outcome. In such cases, the measurement can be more conveniently described by a POVM $\{E_1,E_2,\dots,E_k\}$ where $E_i \ge 0$ and $\sum_{i=1}^k E_i=I$. It is worth studying the testing of POVMs. Note that in this new problem, we can extract information only from the outcome statistics. So it seems to be harder than the one considered in this paper, and requires quite different techniques.

%how about quantum channel?
Third, it would be also interesting to study the testing of quantum channels. Quantum measurements and quantum channels are closely related. Namely, if we perform a measurement $\{M_1,M_2,\dots,M_k\}$ on a state $\rho$ but the outcome is unknown, then this process can be equivalently characterized as a quantum channel transforming $\rho$ into $\sum_{i=1}^k M_i \rho M_i^{\dagger}$. On the contrary to testing POVMs, in testing quantum channels we can obtain information only from the mixed output states. So it also seems to be more difficult than the one considered in this paper, and needs new  techniques.

%other metric
At last, the query complexity of property testing depends crucially on the metric used. Here we have considered $\Delta$
which is a very natural and reasonable distance function for  quantum measurements. It would be interesting to study the testing of quantum measurements with respect to other  reasonable metrics.

\section*{Acknowledgments}
This research was supported by NSF Grant CCR-0905626
and ARO Grant W911NF-09-1-0440.

\begin{appendix}

\section{Proof of Lemma \ref{lem:pauliclose}}
\label{apd:pauliclose}
Without loss of generality, we can assume that 
$\mu_{\vzero,\vzero}(M_{1})$
and
$\mu_{\vzero,\vzero}(M_{2})$
are both real and non-negative (if not, we can multiply
$M_{1}$ or $M_{2}$ by appropriate phases to make this true). Then by Eqs.(\ref{eq:condm11}) and (\ref{eq:condm21}) 
we can assume
$\mu_{\vzero,\vzero}(M_{1})=\alpha_1$,
$\mu_{\vzero,\vzero}(M_{2})=\alpha_2$,
$\mu_{\va,\vb}(M_{1})=\beta_1 e^{i\theta_{1}}$
and
$\mu_{\va,\vb}(M_{2})=\beta_2 e^{i\theta_{2}}$
where 
\bea
\alpha_1,\alpha_2,\beta_1,\beta_2 \in \left[\sqrt{\dfrac{1}{4}-\gamma},\sqrt{\dfrac{1}{4}+\gamma} \right]
\label{eq:alphabeta+-}
\eea
and $\theta_{1},\theta_{2} \in [0,2\pi)$. Plugging this into Eqs.(\ref{eq:condm12}) and (\ref{eq:condm22}) we obtain
\bea
\cos(\theta_1) \ge \dfrac{1/4-\delta}{1/4+\gamma}
\ge  1-4\delta-4\gamma,\\
\label{eq:theta+}
\cos(\theta_2) \le -\dfrac{1/4-\delta}{1/4+\gamma}
\le  -1+4\delta+4\gamma.
\label{eq:theta-}
\eea
Meanwhile, combining Eqs.(\ref{eq:pau.sum.1}), (\ref{eq:condm11}) and (\ref{eq:condm21}) yields
\beba 
\sum\limits_{i=1,2} \sum\limits_{\substack{
(\vx,\vz) \neq (\vzero,\vzero),(\va,\vb)}}|\mu_{\vx,\vz}(M_i)|^2+\sum\limits_{i \ge 3} \sum\limits_{\vx,\vz \in \Ztn}|\mu_{\vx,\vz}(M_i)|^2 \\ 
\le 4\gamma. 
\label{eq:tailmu}
\eeea
Now consider the distance between $M_i$ and $P_i(\va,\vb)$ for different $i$'s:
\beba
\Delta^2(M_{1},P_{1}(\va,\vb)) 
&\le & 
\dfrac{1}{2D}\|M_{1}-P_{1}(\va,\vb)\|_F^2\\
&=&
\dfrac{1}{2}\left|\alpha_1-\dfrac{1}{2} \right|^2
+
\dfrac{1}{2}\left|\beta_1e^{i\theta_{1}}-\dfrac{1}{2} \right|^2\\
&&+
\dfrac{1}{2}\sum\limits_{(\vx,\vz)\neq(\vzero,\vzero),(\va,\vb)}
|\mu_{\vx,\vz}(M_{1})|^2,
\label{eq:m+p+}
\eeea
\beba
\Delta^2(M_{2},P_{2}(\va,\vb))
&\le & 
\dfrac{1}{2D}\|M_{2}-P_{2}(\va,\vb)\|_F^2\\
&=&
\dfrac{1}{2}\left|\alpha_2-\dfrac{1}{2}\right|^2
+
\dfrac{1}{2}\left|\beta_2e^{i\theta_{2}}+\dfrac{1}{2}\right|^2\\
&&+
\dfrac{1}{2}\sum\limits_{(\vx,\vz)\neq(\vzero,\vzero),(\va,\vb)}
|\mu_{\vx,\vz}(M_{2})|^2,
\label{eq:m-p-}
\eeea
\beba
\Delta^2(M_{i},P_{i}(\va,\vb)) 
&=& \dfrac{1}{2D} \|M_i\|_F^2\\
&=&
\dfrac{1}{2}\sum\limits_{\vx,\vz \in \Ztn} |\mu_{\vx,\vz}(M_{i})|^2,~~\forall i \ge 3.
\label{eq:mipi}
\eeea
Here $D=2^n$. Taking the sum of Eqs.(\ref{eq:m+p+}), (\ref{eq:m-p-})
and (\ref{eq:mipi}) for all $i \ge 3$ and using
Eq.(\ref{eq:tailmu}), we obtain
\beba
\Delta^2(M,P(\va,\vb))
&\le & 
\dfrac{1}{2}\left|\alpha_1-\dfrac{1}{2}\right|^2
+\dfrac{1}{2}\left|\beta_1e^{i\theta_{1}}-\dfrac{1}{2}\right|^2 \\
&&+\dfrac{1}{2}\left|\alpha_2-\dfrac{1}{2}\right|^2 
+\dfrac{1}{2}\left|\beta_2e^{i\theta_{2}}+\dfrac{1}{2}\right|^2\\
&&+2\gamma 
\label{eq:mpvavb}
\eeea
Now by Eq.(\ref{eq:alphabeta+-}) we have
\bea
\left|\alpha_1-\dfrac{1}{2}\right|^2 
\le \left|\alpha_1^2-\dfrac{1}{4}\right| 
\le \gamma,
\label{eq:a1} \\
\left|\alpha_2-\dfrac{1}{2}\right|^2 
\le \left|\alpha_2^2-\dfrac{1}{4}\right| 
\le \gamma.
\label{eq:a2}
\eea
Moreover, by Eqs.(\ref{eq:alphabeta+-})-(\ref{eq:theta-}), we have
\beba
\left|\beta_1e^{i\theta_{1}}-\dfrac{1}{2}\right|^2
&=&\beta_1^2+\dfrac{1}{4}-\beta_1\cos(\theta_1)\\
&\le & \left(\dfrac{1}{4}+\gamma\right)+\dfrac{1}{4} 
-\sqrt{\dfrac{1}{4}-\gamma}\cdot\\
&&(1-4\gamma-4\delta) \\
&\le & 5\gamma+2\delta, 
\label{eq:b1}
\eeea
\beba
\left|\beta_2e^{i\theta_{2}}+\dfrac{1}{2}\right|^2
&=&\beta_2^2+\dfrac{1}{4}+\beta_2\cos(\theta_2)\\
&\le & \left(\dfrac{1}{4}+\gamma \right)+\dfrac{1}{4} 
-\sqrt{\dfrac{1}{4}-\gamma}\cdot\\
&&(1-4\gamma-4\delta) \\
&\le & 5\gamma+2\delta.
\label{eq:b2}
\eeea
Plugging Eqs.(\ref{eq:a1})-(\ref{eq:b2}) into Eq.(\ref{eq:mpvavb}) yields
\bea
\Delta^2(M,P(\va,\vb))\le 8\gamma+2\delta
\eea
as desired.

\section{Proof of Lemma \ref{lem:klocal}}
\label{apd:lemklocal}

Suppose $|M|=k$, i.e. $M=\{M_1,M_2,\dots,M_k\}$. 
We first prove that
\bea
\sum\limits_{i=1}^k f_T^{\dagger}(M_i)
f_T(M_i) \le I.
\label{eq:ftft1}
\eea
(When we write $A \le B$ for two matrices $A,B$, we mean that $B-A$ is positive semidefinite.) Note that by the definition of $f_T(M_i)$, we can write it as
\bea
f_T(M_i)=\tilde{f}_T(M_i) \otimes I
\eea
where $\tilde{f}_T(M_i)$ is some operator on $T$, and $I$ is the identity operator on $T^c$. Then Eq.(\ref{eq:ftft1}) is equivalent to 
\bea
\sum\limits_{i=1}^k \tilde{f}_T^{\dagger}(M_i)
\tilde{f}_T(M_i) \le I.
\label{eq:ftft2}
\eea
Let $\ket{\psi}$ be an arbitrary pure state on $T$, and $\rho$ be the uniformly mixed state on $T^c$. Then by plugging
\bea
M_i=\tilde{f}_T(M_i)\otimes I+g(M_i)
\eea
into
\beba
1 &=& \tr (\ketbra{\psi}{\psi}\otimes \rho) \\
&=&\tr[(\sum\limits_{i=1}^k M_i^{\dagger}M_i) (\ketbra{\psi}{\psi}\otimes \rho )],
\eeea
and noting that $\tr (g_T(M_i))=0$ for any $i$, we obtain
\beba
1&=&\bra{\psi}(\sum\limits_{i=1}^k \tilde{f}_T^{\dagger}(M_i)
\tilde{f}_T(M_i) )\ket{\psi} \\
&&+\tr[ (\sum\limits_{i=1}^k g_T^{\dagger}(M_i)g_T(M_i)) (\ketbra{\psi}{\psi}\otimes \rho )] \\
& \ge &
\bra{\psi}(\sum\limits_{i=1}^k \tilde{f}_T^{\dagger}(M_i)
\tilde{f}_T(M_i)) \ket{\psi}.
\eeea
Since $\ket{\psi}$ is arbitrary, we get
\bea
\sum\limits_{i=1}^k \tilde{f}_T^{\dagger}(M_i)
\tilde{f}_T(M_i) \le I
\eea
as desired.

Now let $N=\{N_1,N_2,\dots,N_{k+1}\}$, where
\bea
N_i&=f_T(M_i),~~~\forall i=1,2,\dots,k;\\
N_{k+1}&=\sqrt{I-\sum\limits_{i=1}^k f^{\dagger}_T(M_i) f_T(M_i)}.stem 
\eea
(For a positive semidefinite matrix $A$ with spectral decomposition $A=\sum_x \lambda_x \ketbra{x}{x}$, we define 
$\sqrt{A} \coloneqq \sum_x \sqrt{\lambda_x} \ketbra{x}{x}$.)
Then $N$ is a valid measurement. Moreover, since each $f_T(M_i)$ acts non-trivially only on subsystem $T$, so does each $N_i$. Hence $N$ is a $|T|$-local measurement.

Now, the distance between $M_i$ and $N_i$ satisfies
\beba
\Delta^2(M_i,N_i) &\le &
\dfrac{1}{2D} \|M_i-N_i\|_F^2\\
&=&\dfrac{1}{2D} \|g_T(M_i)\|_F^2 \\
&=&\dfrac{1}{2D} (\|M_i\|_F^2-\|f_T(M_i)\|_F^2),
\eeea
for $i=1,2,\dots,k$, and
\beba
\Delta(M_{k+1},N_{k+1}) 
&=& 
\dfrac{1}{2D} \|N_{k+1}\|_F^2\\
&=&\dfrac{1}{2}-
\dfrac{1}{2D} 
\sum\limits_{i=1}^k \|f_T(M_i)\|_F^2.
\eeea
So the distance between $M$ and $N$ satisfies
\beba
\Delta^2(M,N)
&=& \sum\limits_{i=1}^{k+1} \Delta^2(M_i,N_i) \\
&\le & 1-\dfrac{1}{D}\sum\limits_{i=1}^k \|f_T(M_i)\|_F^2 \\
&\le & \delta^2.
\eeea
Here we have used the fact that 
\bea
\sum\limits_{i=1}^k \|M_i\|_F^2
=\tr \left(\sum\limits_{i=1}^k M_i^{\dagger}M_i \right)
=\tr (I)=D.
\eea

\section{Proof of Lemma \ref{lem:perminv}}
\label{apd:lemperminv}
The proof of this lemma is similar to that of lemma \ref{lem:klocal}.

Suppose $|M|=k$, i.e. $M=\{M_1,M_2,\dots,M_k\}$. We first show that 
\bea
\sum\limits_{i=1}^k \hat{M}_i^{\dagger}\hat{M}_i \le I,
\eea
or equivalently,
\bea
\sum\limits_{i=1}^k (\hat{M_i})_{\vla}^{\dagger} (\hat{M_i})_{\vla} \le I,~~\forall \vla \in \I_{d,n}.
\eea
Fix any $\vla \in \I_{d,n}$. Let $\ket{\psi}$ be an arbitrary pure state in $\W_{\vla,d}$, and $\rho$ be the uniformly mixed state in $\V_{\vla}$. Then by plugging
\bea
M_i=(\hat{M}_i+\tilde{M}_i+\bar{M}_i)_{\sch}
\eea
where
\bea
\hat{M_i} &= &\bigoplus\limits_{\vla \in \I_{d,n}} (\hat{M_i})_{\vla} \otimes I, \\
\tilde{M_i} &= & 
\bigoplus\limits_{\vla \in \I_{d,n}}
(\sum\limits_{j \neq 0} (\tilde{M_i})_{\vla,j} \otimes g_{\vla,j}),\\
\bar{M_i} &= &\sum\limits_{\vla \neq \vla'}
\ketbra{\vla}{\vla'} \otimes (M_i)_{\vla,\vla'}
\eea
into
\beba
1 &=& \tr[(\ketbra{\vla}{\vla} \otimes \ketbra{\psi}{\psi} \otimes \rho)_{\sch}] \\
&=& \tr\left[\left(\sum\limits_{i=1}^k M_i^{\dagger}M_i \right) (\ketbra{\vla}{\vla} \otimes \ketbra{\psi}{\psi} \otimes \rho)_{\sch}\right],
\eeea
and noting that $\tr (g_{\vla,j})=0$ for $j \neq 0$, we obtain
\beba
1&=& 
\bra{\psi}
(\sum\limits_{i=1}^k (\hat{M_i})_{\vla}^{\dagger}(\hat{M_i})_{\vla})
\ket{\psi} \\
&&+
\bra{\psi}
(\sum\limits_{i=1}^k \sum\limits_{j \neq 0}
(\tilde{M_i})_{\vla,j}^{\dagger}(\tilde{M_i})_{\vla,j})
\ket{\psi}\\
&&+ 
\tr[(\sum\limits_{i=1}^k \bar{M_i}^{\dagger}\bar{M_i})
(\ketbra{\vla}{\vla}\otimes
\ketbra{\psi}{\psi} \otimes \rho)]\\
&\ge &
\bra{\psi}
\sum\limits_{i=1}^k (\hat{M_i})_{\vla}^{\dagger}(\hat{M_i})_{\vla}
\ket{\psi}.
\eeea
Since $\ket{\psi}$ is arbitrary, we get
\bea
\sum\limits_{i=1}^k (\hat{M_i})_{\vla}^{\dagger} (\hat{M_i})_{\vla} \le I,
\eea
as desired.

Now let $N=\{N_1,N_2,\dots,N_{k+1}\}$ where
\bea
N_i&=&(\hat{M}_i)_{\sch},~~~\forall i=1,2,\dots,k;\\
N_{k+1}&=&\sqrt{I-\sum\limits_{i=1}^k (\hat{M}_i^{\dagger}\hat{M}_i)_{\sch}}.
\eea
Then $N$ is a valid measurement. In addition, since each $\hat{M_i}$ is permutation-invariant, so is each $N_i$. Thus, $N$ is a permutation-invariant measurement.

Now the distance between $M_i$ and $N_i$ satisfies
\beba
\Delta^2(M_i,N_i) &\le &
\dfrac{1}{2D}\|M_i-N_i\|_F^2 \\
&=&
\dfrac{1}{2D}\|(\tilde{M}_i+\bar{M}_i)_{\sch}\|_F^2 \\
&=&
\dfrac{1}{2D}(\|\tilde{M}_i\|_F^2+\|\bar{M}_i)\|_F^2)\\
&=&
\dfrac{1}{2D}(\|M_i\|_F^2-\|\hat{M}_i\|_F^2),
\eeea
for $i=1,2,\dots,k$, and
\beba
\Delta^2(M_{k+1},N_{k+1}) &\le &
\dfrac{1}{2D}\|N_{k+1}\|_F^2 \\
&=&
\dfrac{1}{2}-\dfrac{1}{2D}\sum\limits_{i=1}^k \|\hat{M}_i\|_F^2 \\.
\eeea
So the distance between $M$ and $N$ satisfies
\beba
\Delta^2(M,N)&=&\sum\limits_{i=1}^{k+1} \Delta^2(M_i,N_i) \\ 
& \le & 
1-\dfrac{1}{D}\sum\limits_{i=1}^k \|\hat{M}_i\|_F^2 \\
&\le & \delta^2.
\eeea
Here we have used the fact that 
\bea
\sum\limits_{i=1}^k \|M_i\|_F^2
=\tr \left(\sum\limits_{i=1}^k M_i^{\dagger}M_i \right)
=\tr (I)=D.
\eea

\section{Proof of Lemma \ref{lem:chimchin}}
\label{apd:lemchimchin}
For convenience, let $\I=\{i: L_i \ge 0.1 \delta^2L/k\}$ and $\xi_i=|\bracket{\tilde{v}(M_i)}{\tilde{v}(N_i)}|$ for any $i$. Then by $\Delta(M,N)\ge \delta$ we get
\beba
\delta^2 &\le& \Delta^2(M,N) \\
&=&1-\sum\limits_{i=1}^k |\bracket{v(M_i)}{v(N_i)}| \\
&=&1-\sum\limits_{i=1}^k \sqrt{p(M_i)p(N_i)}\xi_i.
\eeea
It follows that
\beba
1-\delta^2
& \ge &
\sum\limits_{i=1}^k \sqrt{p(M_i)p(N_i)}\xi_i \\
& \ge &
\sum\limits_{i \in \I} \sqrt{p(M_i)p(N_i)}\xi_i \\
& \ge &
(1-0.1\delta^2)\sum\limits_{i \in \I} \dfrac{L_i\xi_i}{L},
\eeea
since $p(M_i),p(N_i) \ge (1-0.1\delta^2)L_i/L$ for any $i \in \I$.
Hence
\beba
 \sum\limits_{i \in \I} {L_i\xi_i}
& \le &
\dfrac{(1-\delta^2)L}{1-0.1\delta^2}  
 \le 
(1-0.9\delta^2)L.
\label{eq:lxi1}
\eeea
Meanwhile, for any $i \not\in \I$, we have
$L_i<0.1\delta^2L/k$, and there are at most $k$ such $i$'s,
so 
\beba
\sum\limits_{i \in \I}L_i \ge (1-0.1\delta^2)L.
\label{eq:lxi2}
\eeea
Furthermore, by the fact that the arithmetic mean of a set of non-negative real numbers is no less than their geometric mean, we get
\beba
\prod\limits_{i \in \I} \xi_i^{L_i}
&\le &
\left(\dfrac{\sum_{i \in \I}L_i\xi_i}{\sum_{i \in \I}L_i}\right)^{\sum_{i \in \I}L_i}.
\label{eq:lxi3}
\eeea
Now, by Eqs.(\ref{eq:lxi1})-(\ref{eq:lxi3}) and $0 \le \xi_i \le 1$, we obtain
\beba
|\bracket{\chi_{\vL}(M)}{\chi_{\vL}(N)}|
&=&
\prod\limits_{i=1}^k \xi_i^{L_i} \\
&\le &
\prod\limits_{i \in \I} \xi_i^{L_i}\\
& \le & \left( \dfrac{1-0.9\delta^2}{1-0.1\delta^2} \right)^{(1-0.1\delta^2)L} \\
&\le & (1-0.8\delta^2)^{0.9L} \\
&\le & (1-0.6\delta^2)^L.
\eeea

\section{Proof of Lemma \ref{lem:pi}}
\label{apd:lempi}
Without loss of generality we assume that $\ket{\phi_1}$, $\ket{\phi_2}$,$\dots$, $\ket{\phi_m}$ are linearly independent. Then the subspace spanned by them is $m$-dimensional. Let $\{\ket{\varphi_1},
\ket{\varphi_2},\dots, \ket{\varphi_m}\}$ be an orthonormal basis for this subspace, where
\beba
\ket{\varphi_i}&=&\sum\limits_{j=1}^m
\lambda_{i,j} \ket{\phi_j}.
\eeea
for some $\lambda_{i,j}$'s. Then
\beba
m &=& \sum\limits_{i=1}^{m}\bracket{\varphi_i}{\varphi_i}\\
&=& \sum\limits_{i=1}^m
\sum\limits_{j,j'=1}^m
\lambda_{i,j}^* \lambda_{i,j'} \bracket{\phi_j}{\phi_{j'}} \\
&=&
\sum\limits_{i=1}^m
\sum\limits_{j=1}^m
|\lambda_{i,j}|^2
+
\sum\limits_{i=1}^m
\sum\limits_{j \neq j'}
\lambda_{i,j}^* \lambda_{i,j'} \bracket{\phi_j}{\phi_{j'}}\\
&\ge &
\sum\limits_{i=1}^m
\sum\limits_{j=1}^m
|\lambda_{i,j}|^2-
\sum\limits_{i=1}^m
\sum\limits_{j \neq j'}
\dfrac{|\lambda_{i,j}|^2+|\lambda_{i,j'}|^2}{2}\cdot\dfrac{1}{5m} \\
&\ge &
\dfrac{1}{2}\sum\limits_{i=1}^m
\sum\limits_{j=1}^m
|\lambda_{i,j}|^2.
\eeea
Hence
\beba
\bra{\psi}\Pi\ket{\psi} &=& \sum\limits_{i=1}^m
|\bracket{\varphi_i}{\psi}|^2\\
&=&
\sum\limits_{i=1}^m
|\sum\limits_{j=1}^m \lambda_{i,j}^* \bracket{\phi_j}{\psi}|^2 \\
&=&
\sum\limits_{i=1}^m
\sum\limits_{j,j'=1}^m \lambda_{i,j}^*\lambda_{i,j'} \bracket{\phi_j}{\psi}
\bracket{\psi}{\phi_{j'}} \\
&\le &
\sum\limits_{i=1}^m
\sum\limits_{j,j'=1}^m
|\lambda_{i,j}||\lambda_{i,j'}|
\cdot \left(\dfrac{1}{5m} \right )^2
\\
&\le &
\dfrac{1}{25m^2}
\sum\limits_{i=1}^m
\sum\limits_{j,j'=1}^m
\dfrac{|\lambda_{i,j}|^2+|\lambda_{i,j'}|^2}{2} \\
&\le &
\dfrac{1}{25m}\sum\limits_{i=1}^m
\sum\limits_{j=1}^m
|\lambda_{i,j}|^2 \\
&\le & 0.1.
\eeea

\end{appendix}

\end{document}